\newcommand*{\Z}{\mathbb{Z}}
\newtheorem{theorem}{Theorem}
\newtheorem{lemma}[theorem]{Lemma}
\newtheorem{definition}[theorem]{Definition}
\newtheorem{corollary}[theorem]{Corollary}
\theoremstyle{definition}
\newtheorem{remark}{Remark}
\newtheorem{example}{Example}
\newcounter{claimcount}
\newenvironment{claim}{\refstepcounter{claimcount}\textbf{Claim \arabic{claimcount}:}}{}
\DeclareFontFamily{U}{mathx}{}
\DeclareFontShape{U}{mathx}{m}{n}{<-> mathx10}{}
\DeclareSymbolFont{mathx}{U}{mathx}{m}{n}
\DeclareMathAccent{\widehat}{0}{mathx}{"70}
\DeclareMathAccent{\widecheck}{0}{mathx}{"71}
\journal{Information and Computation}
\begin{document}

\begin{frontmatter}



\title{Eulerian orientations and Hadamard codes:\\ A novel connection via counting\tnoteref{tn1}}
\tnotetext[tn1]{An extended abstract of this paper appeared in the Proceedings of the 16th Innovations in Theoretical Computer Science (ITCS 2025)~\cite{shao_et_al_eo}, \url{https://doi.org/10.4230/LIPIcs.ITCS.2025.86}.
}


\author{Shuai Shao\fnref{fn1}\corref{cor1}} 
\ead{shao10@ustc.edu.cn}
\cortext[cor1]{Corresponding author}
\affiliation{organization={School of Computer Science and Technology \& Hefei National Laboratory, \\ University of Science and Technology of China},
            city={Hefei},
            country={China}}

\author{Zhuxiao Tang\fnref{fn2}}
\ead{zztang@wisc.edu}
\fntext[fn1]{Supported by the Quantum Science and Technology $-$ National Science and Technology Major Project (QNMP), 2021ZD0302901,  and the National Natural Science Foundation of China, No. 62572452.}
\fntext[fn2]{This work was done while the second author was an undergraduate student in the School of the Gifted Young, University of Science and Technology of China, under the support by QNMP 2021ZD0302901.}
\affiliation{organization={Department of Computer Sciences, University of Wisconsin-Madison},
            city={Madison},
            country={U.S.A.}}

\begin{abstract}
We discover a novel connection between two classical mathematical concepts---Eulerian orientations and Hadamard codes, by studying the 
 counting problem of Eulerian orientations (\#EO)  with local constraint functions imposed on vertices. 
    We present two special classes of constraint functions and a chain reaction algorithm, and show that the \#EO problem defined by  each class alone is polynomial-time solvable by the algorithm. 
    These tractable classes of functions are defined inductively, and quite remarkably, the base level of these classes is  characterized precisely by the well-known Hadamard code.
    Thus, we establish a novel connection between counting Eulerian orientations and coding theory. 
    We also prove a \#P-hardness result for the \#EO problem when constraint functions from the two tractable classes appear together. 
\end{abstract}



\begin{keyword}
Eulerian orientations \sep Hadamard codes \sep Counting problems \sep Tractable classes



\end{keyword}

\end{frontmatter}



\section{Introduction}
   The notion of Eulerian orientations arises from the historically notable \emph{Seven Bridges of Königsberg} problem, whose solution by Euler in 1736~\cite{euler1741solutio} is considered one of the first result of graph theory. 
  Given an undirected graph $G$,  an Eulerian orientation  of $G$ is an assignment of a direction to each edge of $G$ such that at each vertex $v$, the number of incoming edges is equal to the number of outgoing edges.
  A connected graph has an Eulerian orientation, called an Eulerian graph if and only if every vertex has even degree. 
  This is the well-known Euler's theorem, whose first complete proof  was given back in the 1800s~\cite{biggs1986graph}. 
  In terms of computational complexity,  Euler's theorem implies that the decision problem of determining whether a graph has an Eulerian orientation is polynomial-time solvable (tractable). 
  While for the counting problem,  Mihail and Winkler showed that  counting the number of Eulerian orientations of an undirected graph is \#P-complete in 1996~\cite{MW}---over a century after Euler's theorem.
  However, the counting problem becomes tractable when certain particular restrictions are imposed on edges.
  An intriguing example of such tractable problems comes from computing the partition function  of the six-vertex model~\cite{Pauling, Slater, rys1963uber}, one of the most intensively studied models in statistical physics.  
  In this example, the graphs are 4 regular and on each vertex exactly one edge incident to it is restricted to take the direction coming into the vertex. 
  Then  counting the number of Eulerian orientations obeying  restrictions on these edges is shown to be tractable~\cite{cfx}. 

  In this paper, we further study the counting restricted Eulerian orientations (\#EO) problem,  defined by constraint functions
placed at each vertex that represent 
restrictions on edges incident to the vertex.  
We consider which classes of constraint functions make the \#EO problem tractable. 
Before we formally define the problem and present our results, we first  take a detour to another classical notion from coding theory, the Hadamard code. 
  The Hadamard code is an error-correcting code used for error detection and correction when transmitting messages over very noisy or unreliable channels. 
  Because of its rich mathematical properties, 
  the Hadamard code is not only used in coding and communication, but also  studied extensively in various areas of mathematics and theoretical computer science. 
  However, it has not been known how the Hadamard code can be related to  Eulerian orientations, and particularly, how it can be used to carve out  tractable classes for the \#EO problem. 
  Quite surprisingly, in this paper, we establish a novel connection between these two well-studied mathematical notions. 
  We present new tractable classes  for the \#EO problem and a corresponding algorithm based on a chain reaction. 
  The core components of these classes, referred to as kernels, are characterized perfectly by the Hadamard code (or more precisely, the balanced Hadamard code).

  Now we formally define the \#EO problem.
   A 0-1 valued constraint function (or signature) $f$ of arity $n$ is a map $\mathbb{Z}^n_2 \to \{0, 1\}$. 
   The support of  $f$  is $\mathscr{S}(f) = \{\alpha \in \mathbb{Z}_2^n \mid f(\alpha)=1\}$.
   A signature $f$ of arity $2n$ is an \emph{Eulerian orientation} (EO) signature
 if for every $\alpha\in \mathscr{S}(f)$, the Hamming weight wt$(\alpha)=n$.
The problem \#EO($\mathcal{F}$) specified by a set $\mathcal{F}$ of EO signatures is defined as follows.
The input is a graph $G$ where 
each vertex $v$ of $G$ is associated with  some function $f_v$ from $\mathcal{F}$.
The 
incident edges  to  $v$  are totally ordered and correspond to input variables
to $f_v$.
Each edge has two ends, and an  orientation of the edge is  denoted by assigning 
0 to the head and 1 to the tail.  
Thus, locally at every vertex $v$,  a
$0$ represents an incoming edge and a $1$ represents an outgoing edge. 
An Eulerian orientation corresponds to an assignment to each edge ($01$ or $10$) where the numbers of 0s and 1s at each $v$ are equal. 
The local constraint function $f_v$ evaluates to 1 if the local assignment on edges incident to $v$ satisfies the restriction imposed by $f_v$; otherwise, $f_v = 0$.
Since each $f_v$ is an EO signature, it evaluates to 1 only when the numbers of input 0s and 1s are equal---hence only Eulerian orientations contribute nonzero values.
An Eulerian orientation contributes value $1$  if $f_v=1$ for every vertex $v$.
The \#EO problem  outputs the number of Eulerian orientations  contributing value $1$. 

 \begin{example}
Let $\mathcal{F}=\{g_2, g_4, \ldots g_{2n}, \ldots\},$ where $\mathscr{S}(g_{2n})=\{\alpha \in \mathbb{Z}_2^{2n} \mid
 {\rm wt}(\alpha)= n \}$.
Then {\rm  \#EO}$(\mathcal{\mathcal{F}})$  counts the total number of all Eulerian orientations of the input graph, which is \#P-hard.
\end{example}

\begin{example}\label{example-2}
    Let $f_2$ be a signature of arity $4$ with support $\mathscr S(f_2)=\{1100, 1010, 1001\}$.
    Then, {\rm  \#EO}$({\{f_2\}})$ computes the partition function of a six-vertex model which is tractable. 
The reason for naming this function $f_2$ will be explained in Section~\ref{sec: characterization}.
\end{example}

The \#EO problem has an intrinsic significance in 
counting problems. 
In~\cite{cai2020beyond}, the framework of  \#EO problems is formally introduced and is shown to be expressive enough to encompass all Boolean counting constraint satisfaction problems (\#CSP)~\cite{creignou1996complexity, dyer2009complexity, bulatov2009complexity, Cai-Lu-Xia-csp} with arbitrary constraint functions that are not necessarily supported on half weighted inputs, although the \#EO problem itself requires all constraint functions to be supported on  half weighted inputs.
Also, the \#EO problem encompasses many natural problems from statistical physics and combinatorics, such as the partition function of the six-vertex model  and the  evaluation of the Tutte polynomial
at the point $(3, 3)$~\cite{tutte}.
Cai, Fu and Xia proved a complexity classification for computing the partition function
of the six-vertex model on general 4-regular graphs \cite{cfx}.

Beyond these interesting concrete problems expressible as \#EO problems, 
 the study of the \#EO framework plays a crucial role in a broader picture, the complexity classification program for a much more expressive counting framework,  the Holant problem.
 Significant progress has been made in  the complexity classification of Holant problems~\cite{cai2011dichotomy, cai2016complete, Backens-holant-plus, Cai-Lu-Xia-holant-c, Backens-Holant-c, beida, cai2020holant}, culminating in a full complexity dichotomy for all real-valued Holant problems~\cite{real-holant}.
This dichotomy relies fundamentally on a  complexity dichotomy for the \#EO problem~\cite{cai2020beyond} with complex-valued signatures assuming a physical symmetry called  \emph{arrow reversal symmetry}  ({ARS}) since 
under a suitable holographic transformation, the {ARS} property corresponds
to precisely real-valued signatures. 
In this paper, we consider the 0-1 valued \#EO problem without assuming {ARS}, which may
serve as a building block toward the ultimate classification for all  complex-valued Holant problems. 
For this setting, we present new tractable classes that lie outside all previously known tractable classes for Holant problems.

 We use $\text{\sc arity}(f)$ to denote the arity of $f$, and $I(f)=\{1,2,...,\text{\sc arity}(f)\}$ to denote the indices of variables of $f$.
  We use $\delta_1$ and $\delta_0$ to denote the unary signatures where $\mathscr S (\delta_1)=\{1\}$ and $\mathscr S (\delta_0)=\{0\}$, respectively. 
  Pinning the $i$-th variable of $f$ to $0$ gives the signature
     $f(x_1,...,x_{i-1},0,x_{i+1},...,x_n)$ of arity $n-1$, denoted by $f_{i}^{0}$.
     Similarly, we define $f_{i}^{1}$.
      Let $f$ and $g$ be two signatures of support size $n$ and $m$ respectively. Suppose $\mathscr S(f)=\{\alpha_1,\alpha_2,...,\alpha_n\}$ and $\mathscr S (g)=\{\beta_1,\beta_2,...,\beta_m\}$. Then the tensor product $f\otimes g$ of $f$ and $g$ is the signature where $\mathscr S (f\otimes g)=\{\alpha_i\beta_j\mid 1\leq i\leq n, 1\leq j\leq m \}$.
        A nonzero signature $g$ is a factor of $f$, denoted by $g \mid f$, if there exists a signature $h$ such that  $f=g \otimes h$ (possibly with a permutation of variables)
or there is a constant $\lambda$ such that $f= \lambda \cdot g$.
Otherwise, $g$ is not factor of $f$, denoted by $g\nmid f$.  
In the following, we first state the definition of affine signatures, which are known to be tractable for the \#EO problem. Then we give the new tractable classes.

 \begin{definition}[Affine]\label{affine}
        A signature $f(x_1,x_2,...,x_n)$ of arity $n$ is affine if $\mathscr{S}(f)=\{x\in\mathbb{Z}_2^n | AX=0\}$, where $X=(x_1,x_2,...,x_n,1)$ and $A$ is a matrix over $\mathbb{Z}_2$. Equivalently, it satisfies the property that if $\alpha,\beta,\gamma\in \mathscr{S}(f)$, then $ \alpha\oplus\beta\oplus\gamma\in \mathscr{S}(f)$. 
    \end{definition}
    
 \begin{definition}[$\delta_1$-affine and $\delta_0$-affine]\label{def:delta1-affine}
            An EO signature $f$  is $\delta_1$-affine, denoted by $f\in \mathscr{D}_1$ if  $f = \delta_1 \otimes g$ for some $g$ where   for every $i\in I(g)$, $g_i^0\in \mathscr{A}\cup\mathscr{D}_1$.
       Symmetrically,  
       an EO signature $f$ is $\delta_0$-affine, denoted by $f\in \mathscr{D}_0$ if  $f = \delta_0 \otimes g$ for some $g$ where   for every $i\in I(g)$, $g_i^1\in \mathscr{A}\cup\mathscr{D}_0$.
    \end{definition}

 For example, the 4-ary signature $f$ in Example~\ref{example-2} is a $\delta_1$-affine signature.
Note that $\mathscr{D}_1$ (or $\mathscr{D}_0$) does not encompass $\mathscr{A}$, since an affine signature may not have a $\delta_1$ (or $\delta_0$) factor.
Thus, our tractable classes are $\mathscr{A} \cup \mathscr{D}_1$ and $\mathscr{A} \cup \mathscr{D}_0$, rather than $\mathscr{D}_1$ and $\mathscr{D}_0$.

       \begin{theorem}\label{thm:tractablity}
        {\rm \#EO($\mathscr{A} \cup \mathscr{D}_1$)} and {\rm \#EO($\mathscr{A} \cup \mathscr{D}_0$)} are tractable.
    \end{theorem}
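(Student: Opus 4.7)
The plan is to give a polynomial-time chain-reaction algorithm and argue its correctness by induction; the case of $\mathscr{A}\cup\mathscr{D}_0$ is symmetric to that of $\mathscr{A}\cup\mathscr{D}_1$ (swap $0\leftrightarrow 1$), so I focus on the latter. The base case is when every vertex carries an affine signature: the count equals the number of solutions of a linear system over $\mathbb{Z}_2$, computable in polynomial time by Gaussian elimination. Otherwise some vertex $v_0$ has $f_{v_0}\in\mathscr{D}_1\setminus\mathscr{A}$, and I can write $f_{v_0}=\delta_1\otimes g_{v_0}$; the $\delta_1$ factor forces the corresponding incident edge $e_1=(v_0,v_1)$ to be oriented $v_0\to v_1$ in every Eulerian orientation contributing a nonzero value, so I commit to this orientation.

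Fixing $e_1$ pins the corresponding variable of $f_{v_1}$ to $0$. If $f_{v_1}\in\mathscr{A}$ the pinned signature is still affine and the chain terminates at $v_1$. If $f_{v_1}\in\mathscr{D}_1$, writing $f_{v_1}=\delta_1\otimes g_{v_1}$, either the pin collides with $v_1$'s own $\delta_1$ factor (yielding the identically-zero signature and contributing $0$ along this branch), or it lands on some other variable $j$, so that the pinned signature factors as $\delta_1\otimes(g_{v_1})_{j'}^0$; in that case the recursive clause supplies $(g_{v_1})_{j'}^0\in\mathscr{A}\cup\mathscr{D}_1$, and the chain continues by extracting this new $\delta_1$ (fixing a second incident edge of $v_1$) and leaving at $v_1$ the EO residual $(g_{v_1})_{j'}^0$. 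Each step fixes a new edge, so one chain halts after polynomially many steps, after which I restart from any surviving $\mathscr{D}_1$ vertex.

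The delicate point is the residual left at the starting vertex $v_0$ of each chain: unlike a middle vertex, which is both pinned (by the incoming edge) and extracted (for the outgoing edge), producing an EO residual in $\mathscr{A}\cup\mathscr{D}_1$, the start vertex $v_0$ is only extracted, and so its residual $g_{v_0}$ is not even EO and not in $\mathscr{A}\cup\mathscr{D}_1$, only satisfying $(g_{v_0})_i^0\in\mathscr{A}\cup\mathscr{D}_1$ for each remaining variable $i$. My plan is to schedule chains so that every such $v_0$ is eventually pinned by a subsequent chain (which restores the EO property and places $v_0$'s residual back in $\mathscr{A}\cup\mathscr{D}_1$); once no $\mathscr{D}_1$ vertex survives at all, every signature is affine and the base case solves the remaining problem.

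The main obstacle I anticipate is precisely this scheduling: proving that a correctly ordered chain reaction always pins every outstanding start vertex, and giving a polynomial-size fall-back when it cannot, plausibly by exploiting the Hadamard-code structure at the kernels of the $\mathscr{D}_1$ recursion promised in the paper, which should sharply constrain the support of any stranded $g_{v_0}$ and permit a short case split on one of its variables via the decomposition $g_{v_0}(x)=(1-x_i)(g_{v_0})_i^0+x_i(g_{v_0})_i^1$, where the $(g_{v_0})_i^0$-branch is already in $\mathscr{A}\cup\mathscr{D}_1$ and the $x_i=1$ branch is analysed directly from the Hadamard structure. Making this argument precise, and checking that every intermediate reduction preserves the count, is the step I expect to require the most care.
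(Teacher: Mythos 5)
There is a genuine gap, and you have located it yourself: the residual $h$ left at the vertex $v_0$ whose $\delta_1$ factor initiates a step is not EO and need not lie in $\mathscr{A}\cup\mathscr{D}_1$ (only its $0$-pinnings do), and your plan to repair this by ``scheduling'' later chains to revisit $v_0$, with a Hadamard-code fallback, is an acknowledged obstacle rather than an argument. It is not clear that any schedule works: a later chain may never reach $v_0$, or may reach it by forcing a variable of $h$ to $1$, about which the definition of $\mathscr{D}_1$ says nothing. The gap is sharpest in your affine-neighbor case, where you declare the chain ``terminates at $v_1$'' with the pinned affine signature --- this strands $h$ at $v_0$ permanently. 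Also note that the Hadamard characterization in the paper applies only to \emph{kernels} (the first level of the inductive hierarchy), not to arbitrary signatures in $\mathscr{D}_1$, and the paper's tractability proof does not use it at all.

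The paper's resolution (Lemma~\ref{tractable_lemma}) removes the scheduling problem entirely by redirecting the newly generated $\delta_1$ \emph{backwards} instead of forwards. When the $\delta_1$ of $f=\delta_1\otimes h$ pins a variable $x_i$ of $g$ to $0$, a fresh factor $\delta_1(x_j)$ always appears in $g_i^0$: if $g\in\mathscr{A}$ this is Lemma~\ref{pairwise-opposite} (every affine EO signature is pairwise opposite, so pinning $x_i$ to $0$ forces its partner $x_j$ to $1$), and if $g\in\mathscr{D}_1$ it is $g$'s own surviving $\delta_1$ factor. The combined signature $h\otimes\delta_1(x_j)\otimes g_{ij}^{01}$ is then \emph{re-partitioned} so that $\delta_1(x_j)$ is grouped with $h$: the initiating vertex again carries $h\otimes\delta_1$, which is literally the same $\delta_1$-affine signature $f$ (the $\delta_1$ merely lives on a different edge, formerly incident to $g$'s vertex), while $g$ is replaced by $g_{ij}^{01}\in\mathscr{A}\cup\mathscr{D}_1$ of smaller arity. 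No vertex is ever left with a non-EO residual, there is no chain to order, and each step deletes two edges, so the recursion halts within $|E|/2$ steps at an affine instance. If you replace your forward-propagating chain by this regrouping step (and handle separately the self-loop case where the $\delta_1$ edge returns to $f$ itself, giving $h_i^0\in\mathscr{A}\cup\mathscr{D}_1$ directly), the rest of your argument goes through.
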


    The tractability result is established via a chain reaction algorithm, in which the presence of a $\delta_1$ or $\delta_0$ signature plays a role similar to a neutron in a nuclear chain reaction: it initializes the reaction and generates a new neutron (a $\delta_1$ or $\delta_0$ signature, respectively) through propagation, allowing the chain reaction to continue.
    Eventually, the chain reaction terminates when a stable state---a tractable instance of $\rm \#EO(\mathscr{A})$---is reached.
    
 Although the reaction algorithm works for $\delta_1$-affine signatures or $\delta_0$-affine signatures individually, it does not work when both of them appear. 
 This is because when both $\delta_1$-affine and $\delta_0$-affine signatures are present, connecting $\delta_1$ and $\delta_0$ using $\neq_2$ causes them to annihilate each other, and no new $\delta_1$ or $\delta_0$  are generated.
 Thus, the chain reaction is unsustainable.
 This is somewhat analogous to the phenomenon of electron–positron annihilation in nuclear physics.
 In fact, we can show that the \#EO problem is \#P-hard when both types of signatures are present.
     
\begin{theorem}\label{thm:hard}
      {\rm \#EO($\mathscr{D}_0 \cup \mathscr{D}_1$)} is \#P-hard.
\end{theorem}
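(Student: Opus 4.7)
The plan is to reduce from the $\#$P-hard problem $\#\text{EO}(\{g_4\})$ of counting Eulerian orientations on $4$-regular graphs (a corollary of~\cite{MW}; see Example~1 in the excerpt) to $\#\text{EO}(\mathscr{D}_0\cup\mathscr{D}_1)$ by an explicit gadget construction. The canonical witnesses I would use are $f_2\in\mathscr{D}_1$ from Example~\ref{example-2} (support $\{1100,1010,1001\}$) and its bitwise complement $f_2'\in\mathscr{D}_0$ (support $\{0011,0101,0110\}$); Theorem~\ref{thm:tractablity} already forces any hardness witness to use a signature from each of $\mathscr{D}_0$ and $\mathscr{D}_1$.

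First, I would build a two-vertex gadget by placing $f_2$ at a vertex $u$, $f_2'$ at a vertex $v$, and joining them by a single internal edge identifying one non-first variable of $u$ with one non-first variable of $v$, so that the $\delta_1$ factor of $f_2$ and the $\delta_0$ factor of $f_2'$ do not short-circuit through the $\neq$ edge. A direct case analysis shows that, after factoring the inevitable $\delta_1\otimes\delta_0$ pin off the first coordinates, the realized signature on the four remaining external variables agrees with $g_4$ on five of its six weight-$2$ inputs and vanishes on one specific vector $\alpha$; the mirror gadget (with the roles of $f_2$ and $f_2'$ exchanged) realizes the complementary signature $g_4-\delta_{\bar\alpha}$, where $\bar\alpha$ is the bitwise complement of $\alpha$.

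Second, I would combine these two ``almost $g_4$'' gadgets to realize $g_4$ itself, either via a larger finite gadget whose support analysis recovers all six weight-$2$ inputs, or, failing that, by polynomial interpolation over a one-parameter family indexed by the number of parallel $f_2/f_2'$ pairs in the gadget. Substituting the resulting $g_4$-gadget for each vertex in an instance of $\#\text{EO}(\{g_4\})$ then gives the required polynomial-time reduction to $\#\text{EO}(\mathscr{D}_0\cup\mathscr{D}_1)$.

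The main obstacle is this second step. The authors' own annihilation observation --- that a $\delta_1$ meeting a $\delta_0$ across a $\neq$ edge leaves behind only tractable product-type signatures --- means that many natural compositions collapse, so one must isolate configurations in which a non-trivial mixture of the $\mathscr{D}_1$ and $\mathscr{D}_0$ halves survives, and check either that such a composition realizes $g_4$ exactly or that the resulting interpolation matrix is non-singular. As a potential alternative that bypasses the combinatorial subtlety, one could note that $\{f_2,f_2'\}$ is closed under arrow reversal, apply a holographic transformation into the arrow-reversal-symmetric setting, and appeal to the dichotomy for EO problems with arrow-reversal symmetry established in~\cite{cai2020beyond} to conclude hardness directly.
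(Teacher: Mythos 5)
Your first step is essentially the paper's: connect $f_2$ and its complement $g_2=\overline{f_2}$ by disequality edges and observe that the resulting $4$-ary signature has support consisting of exactly five of the six weight-$2$ vectors (the paper's Lemma on $\#{\rm EO}(f_2,g_2)$ realizes precisely $h=\{0011,1001,1010,0101,0110\}$). The genuine gap is your second step. You do not need to upgrade this ``almost $g_4$'' gadget to $g_4$ at all: a $4$-ary EO signature whose support omits exactly one weight-$2$ vector is already \#P-hard by the classification of the six-vertex model on $4$-regular graphs \cite{cfx}, and the paper simply invokes that result and stops. Your attempt to recover $g_4$ exactly --- by a larger gadget or by interpolation over parallel $f_2/g_2$ pairs --- is left entirely unexecuted, and you yourself flag it as the main obstacle; as written it is not a proof. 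Your fallback is also unsound: the ARS dichotomy of \cite{cai2020beyond} requires each individual signature to satisfy ARS, whereas $f_2$ does not ($f_2(1100)=1$ but $f_2(0011)=0$); the set $\{f_2,\overline{f_2}\}$ being closed under arrow reversal is not the hypothesis of that theorem, so you cannot ``appeal to the dichotomy directly.''

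Two further remarks. For the theorem as literally stated, exhibiting one hard finite subset of $\mathscr{D}_0\cup\mathscr{D}_1$ suffices, so once the five-support gadget is recognized as hard you would be done; but note the paper actually proves the stronger statement that $\#{\rm EO}(f,g)$ is \#P-hard for \emph{every} $f\in\mathscr{D}_1\setminus\mathscr{A}$ and $g\in\mathscr{D}_0\setminus\mathscr{A}$, and that generalization genuinely needs the Hadamard/kernel characterization (to reduce an arbitrary such $f$ down to a multiple of $f_2$ via extractions, which in turn requires first realizing $\delta_1\otimes\delta_0$ from a non-ARS signature by interpolation). If you intend only the literal statement, replace your second step by the six-vertex-model hardness of the five-support signature; if you intend the general statement, your proposal is missing the entire reduction from arbitrary members of $\mathscr{D}_1\setminus\mathscr{A}$ and $\mathscr{D}_0\setminus\mathscr{A}$ down to $f_2$ and $g_2$.
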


     Since the classes $\mathscr D_1$ and $\mathscr D_0$ are defined inductively, 
     it is not an easy task to obtain explicit expressions for them in a form similar to other known tractable classes for counting problems such as affine signatures. 
     Note that the tractability of $\mathscr D_1$ or $\mathscr D_0$ is obtained by eventually reducing the problem into a problem where all signatures are affine signatures. 
 Thus, we focus on $\delta_1$-affine (or $\delta_0$-affine) signatures, from which by pinning an arbitrary variable to $0$ (or $1$ respectively), we always get an affine signature. 
In other words, these $\delta_1$-affine (or $\delta_0$-affine) signatures lie at the first level of the inductive hierarchy. 
We define them as kernels.

 \begin{definition}[$\delta_1$-affine and $\delta_0$-affine kernel]\label{delta1-affine-kernel}
        An {\rm EO} signature $f$ is a $\delta_1$-affine kernel, denoted by $f\in K(\mathscr{D}_1)$ if  it satisfies the following two properties: 
    \begin{enumerate}
        \item $f=\delta_1^{\otimes m}\otimes h$, for some $m \in \mathbb{Z_+}$ where $h$ is a signature of positive arity, $\delta_1 \nmid h$ and $h\notin \mathscr A$;
        \item 
        for every $i\in I(h)$, $h_i^0\in \mathscr{A}$.
    \end{enumerate}
    
       Symmetrically, we define $\delta_0$-affine kernels denoted by $K(\mathscr{D}_0)$ by switching $0$ and $1$ everywhere. 
       Notice that $h\notin \mathscr A$ implies $f\notin \mathscr A$.
        Thus, $\delta_1$-affine and $\delta_0$-affine kernels are not affine. 
    \end{definition}

     A main technical contribution of the paper is a complete characterization of $\delta_1$-affine and $\delta_0$-affine kernels using 
the Hadamard code. 
 The Hadamard code refers to the code based on Sylvester's construction of Hadamard matrices. 
     A Hadamard matrix $H_n$  is a square matrix of size $n$ such that all its entries are in $\{1, -1\}$ and $H_nH^T_n = nI_n$.
Sylvester's construction gives Hadamard matrices $H_{2^k}$ of size $2^k$ for all $k\geq 0$, where $H_{2^{k+1}}=\left[
\begin{smallmatrix}
    H_{2^k} & H_{2^k}\\
    H_{2^k} & -H_{2^k}\\
\end{smallmatrix}
\right]$ and $H_{2^0}=1$.
The Hadamard code $\mathscr{H}^1_{2^k}$, derived from the Hadamard matrix $H_{2^k}$, is a binary code over ${0, 1}$ whose $2^k$ codewords are the rows of $H_{2^k}$,
with entries $1$ mapped to $1$ and entries $-1$ mapped to $0$.
Symmetrically, the Hadamard code $\mathscr{H}^0_{2^k}$ is defined by mapping $1 \mapsto 0$ and $-1 \mapsto 1$.
Note that $\mathscr H^1_{2^k}$ contains an all-$1$ codeword and  $\mathscr H^0_{2^k}$ contains an all-$0$ codeword. 
We call the former the 1-Hadamard code and the latter the 0-Hadamard code. 
In standard coding theory notation for block codes,  $\mathscr H^1_{2^k}$ and $\mathscr H^0_{2^k}$ are $[2^k, k, 2^{k-1}]$-code, which is a binary linear code having block length $2^k$, message length $k$, and minimum Hamming distance $2^k/2$.
By removing the all-$1$ codeword from $\mathscr H^1_{2^k}$, we get a balanced code, i.e., each codeword has Hamming weight $2^{k-1}$, half of its block length. 
We call it the balanced 1-Hadamard code, denoted by ${\mathscr H^{\rm 1b}_{2^k}}$.
Symmetrically, we can get the balanced 0-Hadamard code, denoted by ${\mathscr H^{\rm 0b}_{2^k}}$.
A binary code $\mathscr C_n$ of block length $n$ can be viewed equivalently as a 0-1 valued constraint function $f$ of arity $n$ by taking the support $\mathscr{S}(f) = \{\alpha \in \mathbb{Z}_2^n \mid f(\alpha) \neq 0\}$  to be $\mathscr C_n$. 
An \emph{$m$-multiple} of the code $\mathscr C_n$ is a code of block length $nm$, in which each codeword is obtained taking the concatenation of $m$ copies of $\alpha\in \mathscr C_n$.

\begin{theorem}\label{thm:characterization}
   An {\rm EO} signature $f$ is a $\delta_1$-affine  kernel if and only if 
  $\mathscr S(f)$ is an $m$-multiple of a balanced 1-Hadamard code ${\mathscr H^{\rm 1b}_{2^k}}$ for some $k\geq 3$ and $m\geq 1$, or $\delta_0\nmid f$ and $|\mathscr S(f)|=3$.

  A symmetric statement holds for the $\delta_0$-affine kernel by switching $0$ and $1$ everywhere. 
\end{theorem}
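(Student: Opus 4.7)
The plan is to prove both directions separately; the substantive work lies in the forward ($\Rightarrow$) direction. Throughout, write $f = \delta_1^{\otimes m} \otimes h$ and let $S = \mathscr{S}(h)$, let $B$ denote the affine closure of $S$ (the smallest affine set containing $S$), put $d = \dim B$, and $T = B \setminus S$. Several structural facts follow at once from the kernel conditions combined with the EO property, and I use them throughout. First, $\delta_0 \nmid h$: otherwise $h = \delta_0 \otimes h'$ and the pinning condition applied at the $\delta_0$-coordinate yields $h' \in \mathscr{A}$, so $h \in \mathscr{A}$, contradicting the kernel condition. Second, $h \notin \mathscr{A}$, so $T \neq \emptyset$. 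Third, every codeword of $h$ has the common weight $w = (a(h) - m)/2$ forced by EO. Fourth, every coordinate of $B$ is balanced, $|\{\beta \in B : \beta_i = 0\}| = 2^{d-1}$, since an unbalanced column would immediately give $\delta_1 \mid h$ or $\delta_0 \mid h$. Since $S$ is non-affine, $d \geq 2$.

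For the easy direction ($\Leftarrow$), I verify both cases directly. If $|\mathscr{S}(f)| = 3$ and $\delta_0 \nmid f$, then after factoring out the appropriate $\delta_1$-factors the core still has support of size 3, which is not a power of two and therefore not affine; and $\delta_0 \nmid h$ forces $|\mathscr{S}(h_i^0)| \leq 2$ at each coordinate $i$, and any set of size at most 2 is affine. If $\mathscr{S}(f)$ is an $m$-multiple of $\mathscr{H}^{\rm 1b}_{2^k}$ with $k \geq 3$, note that $\mathscr{H}^1_{2^k}$ is affine (as a coset of the linear $\mathscr{H}^0_{2^k}$) and $\mathscr{H}^{\rm 1b}_{2^k}$ is obtained by removing the unique codeword $1^{2^k}$; pinning any variable of the core to $0$ automatically excludes $1^{a(h)}$, so the restriction is an intersection of two affine sets and hence affine. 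Non-affineness of the core follows because three codewords can XOR to $1^{a(h)}$. The $m$-multiple structure is preserved under factoring $\delta_1^{\otimes m}$ by a standard coordinate-collapse.

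For the hard direction ($\Rightarrow$), the central claim is $|T| = 1$. Affineness of each $\mathscr{S}(h_i^0)$ together with $|B \cap \{x_i = 0\}| = 2^{d-1}$ forces $|\mathscr{S}(h_i^0)|$ to be a power of two in $\{1, \ldots, 2^{d-1}\}$, so $|D_i| := 2^{d-1} - |\mathscr{S}(h_i^0)|$ takes values in $\{0\} \cup \{2^{d-1} - 2^e : 0 \leq e \leq d-2\}$, whose minimum positive value is $2^{d-2}$. Combined with $|D_i| \leq |T|$, this already gives $T \subseteq \{1^{a(h)}\}$ whenever $|T| < 2^{d-2}$, yielding $|T| = 1$. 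The main obstacle is ruling out the residual range $|T| \geq 2^{d-2}$: here the size constraint is too coarse, and I would exploit the full \emph{structural} affineness of every $\mathscr{S}(h_i^0)$ (closure under XOR of three, not merely cardinality) together with the constant-weight condition on $S$ and the column balance of $B$, concretely by parametrizing $T$ through its coordinate-incidence pattern with $S$, computing weights of XOR sums drawn from $T$ and $S$, and using that those sums must lie in $B$, to force $|T| \leq 1$. Once $|T| = 1$ is established, the classification follows: if $d = 2$ then $|S| = 3$, the second case of the theorem (with $\delta_0 \nmid f$ inherited from $\delta_0 \nmid h$); if $d \geq 3$, the same size argument with $|T| = 1$ in fact forces $T = \{1^{a(h)}\}$, so $B \oplus 1^{a(h)}$ is a binary equidistant linear code of dimension $d$. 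By Bonisoli's classical theorem, such a code is an $m$-fold repetition of the simplex code of dimension $d$, and translating back shows $\mathscr{S}(f)$ is the $m$-multiple of $\mathscr{H}^{\rm 1b}_{2^d}$ with $k = d \geq 3$, completing the characterization.
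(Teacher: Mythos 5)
Your proposal reduces the forward direction to the single claim $|T|=1$ (where $T=B\setminus\mathscr S(h)$ and $B$ is the affine closure of $\mathscr S(h)$), and your endgame from that claim is sound and genuinely different from the paper's: once $T=\{\vec 1\}$, the set $B\oplus\vec 1$ is a constant-weight linear code with no identically-zero coordinate (else $\delta_1\mid h$), so Bonisoli's theorem delivers the replicated simplex code and hence the $m$-multiple of $\mathscr H^{1b}_{2^k}$ directly, whereas the paper instead proves $|\mathscr S(f)|=2^{k+1}-1$ by column-counting (Lemma~\ref{size-theorem}), factors out the $m$-multiple structure (Lemma~\ref{multiple_kernel}), and then pins down the basic kernel by an induction on the ``butterfly'' (Lemmas~\ref{long_lemma} and~\ref{uniqueness}). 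If completed, your route would be shorter and would make the appearance of Hadamard/simplex codes less mysterious.

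However, the central claim is not proved. Your counting argument ($|D_i|\in\{0\}\cup\{2^{d-1}-2^e\}$, minimum positive value $2^{d-2}$, and $|D_i|\le|T|$) only handles the regime $|T|<2^{d-2}$, and you explicitly defer the complementary regime $|T|\ge 2^{d-2}$ to a plan (``parametrizing $T$ through its coordinate-incidence pattern\dots''). That regime is not a residual corner case: a priori $\mathscr S(h)$ could be, say, an affinely independent set of size $d+1$, making $|T|=2^d-d-1$ enormous, and nothing in your setup excludes this. Ruling it out is exactly where the paper spends its effort — the five claims of Lemma~\ref{size-theorem}, which use the EO weight balance, the pigeonhole comparison of $0$s and $1$s per column, and repeated application of Lemma~\ref{extract-1-kernel} to $f^{x_i=1}$ to force $|\mathscr S(f)|=2^k+2^{k-1}+\cdots+4+3$. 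So the proposal is an attack plan with a correct skeleton and an attractive finish, but the load-bearing step is missing. Two smaller points: in the backward direction for the size-$3$ case you silently assume $\delta_1\mid f$ when ``factoring out the appropriate $\delta_1$-factors'' (the kernel definition requires $m\ge 1$; an EO signature such as $\{1001,0110,0101\}$ has support size $3$, no $\delta_0$ factor, and no $\delta_1$ factor, so this hypothesis does need to be supplied — compare Remark~2); and in the Hadamard case you should also verify $\delta_1\nmid h$ and $\delta_0\nmid h$ for the core, not only that pinning yields affine signatures.
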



Due to the existence of newly discovered non-trivial tractable classes  corresponding to the   Hadamard code in the \#EO problem without 
assuming {ARS}, 
the complexity classification for the complex-valued Holant problem becomes significantly more challenging.
It is hard to predict how many more tractable classes with remarkable properties are yet to be discovered.
The results in this paper are a small step towards an ambitious goal. 
Very recently, independent of this work, Meng, Wang and Xia discover new tractable classes for the {\rm \#EO} problem \cite{meng2024ptimealgorithmstypicaleo}. Subsequently, they prove an $\mathrm{FP^{NP}/\#P}$ dichotomy for $\rm{\# EO}$ problems \cite{meng2025fpnp}.
In addition, we note that the presented result is not the first time that a tractable class for counting problems has been found to be related to coding theory. 
Even in the classification of real-valued Holant problems, a tractable family called local affine~\cite{Cai-Lu-Xia-holant-c} was already discovered using the well-known Hamming code. 
Are there other interesting tractable classes for counting problems that can be carved out with the help of coding theory?
In the other direction, it is also worth exploring whether these new tractability results for counting problems can be applied back to coding theory.
We leave the questions in both directions for further study. 



The paper is organized as follows. In Section~\ref{sec:pre}, we introduce the necessary definitions.
In Section~\ref{sec:algorithm}, we present a chain reaction algorithm which establishes the tractability result (Theorem~\ref{thm:tractablity}) for  $\delta_1$-affine and $\delta_0$-affine signatures.
In Section~\ref{sec: characterization}, we prove the characterization theorem (Theorem~\ref{thm:characterization}) for $\delta_1$-affine and $\delta_0$-affine kernels. 
In Section~\ref{sec:hardness}, we prove the hardness result (Theorem~\ref{thm:hard}). In fact, the proof of the hardness result critically relies on the characterization of $\delta_1$-affine and $\delta_0$-affine kernels.

\section{Preliminaries}\label{sec:pre}
    \subsection{Definitions and notations}
    
A 0-1 valued signature is determined by its support $\mathscr{S}(f)$. 
In this paper, we use $f$ and $\mathscr{S}(f)$ interchangeably.
We can view $\mathscr{S}(f)$ as a binary code of length $\text{\sc arity}(f)$ where each $\alpha\in \mathscr{S}(f)$ is a codeword and the size of this code is $|\mathscr{S}(f)|$.
We say $\mathscr{S}(f)$ (or equivalently $f$) has a constant Hamming weight $\ell$ if $\rm wt(\alpha)=\ell$ for every $\alpha\in\mathscr{S}(f)$.
For simplicity, we say $f$ is $\ell$-weighted or constant-weighted in general.
The following fact can be easily checked: any factor of a constant-weighted signature is still constant-weighted.
If a vertex $v$ in a graph is labeled by a signature $f=g\otimes h$, we can replace the vertex $v$ 
by two vertices $v_1$ and $v_2$ and label $v_1$ with the factor $g$ and $v_2$ with $h$, respectively.
The incident edges of $v$ become incident edges of $v_1$ and $v_2$ respectively,
according to the partition of variables of $f$ in the tensor product of $g$ and $h$. This does not change the value of the instance.
    
      We represent $f$ or $\mathscr{S}(f)$ by a matrix where each row corresponds to a vector in $\mathscr{S}(f)$.
    All matrices that are equal up to permutations of rows and columns represent the same signature.
    The columns indexing variables are ordered from smallest to largest index, and are omitted for convenience when the context is clear.
    For example, the signature $f_2(x_1,x_2,x_3,x_4)=\{1100, 1010, 1001\}$ can be written as
    $f_2=\left[\begin{smallmatrix}
         1 &1 & 0& 0\\
         1 &0 & 1& 0\\
         1 &0 & 0& 1
    \end{smallmatrix}\right]$.

    We use $\#_i^0(f)$ and $\#_i^1(f)$ to denote the number of $0$s and $1$s, respectively, in the $i$-th column of $\mathscr{S}(f)$, for every $i\in I(f)$. 
     The complement of a signature $f$, denoted by $\overline{f}$, is the signature represented by the matrix obtained by flipping $0$ and $1$ at each position of the matrix representing $f$, i.e., $\mathscr{S}(\overline{f})=\{\alpha\in\mathbb{Z}_2^n \mid \overline{\alpha}\in \mathscr{S}(f)\}$.
     Let $\widehat{f}(x_1,x_2,\ldots,x_n)=x_1x_2\cdots x_n\oplus f(x_1,x_2,\ldots,x_n)$.
     Then, $\mathscr{S}(\widehat{f})=\{\vec{1}^n\}\Delta \mathscr{S}(f)$.
     Similarly, let $\widecheck{f}(x_1,x_2,\ldots,x_n)=(1\oplus x_1)(1\oplus x_2)\cdots(1\oplus x_n)\oplus f(x_1,x_2,\ldots,x_n)$, then $\mathscr{S}(\widecheck{f})=\{\vec{0}^n\}\Delta \mathscr{S}(f)$.

       We use $\neq_2$ to denote the binary signature $\{01,10\}$. Pinning the $i$-th variable of $f$ to $0$ gives the signature
     $f(x_1,\ldots,x_{i-1},0,x_{i+1},\ldots,x_n)$ of arity $n-1$, denoted by $f_{i}^{0}$. Extracting the $i$-th variable of $f$ to $0$ gives the signature $(x_i\oplus 1)f(x_1,x_2,\ldots,x_n)$ of arity $n$, denoted by 
     $f^{x_i=0}$. 
     Clearly $f^{x_i=0}=\delta_0\otimes f_i^0$. Similarly we define $f_{i}^{1}$ and $f^{x_i=1}$. For example,
    $$
    (f_2)_2^0=
    \begin{bNiceArray}{ccc}[first-row]
         x_1 & x_3 & x_4\\
         1 & 1& 0\\
         1 & 0& 1
    \end{bNiceArray}
    , 
    \quad \text{and} \quad
    (f_2)^{x_2=0}=
    \begin{bNiceArray}{cccc}[first-row]
        x_1&x_2&x_3&x_4\\
         1 & 0 &1& 0\\
         1 & 0 &0& 1
    \end{bNiceArray}.
    $$
    
       We use $f_{ij}^{ab}$ to denote the signature $(f_i^a)_j^b=(f_j^b)_i^a$, where $i,j\in I(f)$ and $a,b\in \mathbb{Z}_2$. The signature $f^{x_i\neq x_j}$ is defined by $f_{ij}^{01}+f_{ij}^{10}$, where we consider the signatures as real-valued and "+" denotes real addition.
    For a vector $\alpha\in\mathbb{Z}_2^n$,
    we construct a vector $\vec{\alpha}^m\in \mathbb{Z}_2^{nm}$, called an $m$-multiple of $\alpha$, by taking the concatenation of $m$ copies of $\alpha$. 
    For a signature $f$, we define its \emph{$m$-multiple}, denoted by $f_{\times m}$, to be the signature $\mathscr{S}(f_{\times m})=\{\vec{\alpha}^m\mid \alpha \in \mathscr{S}(f)\}$. 
 For example, the $2$-multiple of $f_2$ is
    $$
    (f_2)_{\times 2}=
    \left[
    \begin{array}{cccccccc}
         1 &1 & 0& 0 &1 &1 & 0& 0\\
         1 &0 & 1& 0 &1 &0 & 1& 0\\
         1 &0 & 0& 1 &1 &0 & 0& 1
    \end{array}
    \right].
    $$

    \subsection{Affine signatures}
        In this section, we give some basic properties of affine signatures. From the definition, one can verify that affine signatures are closed under pinning, extracting, tensor product, and factoring. Moreover, $f\in \mathscr{A}$ if and only if $f_{\times m} \in \mathscr{A}$, for any $m\in \mathbb{Z}_+$.

\begin{lemma}\label{constant-weighted}
          If a signature $f\in \mathscr{A}$ is constant-weighted and $\delta_0,\delta_1\nmid f$, then $\#_i^0(f)=\#_i^1(f)$ for any $i\in I(f)$. Moreover, $f$ is an \rm{EO} signature.
     \end{lemma}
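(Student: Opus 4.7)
The plan is to exploit the fact that an affine signature $f$ is (the indicator of) an affine subspace of $\mathbb{Z}_2^{a(f)}$. Concretely, fixing some $\alpha_0 \in \mathscr{S}(f)$ and letting $V = \{\alpha \oplus \alpha_0 \mid \alpha \in \mathscr{S}(f)\}$, the equivalent characterization given in Definition~\ref{affine} (closure under $\alpha\oplus\beta\oplus\gamma$) shows $V$ is a $\mathbb{Z}_2$-linear subspace and $\mathscr{S}(f) = \alpha_0 \oplus V$. Everything reduces to studying the coordinate projections of the linear space $V$.

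For the first claim, I would fix $i \in A(f)$ and consider the $\mathbb{Z}_2$-linear map $\pi_i \colon V \to \mathbb{Z}_2$, $v \mapsto v_i$. Its image is either all of $\mathbb{Z}_2$ or $\{0\}$. In the first case $\ker \pi_i$ is a hyperplane of index $2$ in $V$, so exactly half of $V$ (and hence of $\mathscr{S}(f) = \alpha_0 \oplus V$) has $i$-th coordinate $0$ and half has $1$, giving $\#_i^0(f) = \#_i^1(f)$. In the second case, the $i$-th coordinate is constant across $\mathscr{S}(f)$, equal to $(\alpha_0)_i$; after permuting variables so that this coordinate comes last, $f$ factors as $\delta_0 \otimes g$ or $\delta_1 \otimes g$ according as $(\alpha_0)_i = 0$ or $1$, contradicting the hypothesis $\delta_0, \delta_1 \nmid f$. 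Hence the first case occurs for every $i$.

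For the moreover part, I would use a standard double count: since $f$ is $\ell$-weighted for some $\ell$, the first claim gives
\[
\ell \cdot |\mathscr{S}(f)| \;=\; \sum_{\alpha \in \mathscr{S}(f)} \mathrm{wt}(\alpha) \;=\; \sum_{i=1}^{a(f)} \#_i^1(f) \;=\; \frac{a(f)}{2} \cdot |\mathscr{S}(f)|.
\]
Cancelling the nonzero $|\mathscr{S}(f)|$ yields $\ell = a(f)/2$, which is exactly the EO condition. The only minor subtlety is verifying that a constant $i$-th coordinate genuinely produces a $\delta_0$ or $\delta_1$ factor in the paper's sense (tensor product up to permutation of variables), but this is immediate from unpacking the definition of $\otimes$. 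No other real obstacle arises; the entire argument is a short exercise in linear algebra over $\mathbb{Z}_2$.
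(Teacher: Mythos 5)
Your proof is correct and follows essentially the same route as the paper's: establish that every column of $\mathscr{S}(f)$ is balanced using the affine structure (the paper does this via a free-variable parametrization and a coordinate-flipping bijection, you via the cleaner observation that the coordinate projection of the underlying linear space $V$ is either surjective or trivial), and then conclude with the same row-versus-column double count. The only shared implicit assumption is that $\mathscr{S}(f)$ is nonempty, which is harmless here.
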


     \begin{proof}
         Suppose $f=f(x_1,x_2,\ldots,x_n)$ and $x_1,x_2,\ldots,x_k$ are free variables. 
         We first claim $1\leq k<n$. Otherwise, if $k=n$, then $f$ is not constant-weighted since $\vec{0}^n\in \mathscr{S}(f)$ and $\vec{1}^n\in \mathscr{S}(f)$. It is clear that for every free variable $x_i$ we have $\#_i^0(f)=\#_i^1(f)$. For every $k<i\leq n$, write $x_i=\lambda_0+\lambda_1 x_1 + \lambda_2 x_2 +\cdots+\lambda_k x_k$, where $\lambda_j\in\mathbb{Z}_2$ for $0\leq j \leq k$. We have $\lambda_1,\lambda_2,\ldots,\lambda_k$ are not all zeros since $\delta_0 \nmid f$ and $\delta_1\nmid f$. We may assume $\lambda_1\neq 0$. Flipping the value of $x_1$ in the representation of $x_i$, the value of $x_i$ also flips. Thus, $\#_1^0(f)=\#_1^1(f)$ implies $\#_i^0(f)=\#_i^1(f)$, since there is a one-to-one correspondence between the positions where $x_i=0$ and $x_i=1$. Therefore, if we count by columns, the number of $0$s and $1$s are equal in $\mathscr{S}(f)$. Because $f$ is constant-weighted, we can count by rows and deduce that $f$ is an EO signature.
     \end{proof}

     \begin{definition}[Pairwise opposite]
        Let $f$ be a signature of arity $2n$. We say $f$ is pairwise opposite if we can partition the $2n$ variables into $n$ pairs such that on $\mathscr{S}(f)$, two variables of each pair always take opposite values.
    \end{definition}
    
    \begin{lemma}[Lemma 5.7 in \cite{cai2020beyond}]\label{pairwise-opposite}
        If $f$ is an affine \rm{EO} signature, then $f$ is pairwise opposite.
    \end{lemma}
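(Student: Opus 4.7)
The plan is to prove Lemma~\ref{pairwise-opposite} by induction on the arity $2n$. The base case $2n=2$ is immediate: an affine EO signature of arity $2$ has $\mathscr{S}(f)\subseteq\{01,10\}$, so the pair $(1,2)$ works. For the inductive step it suffices to exhibit one pair $(i,j)$ with $x_i\oplus x_j=1$ on $\mathscr{S}(f)$, since pinning $(x_i,x_j)=(0,1)$ yields an affine signature on $2n-2$ variables of constant weight $n-1$, which is again an affine EO signature, and the inductive hypothesis supplies a pairing on the remaining variables.

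When $\delta_1\mid f$, I would strip off all $\delta_1$ and then all $\delta_0$ factors, writing $f=\delta_1^{\otimes k}\otimes\delta_0^{\otimes\ell}\otimes h'$ with $\delta_0,\delta_1\nmid h'$. Then $h'$ is affine and constant-weighted of weight $n-k$, so Lemma~\ref{constant-weighted} forces $h'$ to be an EO signature; its arity $2n-k-\ell$ therefore equals twice its weight, giving $\ell=k\ge 1$. Hence $\delta_0\mid f$ as well, and any $\delta_1$-variable paired with a $\delta_0$-variable is an opposite pair. The case $\delta_0\mid f$ is symmetric.

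The remaining case is $\delta_0,\delta_1\nmid f$. Writing $\mathscr{S}(f)=\alpha+V$ for a linear subspace $V\subseteq\mathbb{Z}_2^{2n}$, I would define $i\equiv j$ iff $v_i=v_j$ for all $v\in V$, yielding classes $C_1,\ldots,C_k$, and let $\phi_l\colon V\to\mathbb{Z}_2$ be the linear functional $v\mapsto v_i$ for any $i\in C_l$. The $\phi_l$'s are pairwise distinct by construction, and each is nonzero since no variable is constant. Assume for contradiction that no class $C_l$ contains two indices with distinct $\alpha$-values, so $\alpha_i$ is a constant $\beta_l\in\{0,1\}$ for $i\in C_l$. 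The EO constraint reads $\sum_l|C_l|(\beta_l\oplus\phi_l(v))=n$ for every $v\in V$, and subtracting the $v=0$ instance turns it into the integer identity $\sum_l|C_l|(1-2\beta_l)\phi_l(v)=0$ for every $v\in V$.

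The main obstacle is to show that the $k$ functions $\phi_l\colon V\to\{0,1\}\subset\mathbb{R}$ are linearly independent over $\mathbb{R}$. I would handle this by Fourier analysis on the abelian group $V$: writing $\phi_l=(1-\chi_l)/2$ with $\chi_l(v)=(-1)^{\phi_l(v)}$, the $\chi_l$'s are pairwise distinct nontrivial characters of $V$, hence linearly independent together with the trivial character, so any real linear relation among the $\phi_l$'s forces all coefficients to vanish. Applying this to the identity above yields $|C_l|(1-2\beta_l)=0$ for every $l$, impossible since $|C_l|\ge 1$ and $\beta_l\in\{0,1\}$. The contradiction produces a class containing two indices with distinct $\alpha$-values, giving the desired opposite pair and closing the induction.
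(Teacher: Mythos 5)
Your overall strategy (produce one opposite pair, then recurse) matches the paper's, and your case analysis for producing the pair is sound; in particular the character-independence argument in the case $\delta_0,\delta_1\nmid f$ is a correct and genuinely different replacement for the paper's argument, which instead extracts $g^{x_1=0}$, factors out $\delta_0,\delta_1$, and uses Lemma~\ref{constant-weighted} to locate a variable that is constant $1$ there. However, your recursion step has a genuine gap. You apply the inductive hypothesis to the pinned signature $f_{ij}^{01}$, whose support is only the half of (the projection of) $\mathscr{S}(f)$ on which $x_i=0$; the pairing the hypothesis returns is guaranteed to be opposite only on that half, and need not be opposite on the $x_i=1$ half, hence need not extend to a pairing of $f$. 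Concretely, take $\mathscr{S}(f)=\{011010,\,010101,\,100011,\,101100\}=011010+\mathrm{span}\{001111,\,111001\}$, an affine EO signature of arity $6$ with $\delta_0,\delta_1\nmid f$. Your argument finds the opposite pair $(x_1,x_2)$; pinning $(x_1,x_2)=(0,1)$ leaves the support $\{1010,\,0101\}$ on $(x_3,x_4,x_5,x_6)$, for which $\{(x_3,x_4),(x_5,x_6)\}$ is a legitimate output of the inductive hypothesis --- but on the row $100011$ one has $x_3=x_4=0$, so this pairing is not opposite for $f$. (A smaller symptom of the same issue: when $\delta_1(x_i)\mid f$ and $\delta_0(x_j)\mid f$, pinning $(x_i,x_j)=(0,1)$ yields the zero signature, for which the inductive hypothesis gives no usable information.)

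The repair is exactly the move the paper makes: recurse on $f^{x_i\neq x_j}$ rather than on $f_{ij}^{01}$. Since $(x_i,x_j)$ is an opposite pair, $\mathscr{S}(f^{x_i\neq x_j})=\mathscr{S}(f_{ij}^{01})\cup\mathscr{S}(f_{ij}^{10})$ is the projection of the \emph{entire} coset $\mathscr{S}(f)$ onto the remaining coordinates, which is again an affine coset of constant weight $n-1$; the inductive hypothesis then yields a pairing that is opposite on all of $\mathscr{S}(f)$, and adjoining $(x_i,x_j)$ closes the induction. With that one change your proof goes through.
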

    
    \begin{proof}
        Here we give a simplified inductive proof. If $n=1$, this lemma is trivially true since $f=\delta_1\otimes\delta_0$. Now assume the lemma holds for affine EO signatures of arity less than $2n$ ($n>1$). Suppose $f=\delta_1^{\otimes m_1}\otimes\delta_0^{\otimes m_0}\otimes g$, where $m_1,m_0\geq 0$ and $\delta_1,\delta_0\nmid g$. Then $g$ is a constant-weighted affine signature. By Lemma~\ref{constant-weighted}, we know $g$ is an EO signature and $\#_i^0(g)=\#_i^1(g)$ for every $i\in I(g)$. Then $m_0=m_1$ since $f$ is EO. If $m_0=m_1>0$, then $g$ is pairwise opposite by induction. It follows that $f$ is also pairwise opposite. If $m_1=m_0=0$, suppose $|\mathscr{S}(g)|=2t$, then $\#_i^1(g)=\#_i^0(g)=t$ for every $i\in I(g)$. Consider $g^{x_1=0}\in \mathscr{A}$, it is a constant-weighted signature of support size $t$. Write $g^{x_1=0}=\delta_1^{\otimes m_1'}\otimes\delta_0^{\otimes m_0'}\otimes g'$, where $m_1',m_0'\geq 0$ and $\delta_1,\delta_0\nmid g'$. Again by Lemma~\ref{constant-weighted} we have $m_1'=m_0'>0$. We may assume $x_2$ is a variable taking constant 1 in $\mathscr{S}(g^{x_1=0})$. Then $(x_1,x_2)$ is an opposite pair in $\mathscr{S}(g)$. Consider $g^{x_1\neq x_2}\in \mathscr{A}$, it is an EO signature of arity $2n-2$. Thus, by induction $g^{x_1\neq x_2}$ is pairwise opposite. It follows that $f=g$ is pairwise opposite.
    \end{proof}
    
    \section{A chain reaction algorithm}\label{sec:algorithm}

In this section, we introduce new tractable classes of signatures: $\delta_1$-affine and $\delta_0$-affine signatures, denoted by $\mathscr{D}_1$ and $\mathscr{D}_0$ respectively.
We show that the $\#$EO problem specified by a set of signatures in $\mathscr{D}_1\cup \mathscr A$ or $\mathscr{D}_0\cup \mathscr A$ is tractable.
The tractability is established via a chain reaction algorithm.
Given an instance $\Omega$ of $\#$EO$(\mathscr{D}_1\cup \mathscr A)$, we may assume that there is at least one vertex labeled by a $\delta_1$-affine signature $f$. Otherwise, all signatures in the instance are affine, which is called an affine instance and can be solved by the Gaussian elimination algorithm.

Consider the factor $\delta_1$ of $f$. 
It is connected to another variable of some signature (which may be $f$ itself) using $\neq_2$. 
Then, the connected variable is pinned to $0$.
A key property that ensures our algorithm works is that after connecting $\delta_1$ to another variable---whether it is a variable of $f$ itself, an affine signature,
or another $\delta_1$-affine signature---either the resulting signatures are all affine signatures,
or we can realize another $\delta_1$ signature.
In the latter case, we call it a propagation step.
After a propagation step, the newly realized $\delta_1$ can be used to pin another variable to $0$.
Thus, we get a chain reaction that terminates when an affine instance is reached.
Note that after each propagation step, the number of edges (total variables) in the instance is reduced by $2$.
Therefore, the chain reaction terminates after at most $|E|/2$ steps, where $E$ is the edge set of the underlying graph in $\Omega$.
Intuitively, the factor $\delta_1$ of a $\delta_1$-affine signature
plays the role of a neutron in a nuclear chain reaction:
it initializes the reaction and generates a new $\delta_1$ after propagation, which allows the chain reaction to continue.
Eventually, the chain reaction terminates when a stable state (an affine instance) is reached.



       \begin{lemma}\label{tractable_lemma}
        Suppose that $f=\delta_1\otimes h$ is $\delta_1$-affine and $g\in \mathscr{D}_1\cup\mathscr A$. 
        For any variable $x_i$ of $g$, by connecting it with $\delta_1$ of $f$ using $\neq_2$,
        we can obtain either a zero signature or a signature $h \otimes \delta_1 \otimes g_{ij}^{01}$, where $g_{ij}^{01}\in \mathscr{D}_1\cup\mathscr A$ for some $j\in I(g)\backslash\{i\}$.
    \end{lemma}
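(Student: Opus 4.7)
The plan is to first translate the gadget into an algebraic operation. Connecting the $\delta_1$ of $f$ to variable $x_i$ of $g$ by a $\neq_2$ forces the $\neq_2$ edge to be $(1,0)$, which pins $x_i$ of $g$ to $0$. So the signature realized on the remaining free variables is $h \otimes g_i^0$. The entire task therefore reduces to showing that $g_i^0$ is either the zero signature or has the form $\delta_1 \otimes g_{ij}^{01}$ for some $j \in A(g)\setminus\{i\}$ with $g_{ij}^{01}\in \mathscr A\cup \mathscr D_1$. I would do a case split on whether $g\in\mathscr A$ or $g\in \mathscr D_1$.

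In the affine case $g\in \mathscr A$, the key ingredient is that $g$ is an affine EO signature, so by Lemma~\ref{pairwise-opposite} its $2n$ variables partition into $n$ opposite pairs. Let $x_j$ denote the partner of $x_i$ in this pairing, so that $x_i\oplus x_j\equiv 1$ on $\mathscr S(g)$. Either every $\alpha\in \mathscr S(g)$ has $\alpha(x_i)=1$ (equivalently $\alpha(x_j)=0$), in which case pinning $x_i$ to $0$ gives the zero signature, or there is some $\alpha\in\mathscr S(g)$ with $\alpha(x_i)=0$, and then on the surviving assignments $x_j$ is constantly $1$, so $g_i^0 = \delta_1\otimes g_{ij}^{01}$. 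Since $\mathscr A$ is closed under pinning, $g_{ij}^{01}\in \mathscr A$, as required.

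In the $\delta_1$-affine case $g\in \mathscr D_1$, I unfold the definition: $g = \delta_1\otimes g'$ for some $g'$ such that for every $k\in A(g')$ one has $(g')_k^0 \in \mathscr A\cup \mathscr D_1$. Let $x_j$ denote the variable carrying this explicit $\delta_1$ factor. If $i=j$ then pinning $x_i$ to $0$ kills the $\delta_1$ and yields the zero signature. Otherwise $x_i$ is a variable of $g'$, and the $\delta_1$ factor is untouched by the pinning of $x_i$, so
\[
g_i^0 \;=\; \delta_1 \otimes (g')_i^0.
\]
Identifying $g_{ij}^{01}$ with $(g')_i^0$ (pinning $x_j=1$ merely strips off the $\delta_1$ factor), the defining inductive property of $\mathscr D_1$ gives $g_{ij}^{01}=(g')_i^0\in \mathscr A\cup \mathscr D_1$, which is exactly what the lemma demands.

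There is no serious obstacle here: the content of the lemma is really just that the two tractable classes are each closed under the pinning-by-$\delta_1$ operation, which in the $\mathscr A$ branch follows from the pairwise-opposite structure and in the $\mathscr D_1$ branch is essentially the inductive clause of Definition~\ref{def:delta1-affine}. The only thing to be careful about is the bookkeeping that identifies the partner/$\delta_1$-variable $x_j$ correctly in each case and verifies that pinning $x_j$ to $1$ in $g$ and pinning $x_i$ to $0$ in $g'$ produce the same signature $g_{ij}^{01}$.
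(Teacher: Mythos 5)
Your proof is correct and follows essentially the same route as the paper's: the affine branch via the pairwise-opposite structure (Lemma~\ref{pairwise-opposite}) to extract the partner variable $x_j$ as a $\delta_1$ factor, and the $\mathscr D_1$ branch directly from the inductive clause of Definition~\ref{def:delta1-affine}. Your explicit handling of the zero-signature subcases is slightly more careful than the paper's, but the argument is the same.
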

    \begin{proof}
        If $g\in \mathscr{A}$, then by Lemma~\ref{pairwise-opposite}, $g$ is pairwise opposite. Let $x_j$ be the variable opposite to $x_i$. When pinning $x_i$ to 0, $x_j$ becomes a $\delta_1$ factor, i.e., $g_i^0=\delta_1(x_j)\otimes g_{ij}^{01}$, where $g_{ij}^{01}\in\mathscr{A}$. If $g\in \mathscr{D}_1\backslash\mathscr A$, write $g=\delta_1(x_j)\otimes g'$. If we connect the $\delta_1$ factor of $f$ with $x_j$ using $\neq_2$, we obtain a zero signature. Now assume $x_i$ is not a variable taking constant 1 of $g$. Then $(g')_i^0\in\mathscr A\cup \mathscr D_1$ since $g\in \mathscr D_1$. In this case, we obtain $h\otimes \delta_1(x_j)\otimes g_{ij}^{01}$, where $g_{ij}^{01}=(g')_i^0\in \mathscr A\cup \mathscr D_1$.
    \end{proof}

   \begin{theorem}
        {\rm \#EO($\mathscr{A} \cup \mathscr{D}_1$)} and {\rm \#EO($\mathscr{A} \cup \mathscr{D}_0$)} are tractable.
    \end{theorem}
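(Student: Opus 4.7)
The plan is to realize the chain reaction algorithm sketched in the preamble of Section~\ref{sec:algorithm}. Given an instance $\Omega$ of \#EO$(\mathscr{A}\cup\mathscr{D}_1)$, the algorithm proceeds as follows. If every signature in $\Omega$ is affine, then $\Omega$ is an affine instance of \#EO$(\mathscr{A})$ and is solved in polynomial time by Gaussian elimination. Otherwise, pick a $\delta_1$-affine signature $f=\delta_1\otimes h$ and launch a chain reaction starting at the $\delta_1$ factor of $f$.

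Each individual step of the chain is exactly the move guaranteed by Lemma~\ref{tractable_lemma}. The distinguished $\delta_1$ factor sits at one end of a $\neq_2$ edge of the underlying graph; the other end is some variable $x_i$ of a signature $g\in\mathscr{A}\cup\mathscr{D}_1$, where $g$ may coincide with $f$ itself. The lemma asserts that contracting this edge either produces the zero signature, in which case that component of the partition-function sum contributes $0$ and may be discarded, or produces a fragment of the form $h\otimes\delta_1(x_j)\otimes g_{ij}^{01}$ with $g_{ij}^{01}\in\mathscr{A}\cup\mathscr{D}_1$. Because $x_j$ was already a variable of $g$, it is already incident to an edge of the graph, so the newly exposed $\delta_1(x_j)$ plays the role of the next neutron and the reaction continues by reapplying the lemma. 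Each such step eliminates the two endpoints of one $\neq_2$ edge, so the edge count strictly decreases by $2$; hence the chain halts after at most $|E|/2$ iterations. When it halts no $\delta_1$-affine signature remains, and the residual affine instance is completed by Gaussian elimination. Correctness is immediate from the fact that every contraction is an exact local evaluation of the sum along the forced assignment. The parallel statement for \#EO$(\mathscr{A}\cup\mathscr{D}_0)$ is obtained by the global bit flip $0\leftrightarrow 1$, which swaps $\mathscr{D}_0$ with $\mathscr{D}_1$ while preserving affineness and the Eulerian condition.

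The main obstacle is that the leftover factor $h$ in $f=\delta_1\otimes h$ need not itself belong to $\mathscr{A}\cup\mathscr{D}_1$: by Definition~\ref{def:delta1-affine} one only knows $h_i^0\in\mathscr{A}\cup\mathscr{D}_1$ for each $i$. One must therefore maintain and verify a slightly weaker invariant: namely, that every future contraction either pins a variable belonging to a signature already in $\mathscr{A}\cup\mathscr{D}_1$ (where Lemma~\ref{tractable_lemma} applies directly) or pins a variable of such a leftover $h$, in which case the result $h_i^0\in\mathscr{A}\cup\mathscr{D}_1$ by the very definition of $\delta_1$-affineness. Checking that this invariant is genuinely preserved under every propagation, together with the routine bookkeeping of consumed edges and variables, is the only nontrivial ingredient beyond Lemma~\ref{tractable_lemma}; the inductive design of $\mathscr{D}_1$ is precisely what makes this argument go through.
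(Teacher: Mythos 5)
Your proposal is correct and follows essentially the same route as the paper: the identical chain reaction driven by Lemma~\ref{tractable_lemma}, the $|E|/2$ termination bound, and Gaussian elimination on the residual affine instance. The ``obstacle'' you flag at the end dissolves immediately and does not require a weaker invariant: the leftover factor $h$ is never left as a standalone signature, because after each propagation step it is re-bundled with the newly exposed $\delta_1(x_j)$ to form $h\otimes\delta_1$, which is the original $f=\delta_1\otimes h\in\mathscr{D}_1$ up to renaming of variables (and in the self-loop case $f^{x_1\neq x_i}=h_i^0\in\mathscr{A}\cup\mathscr{D}_1$ directly by Definition~\ref{def:delta1-affine}), so the simple invariant that every signature in the instance lies in $\mathscr{A}\cup\mathscr{D}_1$ is preserved.
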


    \begin{proof}
        We only prove \#EO($\mathscr{A} \cup \mathscr{D}_1$) is tractable, as the other case follows by symmetry.

Consider an instance $\Omega$ of \#EO($\mathscr{A} \cup \mathscr{D}_1$). 
If all signatures in $\Omega$ are affine, then we are done by the Gaussian elimination algorithm. Otherwise, there exists a vertex $u$ in $\Omega$ that is labeled by a $\delta_1$-affine signature $f=\delta_1\otimes h$. 
Without loss of generality, we name the variable associated with $\delta_1$ as $x_1$. 
Suppose that in $\Omega$, $x_1$ is connected to another variable $x_i$ of some signature $g$ labeling vertex $v$ using $\neq_2$.

\begin{itemize}
    \item If $u=v$, i.e., $f$ and $g$ are labeled on the same vertex in $\Omega$, then by definition the resulting signature $f^{x_1\neq x_i}=f^{10}_{1i}=h^0_i$ obtained from $f$ by connecting variables $x_1$ and $x_i$ is in $\mathscr D_1\cup \mathscr A$.   
    The instance is thus reduced to a smaller instance with two fewer edges where $f$ is replaced by $f^{x_1\neq x_i}$ of smaller arity. 
    Since $f^{x_1\neq x_i}\in \mathscr D_1\cup \mathscr A$, the reduced instance remains an instance of \#EO($\mathscr{A} \cup \mathscr{D}_1$).
        \item Otherwise, $f$ and $g$ are labeled on different vertices. By Lemma~\ref{tractable_lemma}, connecting $\delta_1$ of $f$ with variable $x_i$ of $g$ yields a signature $h \otimes \delta_1\otimes g'$ where $g'=g_{ij}^{01}\in \mathscr{D}_1\cup\mathscr A$ for some $j\in I(g)\backslash\{i\}$.
        By renaming variables and decomposing $h \otimes \delta_1\otimes g'$ into two parts $f=h \otimes \delta_1$ and $g'$, 
        we can reduce the original instance to a smaller instance with two fewer edges where the edges incident to $f$ are changed and $g$ is replaced by $g'$ of smaller arity. 
        Since $g'\in \mathscr D_1\cup \mathscr A$, the reduced instance remains an instance of \#EO($\mathscr{A} \cup \mathscr{D}_1$).
\end{itemize}
In both cases, the instance is reduced to a smaller instance of \#EO($\mathscr{A}\cup \mathscr{D}_1$) with two fewer edegs, thus can be solved recursively and the number of recursions is bounded by $|E|/2$ where $E$ is the edge set of the underlying graph of $\Omega$. 
  \end{proof}

\section{Characterization of $\delta_1$-affine and $\delta_0$-affine kernels}\label{sec: characterization}

Since the classes $\mathscr D_1$ and $\mathscr D_0$ are defined inductively, obtaining explicit expressions for them that are similar to other known tractable classes for counting problems, such as affine signatures, is challenging.
The tractability of $\mathscr D_1$ or $\mathscr D_0$ is achieved by eventually reducing the problem to one where all signatures are affine signatures.
Therefore, we focus on $\delta_1$-affine (or $\delta_0$-affine) signatures which yield an affine signature when any variable is pinned to $0$ (or $1$, respectively).
In other words, these $\delta_1$-affine (or $\delta_0$-affine) signatures reside at the first level of the inductive hierarchy.
Below, we give an equivalent definition for the $\delta_1$-affine kernel. Note that in Definition~\ref{delta1-affine-kernel} we require $h\notin \mathscr{A}$ and in the following lemma we require $\delta_0\nmid h$.

    \begin{lemma}
        An {\rm EO} signature $f\in K(\mathscr{D}_1)$ if and only if it satisfies the following two properties: 
    \begin{enumerate}
        \item $f=\delta_1^{\otimes m}\otimes h$ for some $m \in \mathbb{Z_+}$ where $h$ is a signature of positive arity, $\delta_1, \delta_0 \nmid h$; 
        \item 
        for every $i\in I(h)$, $h_i^0\in \mathscr{A}$.
    \end{enumerate}
     \end{lemma}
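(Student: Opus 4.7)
My plan is to observe that the lemma and Definition~\ref{delta1-affine-kernel} share every clause except one: the definition requires $h \notin \mathscr{A}$, while the lemma replaces this by $\delta_0 \nmid h$. So it suffices to prove that, under the common hypotheses that $f$ is an EO signature, $f = \delta_1^{\otimes m} \otimes h$ with $m \geq 1$, $h$ of positive arity, $\delta_1 \nmid h$, and $h_i^0 \in \mathscr{A}$ for each $i \in A(h)$, the two conditions ``$h \notin \mathscr{A}$'' and ``$\delta_0 \nmid h$'' are interchangeable.

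For the direction $h \notin \mathscr{A} \Rightarrow \delta_0 \nmid h$, I would argue by contrapositive. If $\delta_0 \mid h$, write $h = \delta_0(x_i) \otimes h'$ for some index $i$ and signature $h'$. Pinning the distinguished variable to $0$ yields $h_i^0 = h'$, which by hypothesis lies in $\mathscr{A}$. Since $\delta_0 \in \mathscr{A}$ and affine signatures are closed under tensor product, this forces $h \in \mathscr{A}$, contradicting the assumption.

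For the reverse direction $\delta_0 \nmid h \Rightarrow h \notin \mathscr{A}$, I would use a weight-counting argument together with Lemma~\ref{constant-weighted}. Let $2n$ denote the arity of $f$. Since $f$ is EO, every $\alpha \in \mathscr{S}(f)$ has weight exactly $n$; stripping off the $\delta_1^{\otimes m}$ block, which contributes weight $m$, shows that $h$ has constant weight $n - m$ and arity $2n - m$. Now suppose for contradiction that $h \in \mathscr{A}$. Combined with $\delta_0, \delta_1 \nmid h$, Lemma~\ref{constant-weighted} applies and forces $h$ itself to be an EO signature, i.e., each codeword of $h$ has weight $(2n - m)/2$. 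Equating $n - m = (2n - m)/2$ yields $m = 0$, contradicting $m \geq 1$.

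The only step that requires any content is the backward direction, and the main (and rather mild) obstacle is recognizing that the EO assumption on the full signature $f$ pins down the weight profile of $h$, which is exactly what lets Lemma~\ref{constant-weighted} deliver a contradiction. Neither direction requires induction or further structural analysis of $h$, so the proof should be short once the reduction above is in place.
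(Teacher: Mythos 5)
Your proposal is correct and matches the paper's proof essentially step for step: the forward direction uses the same weight count (constant weight $n-m$ versus the EO requirement forced by Lemma~\ref{constant-weighted}) to rule out $h\in\mathscr A$, and the reverse direction uses the same observation that $\delta_0\mid h$ together with $h_i^0\in\mathscr A$ would make $h$ affine. No gaps.
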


     \begin{proof}
         First assume $f$ satisfies the two properties in this lemma. Since $f$ is an EO signature and $m\geq1$, we have that $h$ is constant-weighted and not EO. If $h\in\mathscr{A}$, then $h$ would be an EO signature by Lemma~\ref{constant-weighted}, which is a contradiction. Therefore, $h\notin \mathscr{A}$. Thus, $f\in K(\mathscr D_1)$ by Definition~\ref{delta1-affine-kernel}.

         Conversely, assume $f\in K(\mathscr D_1)$ by Definition~\ref{delta1-affine-kernel}. 
        For a contradiction, suppose that 
        $\delta_0\mid h$. 
        Then there exists a variable $x_i$ of $h$ taking constant 0. 
        Thus, $h_i^0\in \mathscr{A}$ by the second property. It follows that $h=\delta_0\otimes h_i^0\in \mathscr{A}$, contradicting the first property. Therefore, $\delta_0\nmid h$.
     \end{proof}

      \begin{remark}
        The signature $f_2$ in Example~\ref{example-2} is a $\delta_1$-affine kernel. Any EO signature of support size 3 with at least one $\delta_1$ factor and no $\delta_0$ factor is trivially a $\delta_1$-affine kernel, since any signature of support size 1 or 2 must be affine. We say $f\in K(\mathscr{D}_1)$ (or $K(\mathscr{D}_0)$) is \emph{non-trivial} if $|\mathscr{S}(f)|>3$.
    \end{remark}

Below, we prove our main characterization theorem for non-trivial $\delta_1$-affine and $\delta_0$-affine kernels.
\begin{theorem}\label{main-theorem}
     An {\rm EO} signature $f$ is a non-trivial $\delta_1$-affine (or $\delta_0$-affine) kernel if and only if $\mathscr S(f)$ is an $m$-multiple of a balanced 1-Hadamard (or balanced 0-Hadamard respectively) code of size $2^k-1$ for some $k\geq 3$ and $m\geq 1$. 
\end{theorem}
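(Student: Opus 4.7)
I work in the Sylvester realization $H_{2^k}[i,j]=(-1)^{\langle i,j\rangle}$ over $\mathbb{F}_2^k$, under which a codeword $c^{(i)}\in\mathscr H^1_{2^k}$ satisfies $c^{(i)}_j=1 \Leftrightarrow \langle i,j\rangle=0$. The codewords of $\mathscr H^{\rm 1b}_{2^k}$ are then those indexed by $i\ne 0$; the column $j=0$ is constantly $1$, and every other column has exactly $2^{k-1}-1$ ones. In the $m$-multiple, the $m$ copies of column $j=0$ yield the factor $\delta_1^{\otimes m}$, so $f=\delta_1^{\otimes m}\otimes h$. The kernel conditions are verified directly: (i) each nonconstant column of $h$ has weight strictly between $0$ and $2^k-1$, so $\delta_0,\delta_1\nmid h$; (ii) using $\mathscr H^1_{2^k}=\vec 1\oplus\mathscr H^0_{2^k}$ (where $\mathscr H^0_{2^k}$ is linear because in the $0$-encoding the pointwise $\pm 1$-product becomes XOR), pinning a variable of $h$ to $0$ both forces the $m-1$ sibling copies in its block to $0$ (contributing $\delta_0^{\otimes(m-1)}$) and cuts out a coset of a $(k-1)$-dimensional subspace of $\mathscr H^0_{2^k}$ whose projection away from the pinned block is injective by Hadamard min-distance $2^{k-1}\ge 2$, so $h_i^0\in\mathscr A$; (iii) $|\mathscr S(f)|=2^k-1>3$ when $k\ge 3$, and $h\notin\mathscr A$ because by Lemma~\ref{constant-weighted} a $\delta$-free constant-weighted affine signature must be EO, while $h$ is not.

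\textbf{Reverse direction.} Given a non-trivial $\delta_1$-affine kernel $f=\delta_1^{\otimes m'}\otimes h$, the critical claim is that the augmented signature $\widehat h$, with $\mathscr S(\widehat h)=\mathscr S(h)\cup\{\vec 1^{n_h}\}$, is affine; equivalently, $\overline h\cup\{\vec 0^{n_h}\}$ is a linear subspace, which unpacks to the single assertion that for any two distinct $\alpha,\beta\in h$, the vector $\vec 1\oplus\alpha\oplus\beta$ is a codeword of $h$. Granting this, $\widehat h=\vec 1\oplus L$ for some $k$-dimensional linear subspace $L\subseteq\mathbb F_2^{n_h}$, so $|h|=2^k-1$, with $k\ge 3$ from non-triviality. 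Each column $i$ of $\widehat h$ takes the form $\vec 1\oplus\ell_i$ for a unique linear functional $\ell_i\in L^*$, and $\delta_1\nmid h$ forces $\ell_i\ne 0$. The constant row-weight $w_h$ of $h$ translates into $|\{i:\ell_i(d)=1\}|=n_h-w_h$ for every nonzero $d\in L$; a short Fourier computation on $L$ via character orthogonality then pins down $m_{\ell_i=0}=0$ and forces every nonzero $\ell\in L^*$ to appear as some $\ell_i$ with the same multiplicity $m:=n_h/(2^k-1)$. Up to permuting variables this identifies $\widehat h$ with the $m$-multiple of $\mathscr H^1_{2^k}$ minus its $j=0$ column; deleting the row $\vec 1$ and restoring $\delta_1^{\otimes m}$ (with $m'=m$) exhibits $\mathscr S(f)$ as an $m$-multiple of $\mathscr H^{\rm 1b}_{2^k}$.

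\textbf{Main obstacle.} The hard step is the central claim that $\vec 1\oplus\alpha\oplus\beta\in h$ for all distinct $\alpha,\beta\in h$. I plan to induct on the arity $n_h$. First, by factoring repeated-column blocks of uniform size $r$ to write $h=\tilde h_{\times r}$, one may reduce to the case that columns of $h$ are pairwise distinct. Next, for each $i$ I decompose $h_i^0=\delta_0^{\otimes a_i}\otimes\delta_1^{\otimes b_i}\otimes g_i$ with $g_i$ affine and $\delta$-free; Lemma~\ref{constant-weighted} forces $g_i$ to be EO, and matching total arity and weight yields the rigid relation $a_i-b_i=m'-1$. Lemma~\ref{pairwise-opposite} then supplies opposite pairs inside $g_i$, which I would use to pin an opposite pair of variables of $h$ and obtain a strictly smaller $\delta_1$-affine kernel $h'$ on which the inductive hypothesis applies; careful bookkeeping on the weights and the $\delta$ factors of $h_i^0$ recovers the conjugate codeword $\vec 1\oplus\alpha\oplus\beta$ from the corresponding codeword in $h'$. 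The delicate point is ensuring that the constant-weight and $\delta$-free conditions propagate correctly through the induction, and that the repeated-column quotient is compatible with the augmentation $\widehat{(\cdot)}$. The symmetric $K(\mathscr D_0)$ statement follows by bit-complementation throughout.
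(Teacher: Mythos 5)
Your forward direction is sound and essentially matches the paper's route (Lemmas~\ref{Hadamard_is_butterfly}, \ref{butterfly_is_kernel} and the easy half of Lemma~\ref{multiple_kernel}), and you have correctly isolated the crux of the reverse direction: showing that $\widehat h$ is affine, equivalently that $\vec 1\oplus\alpha\oplus\beta\in\mathscr S(h)$ for all distinct $\alpha,\beta\in\mathscr S(h)$. But that crux is exactly where your proposal stops being a proof and becomes a plan, and the plan as sketched has a genuine gap. First, you derive the support size $2^k-1$ \emph{from} the affineness of $\widehat h$, whereas the paper must establish the support size \emph{first} (Lemma~\ref{size-theorem}), by a delicate counting argument showing $2^{k-1}<\#_1^1(h)<2^k$ and then iterating via the extraction $f^{x_i=1}$; without this, your induction has no control over the objects it produces. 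Second, your inductive step --- ``pin an opposite pair of variables of $h$ and obtain a strictly smaller $\delta_1$-affine kernel $h'$'' --- does not work as stated: the operation that provably preserves kernel membership is extraction $f^{x_i=1}$, and only under the hypothesis $\delta_0\nmid f^{x_i=1}$ (Lemma~\ref{extract-1-kernel}), which itself must be verified using the column counts $\#_i^0(h)=2^{k-1}$; pinning an opposite pair of an affine \emph{slice} $g_i$ of $h_i^0$ does not yield a smaller kernel of $h$ in any sense you have justified, and you give no mechanism for ``recovering the conjugate codeword'' from the smaller object.

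Concretely, what is missing is the content of the paper's Lemmas~\ref{long_lemma} and \ref{uniqueness}: one needs to know that for a kernel of order $k+1$, extracting any variable to $1$ gives $(f_k)_{\times 2}$ and extracting to $0$ gives the full butterfly $B_k$, together with the identical/opposite correspondence between the two halves; only then can the affineness of $\widehat{f_{k+1}}$ be proved, and even then it requires a nontrivial case analysis over triples $\alpha',\beta',\gamma'$ drawn from the two halves (using that distinct rows and columns of $R_k$ are neither identical nor opposite, Lemma~\ref{non-singularity}). None of this is present or replaceable by the ``careful bookkeeping'' you defer to. Your Fourier/character argument for deducing the multiple structure once $\widehat h$ is known to be affine is a nice alternative to the paper's direct column-matching in Lemma~\ref{multiple_kernel}, but it sits downstream of the unproven claim. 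As written, the reverse direction is incomplete.
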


We give a proof outline for our main theorem. 

In section~\ref{subsec: support size of kernels}, we prove $\delta_1$-affine kernels admit certain arity and support size by carefully counting the number of 0s and 1s in each column of its support. 
More specifically, we prove that a non-trivial $\delta_1$-affine kernel $f$ must have support size $2^k-1$ and arity $m2^k$, for some $m,k\in \mathbb{Z}_+,k\geq 3$ (Lemma~\ref{size-theorem}). 

In section~\ref{subsec: basic kernel and m-multiple}, we define the basic $\delta_1$-affine kernel (Definition~\ref{basic-delta_1-affine-kernel}) of support size $2^k-1$ and arity $2^k$. We show that every non-trivial $\delta_1$-affine kernel is a multiple of such basic kernels (Lemma~\ref{multiple_kernel}).

In section~\ref{subsec: balanced Hadamard codes characterize basic kernels}, we characterize the basic $\delta_1$-affine kernel by the balanced 1-Hadamard code $\mathscr H_{2^k}^{1b}$. We define a special affine signature called butterfly (Definition~\ref{butterfly}), which serves as a bridge connecting basic kernels and balanced Hadamard codes. It is the ``maximal" affine signature of $k$ free variables, requiring that no two different variables take identical values. We show that $\mathscr H_{2^k}^{1}$ (or $\mathscr H_{2^k}^{0}$) is just half of butterfly (Lemma~\ref{Hadamard_is_butterfly}). We also define left and right wings of butterfly (Definition~\ref{wings-of-butterfly}) by removing the all-0 (or all-1) row of half of butterfly. Since $\mathscr H_{2^k}^{0b}$ (or $\mathscr H_{2^k}^{1b}$) is obtained by removing the all-0 (or all-1) codeword of $\mathscr H_{2^k}^{1}$, wings of butterfly coincide with $\mathscr H_{2^k}^{0b}$ and $\mathscr H_{2^k}^{1b}$ (Lemma~\ref{Hadamard_is_butterfly}). Finally, we prove that basic kernels are precisely wings of butterfly (Lemma~\ref{butterfly_is_kernel}, Lemma~\ref{uniqueness}). Therefore, basic kernels, balanced Hadamard codes and wings of butterfly are precisely the same thing.

    \subsection{Support size  and arity of $\delta_1$-affine kernels}\label{subsec: support size of kernels}

   \begin{lemma}\label{extract-1-kernel}
        Suppose $f\in K(\mathscr{D}_1)$. 
        If $\delta_0 \nmid f^{x_i=1}$ for some $i\in I(f)$, then $f^{x_i=1}\in K(\mathscr{D}_1)$.
    \end{lemma}
    
    \begin{proof}
    Suppose $f=\delta_1^{\otimes m}\otimes h$, where $m\in \mathbb{Z}_+, \delta_1,\delta_0\nmid h$, and $h_j^0\in \mathscr{A}$ for every $j\in I(h)$.
    If $i\notin I(h)$, then $f^{x_i=1}=f\in K(\mathscr{D}_1)$ and we are done.
    
    Otherwise $i\in I(h)$. Then, $f^{x_i=1}=\delta_1^{\otimes m}\otimes h^{x_i=1}$.
    Since $\delta_0 \nmid f^{x_i=1}$, we have $\delta_0 \nmid h^{x_i=1}$. 
    Clearly, $\delta_1 \mid h^{x_i=1}$ since variable $x_i$ of $h$ is fixed to take value 1. 
    Factoring out all possible $\delta_1$ factors in $h^{x_i=1}$, we may write $h^{x_i=1}=\delta_1^{\otimes m_0}\otimes h_0$, where $\delta_1\nmid h_0$. Then $f^{x_i=1}=\delta_1^{\otimes (m+m_0)}\otimes h_0$, where $\delta_1,\delta_0\nmid h_0$. We claim $h_0$ has positive arity. Otherwise, $f^{x_i=1}=\delta_1^{\otimes (m+m_0)}$ which is not an EO signature, contradiction. 
    Moreover, for every $j \in I(h_0)$, we have $\delta_1^{\otimes m_0}\otimes(h_0)_j^0=(h^{x_i=1})_j^0=(h_j^0)^{x_i=1}\in \mathscr{A}$ since $h_j^0\in \mathscr{A}$. It follows that $(h_0)_j^0\in \mathscr{A}$ for every $j\in I(h_0)$. Therefore, $f^{x_i=1}\in K(\mathscr{D}_1)$.
    \end{proof}

     \begin{lemma}\label{size-theorem}
     Let $f=\delta_1^{\otimes m}\otimes h\in K(\mathscr{D}_1)$, where $m\in\mathbb{Z}_+,\delta_1,\delta_0\nmid h$. If $\lvert \mathscr{S}(f)\rvert>3$, then there exists $k\in\mathbb{Z},k\geq2$ such that $|\mathscr{S}(f)|=2^{k+1}-1$ and $\text{\sc arity}(f)=m2^{k+1}$. Moreover, we have $\delta_1\nmid h_i^0,\delta_0\nmid h_i^1$ and $\#_i^0(h)=2^k$ for every $i\in I(h).$
    \end{lemma}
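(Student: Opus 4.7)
The plan is to prove all four conclusions --- the non-factor conditions $\delta_1 \nmid h_i^0$ and $\delta_0 \nmid h_i^1$, the column balance $\#_i^0(h) = 2^k$, and the global size/arity formulas --- simultaneously by strong induction on $N = |\mathscr{S}(f)|$. The trivial case $N = 3$ is excluded by hypothesis, and the inductive step rests on a single invocation of the extraction Lemma~\ref{extract-1-kernel}, which can only be deployed once the non-factor conditions are in place.

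The technical heart is establishing $\delta_0 \nmid h_i^1$ (and symmetrically $\delta_1 \nmid h_i^0$) by contradiction. Suppose $\delta_0 \mid h_i^1$ is witnessed by a column $x_j$, so $\#_{ij}^{11}(h) = 0$. Factoring $h_i^0 \in \mathscr{A}$ into its $\delta$-part and the $\delta$-free remainder and applying Lemma~\ref{constant-weighted} to the latter yields a trichotomy for how $x_j$ appears in $h_i^0$: constant $0$ immediately gives $\delta_0 \mid h$ (a contradiction), constant $1$ makes $(x_i,x_j)$ an always-opposite pair, and free gives $\#_{ij}^{00}(h) = \#_{ij}^{01}(h)$. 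Running the same trichotomy for $x_i$ inside $h_j^0$ and intersecting the outcomes reduces to two configurations: the opposite-pair configuration $\#_{ij}^{00}(h) = \#_{ij}^{11}(h) = 0$, and the triangular configuration $\#_{ij}^{00}(h) = \#_{ij}^{01}(h) = \#_{ij}^{10}(h) = p$ with $p$ a power of $2$ and $N = 3p$. Both are ruled out using the per-column hypothesis $h_k^0 \in \mathscr{A}$: in the triangular case the three affine blocks $h_{ij}^{00}, h_{ij}^{01}, h_{ij}^{10}$ have to pack the columns of $h$ into disjoint ``constant-$1$'' budgets whose sizes, constrained by the block-wise weight identities $\alpha_{00} - \beta_{00} = 2 - m$ and $\alpha_s - \beta_s = -m$ for $s \in \{01,10\}$, cannot cover $|A(h) \setminus \{i,j\}|$; in the opposite-pair case, the affineness of $h_k^0$ for every $k$ forces the blocks $h_{ij}^{01}$ and $h_{ij}^{10}$ to be cosets of a common linear subspace, making $h$ itself affine and contradicting $h \notin \mathscr{A}$. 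A symmetric argument handles $\delta_1 \nmid h_i^0$.

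Armed with both non-factor conditions, write $h_i^1 = \delta_1^{\otimes s_i} \otimes \tilde h_i$ with $\delta_0, \delta_1 \nmid \tilde h_i$, and invoke Lemma~\ref{extract-1-kernel} to produce $f^{x_i=1} = \delta_1^{\otimes m+1+s_i} \otimes \tilde h_i \in K(\mathscr{D}_1)$ of support size $\#_i^1(h) < N$. Either $\#_i^1(h) = 3$ (trivial kernel, matching $k' = 1$) or the induction hypothesis provides $\#_i^1(h) = 2^{k'+1} - 1$ and $a(f) = a(f^{x_i=1}) = (m+1+s_i) \cdot 2^{k'+1}$. Since $h_i^0 \in \mathscr{A}$ forces $\#_i^0(h) = 2^{t_i}$, we read off $N = 2^{t_i} + 2^{k'+1} - 1$; a sum of two powers of $2$ equals a power of $2$ only when the two summands coincide, so $t_i = k' + 1$ and hence $k = k' + 1$. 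Plugging $\delta_1 \nmid h_i^0$ together with the weight identity $b_i = a_i + m - 1$ (which collapses $h_i^0 = \delta_0^{\otimes m-1} \otimes g_i$ with $g_i$ EO) into the arity equation $m \cdot 2^{k+1} = (m+1+s_i) \cdot 2^{k'+1}$ pins down $s_i = m - 1$, completing the inductive step and yielding $\#_i^0(h) = 2^k$, $N = 2^{k+1} - 1$, and $a(f) = m \cdot 2^{k+1}$ for some $k \geq 2$.

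The main obstacle is the non-factor step. In both the triangular and opposite-pair cases the hypothesis $h \notin \mathscr{A}$ supplies little direct leverage (the triangular configuration is automatically non-affine by the absent $(1,1)$-row, and the opposite-pair configuration requires work to show $h$ \emph{would become} affine if consistent); the contradiction has to be squeezed entirely from the per-column affineness of $h_k^0$ and the tight weight budget supplied by Lemma~\ref{constant-weighted} applied block-by-block, which makes the bookkeeping the most delicate part of the argument.
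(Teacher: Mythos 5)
Your overall architecture (strong induction on $N=|\mathscr{S}(f)|$ via Lemma~\ref{extract-1-kernel}) is in the spirit of the paper's proof, but two steps do not go through as written. The decisive gap is the deduction ``$N = 2^{t_i} + 2^{k'+1}-1$; a sum of two powers of $2$ equals a power of $2$ only when the two summands coincide, so $t_i = k'+1$.'' That implication is only available once you already know $N+1$ is a power of two --- which is precisely the conclusion you are proving. What the induction hypothesis actually gives you is that for \emph{each} $i$, $N+1 = 2^{t_i} + 2^{j_i}$ for some exponents depending on $i$; the configuration $N+1 = 2^a+2^b$ with $a\neq b$ and the columns split into two classes ($t_i=a, j_i=b$ versus $t_i=b, j_i=a$) is not excluded by anything in your outline, and the row/column double count of zeros does not immediately rule it out either. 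The same circularity infects the arity step, where you substitute $a(f)=m2^{k+1}$ (a conclusion) into the equation $a(f)=(m+1+s_i)2^{k'+1}$ to ``pin down'' $s_i$. The paper avoids this entirely: it fixes the column with the \emph{maximal} zero-count $2^k$, proves the two-sided bound $2^{k-1} < l < 2^k$ for $l=\#_1^1(h)$ (Claim~\ref{support-size-1}) by a direct weight/pigeonhole analysis of $h_1^0=\delta_0^{\otimes q_0}\otimes\delta_1^{\otimes q_1}\otimes h'$, and then iterates this bound on $f^{x_1=1}$ to peel off $l = 2^{k-1}+2^{k-2}+\cdots+2^2+3 = 2^k-1$, which forces $N+1$ to be a power of two rather than assuming it.

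The second problem is that you front-load the conditions $\delta_1\nmid h_i^0$ and $\delta_0\nmid h_i^1$ as a prerequisite for invoking Lemma~\ref{extract-1-kernel}, but your argument for them is only a sketch whose two terminal cases are asserted, not proved: the ``packing of constant-$1$ budgets'' in the triangular configuration and the ``common coset'' claim in the opposite-pair configuration both need real work, and it is not clear the latter can be carried out without size information (the paper's union-of-two-affine-blocks arguments of this flavor, e.g.\ in Lemma~\ref{uniqueness}, lean heavily on knowing exact support sizes). Note that the paper establishes $\delta_1\nmid h_i^0$ and $\delta_0\nmid h_i^1$ only in Claim~\ref{support-size-4}, \emph{after} $|\mathscr{S}(f)|=2^{k+1}-1$ is known, and uses that formula essentially; for the earlier recursive steps it verifies $\delta_0\nmid f^{x_i=1}$ only for the specific variable being extracted, deriving it from the counting bounds already in hand ($\#_i^0(h_1^1)\le 2^{k-1}<l$). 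You would need either to fill in the non-factor argument in full generality without size information, or to restructure along the paper's lines so that only the weaker, countable instances of it are needed.
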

    
    \begin{proof}
    Suppose $h=h(x_1,x_2,...,x_n)$ has arity $n$ and is $t$-weighted. 
    Let $s=n-t$. 
    Since $f$ is an EO signature, we have $s=t+m$. Then $s>t$ since $m\in \mathbb{Z}_+$. 
    For every $i\in I(h)$ we have $h_i^0\in \mathscr{A}$, so $\#_i^0(h)=2^{k_i}$ for some $k_i\in \mathbb{Z}$. 
    Without loss of generality, we may assume the first column of $\mathscr{S}(h)$ contains the most $0$s, i.e., $k_1$ is the largest among all $k_i,1\leq i\leq n$. 
    Let $k=k_1$ and $l=\#_1^1(h)$. 
    We have $l>0$ since $\delta_0\nmid h$, and $|\mathscr{S}(h)|=2^k+l$. 
    Suppose $h_1^0=\delta_0^{\otimes q_0}\otimes\delta_1^{\otimes q_1}\otimes h'$ and $\text{\sc arity}(h')=p$, where $q_0,q_1,p\geq 0$, and $\delta_0,\delta_1\nmid h'$.
    An illustration of $\mathscr{S}(h)$ is shown in Figure~\ref{fig: support of h}, where yellow parts are all $0$s, blue parts are all $1$s, the green part is $h'$, and with $h_1^0$ (resp. $h_1^1$) enclosed by the red (resp. pink) box.

\begin{figure}[htbp]
    \centering
    \NiceMatrixOptions{xdots={line-style = <->}}
    $$\mathscr{S}(h)=
\begin{NiceArray}{@{\hspace{2em}}|c|cw{c}{1cm}cc|cc|cc|}[first-row,first-col,margin]
\CodeBefore
\cellcolor[HTML]{FFFF88}{1-1,2-1,3-1,4-1}
\cellcolor[HTML]{96e5fc}{5-1,6-1,7-1}
\cellcolor[HTML]{80fe95}{1-2,2-2,3-2,4-2}
\Body
  & x_1 & x_2 &\cdots\\
\hline
\Vdotsfor{4}^{2^k} & 0 & 0 & \Block[fill=[HTML]{80fe95}]{4-3}{h'} & & & \Block[fill=[HTML]{FFFF88}]{4-2}{0\text{s}} & & \Block[fill=[HTML]{96e5fc}]{4-2}{1\text{s}} & \\
& 0 & 0 & & & & & & & \\
& 0 & 1 & & & & & & & \\
& 0 & 1 & & & & & & & \\\hline
\Vdotsfor{3}^{l} & 1 & 0 &  \Block{3-3}{ } & & &\Block[fill=[HTML]{96e5fc}]{3-2}{1\text{s}} & & & \\
& 1 & 0 &  & & & & & & \\
& 1 & 1 &  & & & & & & \\\hline
\CodeAfter
\OverBrace[yshift=10pt]{1-2}{1-5}{p}
\OverBrace[shorten,yshift=3pt]{1-6}{1-7}{q_0}
\OverBrace[shorten,yshift=3pt]{1-8}{1-9}{q_1}
 \tikz[line width=0.7mm, pink]
 \draw (5-|2) -- (5-|last)
       (5-|2) -- (last-|2)
       (last-|2) -- (last-|last)
       (5-|last) -- (last-|last);
\tikz[line width=0.7mm, red]
 \draw (1-|2) -- (1-|last)
       (1-|2) -- (5-|2)
       (5-|2) -- (5-|last)
       (1-|last) -- (5-|last);
\tikz
\node[anchor=west, xshift=2mm, font=\normalsize] at ($(3-|last)!0.5!(3-|last)$) {$h_1^0$};
\tikz
\node[anchor=west, xshift=2mm, font=\normalsize] at ($(6-|last)!0.5!(7-|last)$) {$h_1^1$};
\end{NiceArray}$$
    \caption{An illustration of the support of $h$.}
    \label{fig: support of h}
\end{figure}

   Next we prove this lemma by five claims:
    
    \begin{claim}\label{support-size-1}
        $2^{k-1}<l<2^k$ with $k\ge 2$.
     \end{claim}
         
     Since $h$ is constant-weighted and $s>t$, the total number of $0$s outweighs that of $1$s in $\mathscr{S}(h)$. By the pigeonhole principle, the first column contains strictly more $0$s than $1$s, i.e., $2^k>l$. Thus, $2^{k}>\frac{1}{2}(2^k+l)=\frac{1}{2}\lvert \mathscr{S}(f)\rvert \geq2$. More precisely, $2^{k}\geq 4$ and $k\geq 2$.

     Next we show $l\ge 2^{k-1}.$
     First, we have $p>0$. Otherwise, $h_1^0=\delta_0^{\otimes q_0}\otimes\delta_1^{\otimes q_1}$, so the first $2^{k}$ rows would be identical in $\mathscr{S}(h)$ and $\mathscr{S}(f)$. This would imply $2^k=1$, contradicting $k\geq 2$. Moreover, if there exists some variable $x_i$ which is constant 0 in $\mathscr{S}(h_1^0)$, then $x_i$ is constant 1 in $\mathscr{S}(h_1^1)$ since the first column of $\mathscr{S}(h)$ contains the most $0$s.
     Since $f$ is an EO signature, $h_1^0$ is constant-weighted, and therefore so is $h'$. Clearly $h'\in \mathscr{A}$. 
     By Lemma~\ref{constant-weighted}, $h'$ is an EO signature and $$\#_i^0(h')=\#_i^1(h')=2^{k-1}$$
     for every $i\in I(h')$.
     Consequently, $q_0=s-1-p/2$ and $q_1=t-p/2$. 
     For every $i\in I(h')\subseteq I(h)$, we have $$2^{k_i}=\#_i^0(h)=\#_i^0(h_1^0)+\#_i^0(h_1^1)=\#_i^0(h')+\#_i^0(h_1^1)=2^{k-1}+\#_i^0(h_1^1).$$
     Since $k_i\leq k$, there are two possibilities for every $i\in I(h')$: 
     either $k_i=k$ and $\#_i^0(h_1^1)=2^{k-1}$, or $k_i=k-1$ and $\#_i^0(h_1^1)=0$. We claim that at least for one $i\in I(h')$, it is the former case. 
     Otherwise, $\#_i^0(h_1^1)=0$ for every $i\in I(h')$, meaning $x_i$ is constant 1 in $\mathscr{S}(h_1^1)$ for every $i\in I(h')$. 
     Counting the number of $1$s in the last row of $\mathscr{S}(h)$, we would have 
     $$\mathrm{wt}(h)\geq1+p+q_0=1+p+(s-1-p/2)=s+p/2>t,$$ contradicting wt$(h)=t$. Therefore, there exists some $i_0$ such that $\#_{i_0}^0(h_1^1)=2^{k-1}$. 
     Without loss of generality, we assume $i_0=2$.
     Then $l=\#_1^1(h)\geq \#_2^0(h_1^1)= 2^{k-1}$.

     Finally, we show $l\neq2^{k-1}$ and thus $l>2^{k-1}$. 
     Assume by contraction that $l=2^{k-1}$.
     For every $i\in I(h')$, $x_i$ is constant 0 or constant 1 in $\mathscr{S}(h_1^1)$ since $\#_i^0(h_1^1)=0$ or $\#_i^0(h_1^1)=2^{k-1}=l$ by the above argument. 
     Now we show the last $l$ rows in $\mathscr{S}(h)$ must be identical to get a contradiction. The only uncertain part is the last $q_1$ columns.
     Consider $h_2^0\in \mathscr{A}$, it has support size $2^k$. 
     For $x_i$ which is a variable from the last $q_1$ columns, it is constant 1 or half $0$s and half $1$s in $\mathscr{S}(h_2^0)$ by Lemma~\ref{constant-weighted}. Since $\#_2^0(h_1^0)=2^{k-1}$, the column corresponding to $x_i$ already contains $2^{k-1}$ many $1$s in $\mathscr{S}(h_2^0)$. 
     So $x_i$ is constant 1 or constant 0 in the last $2^{k-1}$ rows of $\mathscr{S}(h_2^0)$, which is $\mathscr{S}(h_1^1)$. Therefore, the last $2^{k-1}$ rows in $\mathscr{S}(h)$ are identical. This would imply $2^{k-1}=1$, contradicting $k\geq 2$.
    
      \begin{claim}\label{support-size-2}
     $\#_i^0(h)=2^k$ for every $i\in I(h')$.
    \end{claim}

    Since $\#_i^0(h)=\#_i^0(h_1^0)+\#_i^0(h_1^1)$ and $\#_i^0(h_1^0)=2^{k-1}$, we only need to show $\#_i^0(h_1^1)=2^{k-1}$ for every $i\in I(h')$.
    In Claim~\ref{support-size-1}, we have shown either $\#_i^0(h_1^1)=2^{k-1}$ or $\#_i^0(h_1^1)=0$. 
    If Claim~\ref{support-size-2} is not true, then there exists some $i_1\in I(h')$ such that $\#_{i_1}^0(h_1^1)=0$. 
    Without loss of generality, we assume $i_1=3$. Then $x_3$ is constant 1 in $\mathscr{S}(h_1^1)$.
   For $i\in I(h')$, since $\#_i^0(h_1^1)\leq 2^{k-1}<l$, $x_i$ is not constant $0$ in $\mathscr{S}(h^{x_3=1})$. Additionally, the last $q_0+q_1$ columns cannot be constant $0$s in $\mathscr{S}(h^{x_3=1})$. Thus, $\delta_0 \nmid f^{x_3=1}$. By Lemma~\ref{extract-1-kernel}, we have $f^{x_3=1}\in K(\mathscr{D}_1)$.
    Note that $\lvert\mathscr{S}(f^{x_3=1})\rvert=2^{k-1}+l$, where $2^{k-1}<l<2^k$. Let $l=2^{k-1}+l_1$, where $0<l_1<2^{k-1}$. Then $\lvert\mathscr{S}(f^{x_3=1})\rvert=2^{k}+l_1>3$. By Claim~\ref{support-size-1} we have $2^{k-1}<l_1<2^k$. This is a contradiction. Therefore, $\#_i^0(h)=2^k$ for every $i\in I(h')$.
    
    \begin{claim}\label{support-size-3}
    $l=2^k-1$, $\lvert\mathscr{S}(f)\rvert=2^{k+1}-1$.
    \end{claim}
    
    By Claim~\ref{support-size-1}, we have $|\mathscr{S}(f)|=2^k+l$, where $k,l\in\mathbb{Z},k\geq2,2^{k-1}<l<2^k$.

    In Claim~\ref{support-size-2}, we prove $\#_i^0(h_1^1)=2^{k-1}$ for every $i\in I(h')$. 
    Thus, for every $i\in I(h')$, $x_i$ is not constant $0$ in $\mathscr{S}(h_1^1)$ since $l>2^{k-1}$. No variable from the last $q_1$ columns takes constant 0 in $\mathscr{S}(h_1^1)$ either, since $2^{k-1}<l<2^k$ is not a power of 2. Therefore, $\delta_0 \nmid f^{x_1=1}$. By Lemma~\ref{extract-1-kernel}, $f^{x_1=1}\in K(\mathscr{D}_1)$. If $|\mathscr{S}(f^{x_1=1})|=l>3$, we apply Claim~\ref{support-size-1} and obtain $l=2^{k-1}+l_1$, where $k-1\geq2$ and $2^{k-2}<l_1<2^{k-1}$. Thus, $|\mathscr{S}(f)|=2^k+l=2^k+2^{k-1}+l_1$. Repeating this process, we finally obtain $\lvert\mathscr{S}(f)\rvert=2^{k}+l=2^k+2^{k-1}+l_1=...=2^k+2^{k-1}+...+2^2+r$, where $2^1<r<2^2$. Then $r=3,l=2^k-1$ and $\lvert\mathscr{S}(f)\rvert=2^{k+1}-1$.

      \begin{claim}\label{support-size-4}
    $\delta_1\nmid h_i^0,\delta_0\nmid h_i^1$, and $\#_i^0(h)=2^k$ for every $i\in I(h)$.
\end{claim}

    First we show $\delta_1\nmid h_1^0$, i.e., $q_1=0$.
    Otherwise, let $\delta_1(x_i)\mid h_1^0$ for some $i\in I(h)$.
    By Claim~\ref{support-size-3}, we know $l=2^k-1$ and $|\mathscr{S}(f)|=2^{k+1}-1$. 
    Since $\#_i^0(h_1^1)=\#_i^0(h)=2^{k_i}$ for some $k_i\in\mathbb{Z}$ and $2^{k_i}\leq l=2^{k}-1$, we have $\#_i^1(h_1^1)=(2^k-1)-\#_i^0(h_1^1)=2^k-1-2^{k_i}>0$. Let $l_1=\#_i^1(h_1^1)>0$. Then consider $f^{x_i=1}$, it is easy to verify it has no $\delta_0$ factor. By Lemma~\ref{extract-1-kernel}, $f^{x_i=1}\in K(\mathscr{D}_1)$. Since $|\mathscr{S}(f^{x_i=1})|=2^k+l_1>2^k\geq4$, we can apply Claim~\ref{support-size-3} and deduce $|\mathscr{S}(f^{x_i=1})|=2^k+l_1=2^k+2^k-1$. Therefore $l_1=2^k-1$. Then $x_i$ takes constant $1$ in $\mathscr{S}(h)$, contradicting $\delta_1\nmid h$. Thus, $\delta_1\nmid h_1^0$.

    Second, in Claim~\ref{support-size-2} we proved $\#_i^0(h_1^1)=2^{k-1}$ or $0$ for every $i\in I(h_1^1)$. It follows that $\delta_0\nmid h_1^1$.
    
    Next we show $\#_i^0(h)=2^k$ for every $i\in I(h)$. We have shown $\#_i^0(h)=2^{k}$ for every $i\in I(h')$ in Claim~\ref{support-size-2}. 
    Also, $\#_i^0(h)=2^k$ for $x_i$ taking constant $0$ in $\mathscr{S}(h^{x_1=0})$. Combining $\delta_1\nmid h_1^1$, we get $\#_i^0(h)=2^k$ for every $i\in I(h)$.
    
    Now every column in $\mathscr{S}(h)$ contains the same number of $0$s, then the argument for $\delta_1\nmid h_1^0$ and $\delta_0\nmid h_1^1$ applies to any $i\in I(h)$.

    \begin{claim}\label{support-size-5}
    $\text{\sc arity}(f) =m2^{k+1}$.
    \end{claim}

     By Claim~\ref{support-size-4}, $\#_i^0(h)=2^k$ for every $i\in I(h)$. We count the number of $0$s in $\mathscr{S}(h)$ by columns and get $n2^k$. Since $h$ is constant-weighted and $\lvert\mathscr{S}(h)\rvert=2^{k+1}-1$ by Claim~\ref{support-size-3}, we count the number of $0$s by rows and get $s(2^{k+1}-1)$. Recall that $s+t=n$ and $s-t=m$. Thus, $s(2^{k+1}-1)=n2^k=(2s-m)2^k$. Solving this equation we obtain $s=m2^k$, then $\text{\sc arity}(f)=2s=m 2^{k+1}$.
  \end{proof}
  
    \subsection{Basic kernel and $m$-multiple}\label{subsec: basic kernel and m-multiple}

      \begin{definition}[Basic $\delta_1$-affine kernel of order $k$]\label{basic-delta_1-affine-kernel}
        If $m=1$ in Lemma~\ref{size-theorem}, i.e., $f=\delta_1\otimes h\in K(\mathscr{D}_1), \delta_1,\delta_0\nmid h$, then we call $f$ a basic $\delta_1$-affine kernel of order $k$, where $f$ has arity $2^k$ and support size $2^k-1$ $(k\geq 3)$. For completeness, we define the basic $\delta_1$-affine kernel of order 1 by $f_1=\delta_1\otimes\delta_0$ and the basic $\delta_1$-affine kernel of order 2 by $f_2=\delta_1\otimes\{(001),(010),(100)\} $.
    \end{definition}

\begin{lemma}\label{multiple_kernel}
    An \rm{EO} signature of arity $m2^k$ is a non-trivial $\delta_1$-affine kernel if and only if it is an $m$-multiple of a non-trivial basic $\delta_1$-affine kernel of arity $2^k$.
\end{lemma}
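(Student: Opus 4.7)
My plan is to prove both directions of the biconditional separately.

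The easy direction $(\Leftarrow)$: Suppose $f=g_{\times m}$ where $g=\delta_1\otimes h_0$ is a non-trivial basic $\delta_1$-affine kernel of arity $2^k$. After regrouping variables so that the $m$ copies of the $\delta_1$-position are collected together, $f=\delta_1^{\otimes m}\otimes h_{0,\times m}$. We have $\delta_0,\delta_1\nmid h_{0,\times m}$ since a constant column in the $m$-multiple would force a constant column in $h_0$. For the affine condition, pinning a single variable of $h_{0,\times m}$ corresponding to position $j$ in one of the $m$ copies forces the variables at position $j$ in the other $m-1$ copies to also take value $0$, so $(h_{0,\times m})_i^0=\delta_0^{\otimes(m-1)}\otimes((h_0)_j^0)_{\times m}$ up to permutation. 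Since $(h_0)_j^0\in\mathscr{A}$, and $\mathscr{A}$ is closed under tensor products and $m$-multiples, the result is affine.

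The hard direction $(\Rightarrow)$: Let $f=\delta_1^{\otimes m}\otimes h\in K(\mathscr{D}_1)$ be non-trivial. By Lemma~\ref{size-theorem}, $|\mathscr{S}(f)|=2^k-1$, $a(f)=m\cdot 2^k$ with $k\geq 3$, and $\#_i^0(h)=2^{k-1}$, $\#_i^1(h)=2^{k-1}-1$ for every $i\in A(h)$. The key structural claim is that the columns of $h$ partition into exactly $2^k-1$ equivalence classes of identical columns, each of size exactly $m$. Granted this, choosing a representative column per class yields a signature $h_0$ of arity $2^k-1$ with $h=h_{0,\times m}$ up to variable permutation, so $f=g_{0,\times m}$ for $g_0=\delta_1\otimes h_0$. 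The verification that $g_0$ is a basic $\delta_1$-affine kernel is then routine: a weight count makes $g_0$ an EO signature; the inherited column statistics ensure $\delta_0,\delta_1\nmid h_0$; and for each $j$ coming from an equivalence class $\{j_1,\ldots,j_m\}$ in $h$, the decomposition $h_{j_1}^0=\delta_0^{\otimes(m-1)}\otimes((h_0)_j^0)_{\times m}$ combined with $h_{j_1}^0\in\mathscr{A}$ and the equivalence that a signature is affine if and only if its $m$-multiple is, gives $(h_0)_j^0\in\mathscr{A}$.

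To prove the structural claim, I first classify every pair of distinct columns $i,j$ of $h$. Let $a_{ab}=|\mathscr{S}(h_{ij}^{ab})|$. Since $h_i^0,h_j^0\in\mathscr{A}$, the quantities $a_{00}, a_{01}, a_{10}$ are each either $0$ or a power of $2$. The column-count identities $a_{00}+a_{01}=a_{00}+a_{10}=2^{k-1}$ and $a_{01}+a_{11}=a_{10}+a_{11}=2^{k-1}-1$ force $a_{01}=a_{10}$ and $a_{00}=a_{11}+1$. Since two powers of $2$ summing to $2^{k-1}$ must be either both $2^{k-2}$ or $0$ and $2^{k-1}$, and the alternative $a_{00}=0$ gives $a_{11}=-1$, only two cases remain: columns $i,j$ are identical ($a_{01}=a_{10}=0$) or generic ($a_{00}=a_{01}=a_{10}=2^{k-2}$). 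Then for a fixed column $i$, I compute the size $c(i)$ of its equivalence class by writing $h_i^0=\delta_0^{\otimes s_0}\otimes\delta_1^{\otimes s_1}\otimes g$ with $\delta_0,\delta_1\nmid g$. By the classification, the constant-$0$ columns of $h_i^0$ are exactly the other columns of $h$ identical to column $i$, so $s_0=c(i)-1$; no column of $h_i^0$ can be constantly $1$, so $s_1=0$. Lemma~\ref{constant-weighted} then tells us $g$, a constant-weighted affine signature with no $\delta$-factor, is an EO signature and hence has arity equal to twice its row weight. Substituting $a(h)=m(2^k-1)$ and row weight $t=m(2^{k-1}-1)$ into $a(h)-1-s_0=2t$ gives $s_0=m-1$, so $c(i)=m$ as desired.

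The main obstacle is this column-pair classification together with the arity equation forcing the class size to be exactly $m$; both rely on exploiting that $h_i^0\in\mathscr{A}$ has support size a power of $2$, combined with the rigid column statistics from Lemma~\ref{size-theorem}. Once the column partition is established, everything else reduces to bookkeeping with tensor products and $m$-multiples.
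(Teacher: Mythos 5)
Your proof is correct and follows essentially the same route as the paper's: both directions hinge on pinning a variable of $h$ to $0$, using Lemma~\ref{size-theorem} together with Lemma~\ref{constant-weighted} to show that the columns becoming constant $0$ are exactly those identical to the pinned one and that there are exactly $m$ of them, and then verifying that the quotient signature is a basic kernel. Your pairwise column classification (via support sizes of pinned affine signatures being $0$ or powers of $2$) and the arity equation $a(g)=2t$ are just a slightly more systematic repackaging of the paper's direct count, which matches the $\delta_0$-multiplicity of $h^{x_i=0}$ against the $\delta_1$-multiplicity $m$ using the EO property.
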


\begin{proof}
     First, we show that an $m$-multiple of a non-trivial basic $\delta_1$-affine kernel is a non-trivial $\delta_1$-affine kernel.
     Assume $f=\delta_1\otimes g$ is a non-trivial basic $\delta_1$-affine kernel. Then $|\mathscr{S}(f_{\times m})|=|\mathscr S(f)|>3$. We want to show $f_{\times m}\in K(\mathscr D_1)$. Note that $f_{\times m}=\delta_1^{\otimes m}\otimes g_{\times m}$ and $\delta_1,\delta_0\nmid g_{\times m}$. Pinning $0$ to the $i$-th variable of one copy of $g$ in $g_{\times m}$, each of the remaining $m-1$ copies contributes a $\delta_0$ factor, i.e., $(g_{\times m})_i^0=\delta_0^{\otimes(m-1)}\otimes (g_i^0)_{\times m}$. Clearly $(g_{\times m})_i^0\in \mathscr A$ since $g_i^0\in\mathscr{A}$. It follows that $f_{\times m}\in K(\mathscr D_1)$.

     Conversely, assume $f=\delta_1^{\otimes m}\otimes h$ is a non-trivial $\delta_1$-affine kernel of order $k$, where $m\in\mathbb{Z}_+, \delta_1,\delta_0\nmid h$. We want to show there exists a non-trivial basic $\delta_1$-affine kernel $f_0$ of order $k$ such that $f=(f_0)_{\times m}$.
      By Lemma~\ref{size-theorem}, we have  $\lvert\mathscr{S}(f)\rvert=2^{k}-1$ and $\#_i^0(h)=2^{k-1}$ for every $i\in I(h)$. Fix any $i\in I(h)$, consider $f^{x_i=0}=\delta_1^{\otimes m}\otimes h^{x_i=0}$. It is an affine EO signature of support size $2^{k-1}$. By Lemma~\ref{size-theorem}, $\delta_1 \nmid h^{x_i=0}$. Factoring out $\delta_0$ in $h^{x_i=0}$, we suppose $h^{x_i=0}=\delta_0^{\otimes m_0} \otimes h'$, $\delta_1,\delta_0 \nmid h'$. At the same time, notice $h'$ is constant-weighted. Thus, $h'$ is an EO signature by Lemma~\ref{constant-weighted}.
    Since $f^{x_i=0}=\delta_1^{\otimes m} \otimes \delta_0^{\otimes m_0} \otimes h'$ is also EO, we have $m=m_0$. 
    
    On one hand, if $x_j$ is a variable taking constant $0$ in $\mathscr{S}(f^{x_i=0})$, then the values of $x_j$ and $x_i$ are identical in $\mathscr{S}(h)$, since the $j$-th column already contributes $2^{k-1}$ many $0$s in $\mathscr{S}(f^{x_i=0})$ and the remaining $2^{k-1}-1$ rows must be all $1$s. 
    On the other hand, if $x_j$ is a variable of $h'$, then $x_j$ and $x_i$ are not identical in $\mathscr{S}(h)$ since they are not identical in $\mathscr{S}(h^{x_i=0})$. Therefore, there are exactly $m_0=m$ copies of the $i$-th column in $\mathscr{S}(h)$. Since $i\in I(h)$ is arbitrary, we have $h=(h_0)_{\times m}$ for some signature $h_0$.
    Clearly $\delta_1,\delta_0\nmid h_0$ since $\delta_1,\delta_0\nmid h$.
    Notice for every $i\in I(h_0)$, $(h_0^{x_i=0})_{\times m}=((h_0)_{\times m})^{x_i=0}=h^{x_i=0}=\delta_0\otimes h_i^0\in \mathscr{A}$, so $h_0^{x_i=0}\in \mathscr{A}$ and then $(h_0)_i^0\in \mathscr{A}$. Therefore, $f_0=\delta_1\otimes h_0$ is a basic $\delta_1$-affine kernel and $f=(f_0)_{\times m}$. 
\end{proof}

    \subsection{Balanced Hadamard codes characterize basic kernels}\label{subsec: balanced Hadamard codes characterize basic kernels}

    We first define butterfly signatures and wings of a butterfly, which serve as a bridge in establishing the equivalence between balanced Hadamard codes and basic kernels.

      \begin{definition}[Butterfly]\label{butterfly}
      
    Let $k\geq 1$.
    An affine signature $g$ of arity $2^{k+1}$ is a butterfly and denoted by $B_k$, if it has exactly $k$ free (linearly independent) variables $x_1,x_2,...,x_k$ and any two different variables do not always take identical values in $\mathscr{S}(g)$.        
    \end{definition}

    The butterfly $B_k$ is unique up to a permutation of variables. It is precisely the affine signature of $k$ free variables with the maximum arity, assuming any two different variables are not identical.

    \begin{definition}[Wings of butterfly]\label{wings-of-butterfly}
    We represent the butterfly $B_k$ by a matrix in the following way.
    Let $(x_2,x_3,...,x_{k+1})$ be free variables of $B_k$, and $x_1$ be constant $0$. Moreover, let $x_m=\lambda_2x_2+\lambda_3x_3+...+\lambda_{k+1}x_{k+1}$ and $x_{m+2^k}=x_{m}+1$ for $1\leq m\leq 2^k$, where $\lambda_2,...,\lambda_{k+1}\in \Z_2$. Through a permutation of rows, we let the first row be $\vec{0}^{2^k}\vec{1}^{2^k}$. 
    
    Then, $B_k$ can be split into the left half and the right half, each consisting of $2^k$ columns.
    The left wing of $B_k$, denoted by $L_k$, is the signature obtained from the left half of $B_k$ by removing the first row $\vec{0}^{2^k}$.
    Symmetrically, we define the right wing of $B_k$ denoted by $R_k$.
    \end{definition}

        \begin{remark}
        Recall that $\widecheck{L_k}$ and $\widehat{R_k}$ are obtained from $L_k$ and $R_k$ by adding an all-0 or all-1 row respectively. 
        They are the left and right halves of $B_k$ respectively, which are affine.
        In fact, $\widecheck{L_k}$ is precisely the affine signature of $k$ free variables with maximum arity, assuming every variable is a linear combination of free variables and any two variables are not identical. 
        Meanwhile, $\widehat{R_k}$ is the complement of $\widecheck{L_k}$.
        The uniqueness proof in Lemma~\ref{uniqueness} essentially relies on the maximality of $L_k$ and $R_k$.
    \end{remark}

\begin{example}\label{butterfly_example}

    $$
    B_2=
    \begin{bNiceArray}{cccccccc}[first-row,margin]
        x_1&x_2&x_3&x_4&x_5&x_6&x_7&x_8\\
         0 &0 & 0& 0&1 & 1 & 1& 1\\
         0 &1 & 1& 0&1 & 0 & 0& 1\\
         0 &1 & 0& 1&1 & 0 & 1& 0\\
         0 &0 & 1& 1&1 & 1 & 0& 0\\
    \end{bNiceArray}
    $$
    is the butterfly of $2$ free variables $x_2,x_3$. It is determined by the following linear relations:$x_1=0x_2+0x_3,x_4=x_2+x_3,x_5=0x_2+0x_3+1,x_6=x_2+0x_3+1,x_7=0x_2+x_3+1,x_8=x_2+x_3+1$.
    
    The left and right wing of $B_2$ are:
    $$
    L_2=
    \begin{bNiceArray}{cccc}
       0&  1 &1& 0\\
       0&  1 &0& 1\\
       0&  0 &1& 1
    \end{bNiceArray}
    ,
    \quad
    R_2=
    \begin{bNiceArray}{cccc}
        1& 0 & 0& 1\\
        1& 0 & 1& 0\\
        1& 1 & 0& 0
    \end{bNiceArray}
    $$
   \end{example}

        \begin{lemma}\label{Hadamard_is_butterfly}
        As a signature, the balanced 1-Hadamard (or 0-Hadamard) code is the right (or left) wing of butterfly up to a permutation of variables. 
    \end{lemma}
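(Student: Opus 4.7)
The plan is to exhibit, via Sylvester's inner-product formula, an explicit bijection between the columns of $R_k$ (respectively $L_k$) and the positions of $\mathscr{H}^{\rm 1b}_{2^k}$ (respectively $\mathscr{H}^{\rm 0b}_{2^k}$) under which the two codes coincide row by row. The key observation is that both sides are, in essence, tabulations of the standard bilinear form on $\mathbb{Z}_2^k$.

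First, I re-parametrize $R_k$ using the affine structure inherited from $B_k$. By Definition~\ref{wings-of-butterfly}, each column of the right half of $B_k$ corresponds to a variable of the form $\lambda_2 x_2 + \cdots + \lambda_{k+1} x_{k+1} + 1$ for some $\vec{\lambda} = (\lambda_2, \ldots, \lambda_{k+1}) \in \mathbb{Z}_2^k$. The defining condition of a butterfly requires that distinct variables represent distinct affine functions; since there are exactly $2^k$ affine functions of $k$ variables with constant term $1$, the index map $m \mapsto \vec{\lambda}^{(m)}$ is a bijection from $\{1, \ldots, 2^k\}$ onto $\mathbb{Z}_2^k$. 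Similarly, the $2^k - 1$ rows of $R_k$ are indexed by the nonzero assignments $\vec{a} \in \mathbb{Z}_2^k \setminus \{\vec{0}\}$ to the free variables $(x_2, \ldots, x_{k+1})$ (the deleted row $\vec{1}^{2^k}$ is precisely the one with $\vec{a} = \vec{0}$). Under this parametrization, the entry of $R_k$ at position $(\vec{a}, \vec{\lambda})$ is $\langle \vec{\lambda}, \vec{a} \rangle + 1 \pmod{2}$.

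Next, I compare this with Sylvester's construction. Unwinding the recursion gives $(H_{2^k})_{ij} = (-1)^{\langle i, j \rangle}$ for $i, j \in \mathbb{Z}_2^k$, so under the map $1 \mapsto 1$, $-1 \mapsto 0$, the codeword of $\mathscr{H}^1_{2^k}$ indexed by $i$ has value $\langle i, j \rangle + 1 \pmod{2}$ at position $j$. The all-$1$ codeword corresponds to $i = \vec{0}$, so removing it produces $\mathscr{H}^{\rm 1b}_{2^k}$, indexed by $i \in \mathbb{Z}_2^k \setminus \{\vec{0}\}$. Matching $\vec{a} \leftrightarrow i$ on rows and $\vec{\lambda} \leftrightarrow j$ on columns identifies $R_k$ with $\mathscr{H}^{\rm 1b}_{2^k}$ as codes, up to a permutation of variables. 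The statement for $L_k$ and $\mathscr{H}^{\rm 0b}_{2^k}$ follows by the same computation: the left half of $B_k$ uses the linear functions $\lambda_2 x_2 + \cdots + \lambda_{k+1} x_{k+1}$ (constant term $0$), yielding entries $\langle \vec{\lambda}, \vec{a} \rangle$, which match $\mathscr{H}^0_{2^k}$ under the map $1 \mapsto 0$, $-1 \mapsto 1$; the all-$0$ codeword corresponds to $i = \vec{0}$, precisely the removed row $\vec{0}^{2^k}$ of the left half.

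The proof is essentially notational, so the main thing to get right is the book-keeping of the index correspondence; in particular, one must verify that the ``no two variables identical'' clause in the butterfly definition is exactly what guarantees that every $\vec{\lambda} \in \mathbb{Z}_2^k$ appears as a column index exactly once on each side of $B_k$, which is what makes the column bijection well-defined.
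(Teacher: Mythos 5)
Your proof is correct. It takes a more computational route than the paper's: you unwind Sylvester's recursion to the closed form $(H_{2^k})_{ij}=(-1)^{\langle i,j\rangle}$ and match entries of the code against the tabulation $\langle\vec\lambda,\vec a\rangle$ (resp.\ $\langle\vec\lambda,\vec a\rangle+1$) that defines the halves of $B_k$, with the row/column bijections justified by the counting argument that the $2^{k+1}$ pairwise-distinct variables of $B_k$ must exhaust all affine functions of the $k$ free variables. The paper instead argues by characterization: it observes that $\mathscr H^0_{2^k}$ is linear (hence affine as a signature), has pairwise distinct rows by the minimum-distance property, and pairwise distinct columns because the Sylvester matrix is symmetric; since it has arity $2^k$, support size $2^k$, contains the all-$0$ word, and has constant coefficient $0$, it must coincide with the left half of $B_k$ by the uniqueness of the maximal affine signature with non-identical variables (the Remark after Definition~\ref{wings-of-butterfly}). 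Your version buys independence from the symmetry of $H_{2^k}$ and from the implicit uniqueness claim, at the cost of more index bookkeeping; the paper's version is shorter but leans on those two structural facts. Both correctly identify that deleting the all-$1$ (resp.\ all-$0$) codeword corresponds to deleting the $\vec a=\vec 0$ row of the relevant half.
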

\begin{proof}
        We only need to prove $\mathscr H_{2^k}^0$ is the left half of $B_k$. It is easy to check $\mathscr H_{2^k}^0$ is affine as a signature. By the definition of Hadamard matrix and Hadamard code, every two different rows of $\mathscr H_{2^k}^0$ take different values. By Sylvester's construction, $\mathscr H_{2^k}^0$ is symmetric as a matrix. Thus, every two different variables of $\mathscr H_{2^k}^0$ take different values in its support. Since $\mathscr H_{2^k}^0$ has arity $2^k$, support size $2^k$, and contains the all-0 vector in its support, it is precisely the affine signature of $k$ free variables with maximum arity, assuming every variable is a linear combination of free variables and any two variables are not identical.
        Therefore, $\mathscr H_{2^k}^{0b}=L_k$ is obtained by removing the all-0 row from $\mathscr H_{2^k}^0$.
    \end{proof}
  
    So far we have shown the equivalence of balanced Hadamard code and wings of butterfly. Next we focus on the relationship between basic kernels and wings of butterfly. For convenience, we only consider the basic $\delta_1$-affine kernel and the right wing of butterfly in the following. The other case is symmetric.

  \begin{lemma}\label{butterfly_is_kernel}
    Let $k\geq2$ and $R_k$ be the right wing of $B_k$, then $R_k$ is a basic $\delta_1$-affine kernel of order $k$.  
    \end{lemma}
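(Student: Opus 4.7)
The plan is to verify, one by one, the three requirements in Definition~\ref{basic-delta_1-affine-kernel} (and in the equivalent reformulation from the lemma just above it): that $R_k$ has arity $2^k$ and support size $2^k-1$, that $R_k = \delta_1 \otimes h$ with $\delta_1, \delta_0 \nmid h$, and that $h_i^0 \in \mathscr{A}$ for every $i \in A(h)$. The cases $k=1,2$ can be checked directly from Example~\ref{butterfly_example} (yielding $f_1$ and $f_2$), so the main content is the case $k \geq 3$.

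First I would read off arity and support size directly from Definition~\ref{wings-of-butterfly}: $B_k$ has $k$ free variables and arity $2^{k+1}$, so its right half has arity $2^k$ and support size $2^k$, and removing the all-$1$ row from the right half gives $R_k$ with support size $2^k-1$. The EO property is already in Lemma~\ref{non-singularity}.

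Next I would locate the $\delta_1$ factor. In the parametrization of Definition~\ref{wings-of-butterfly}, the variable $x_1$ is constant $0$, and $x_{1+2^k} = x_1 + 1$ is therefore constant $1$ throughout $B_k$, hence in particular throughout $R_k$. This gives a $\delta_1$ factor, so $R_k = \delta_1 \otimes h$. To rule out further $\delta_1$ or any $\delta_0$ factor, I appeal to Lemma~\ref{non-singularity}: any two distinct columns of $R_k$ are neither identical nor opposite. Since the $(1+2^k)$-th column is already the all-$1$ column, no other column can be all-$1$ (else identical to it) or all-$0$ (else opposite to it). Hence $\delta_1, \delta_0 \nmid h$.

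Finally I would verify the pinning condition using the affineness of the right half $\widehat{R_k}$ of $B_k$. For any variable $x_i$ of $h$ (that is, any column of $R_k$ other than the constant-$1$ column), the unique row of $\widehat{R_k}$ that is not already in $R_k$ is the all-$1$ row; pinning $x_i = 0$ therefore kills exactly that row, giving $(R_k)_i^0 = (\widehat{R_k})_i^0$. Since $\widehat{R_k}$ is affine and $\mathscr A$ is closed under pinning, $(R_k)_i^0 \in \mathscr A$; factoring out $\delta_1$ (affine is closed under factoring) then gives $h_i^0 \in \mathscr A$, completing the verification. I expect the only mildly delicate point to be the identification $(R_k)_i^0 = (\widehat{R_k})_i^0$, which hinges precisely on the observation that the row we added back to form $\widehat{R_k}$ has $x_i = 1$, and hence is erased by the pinning.
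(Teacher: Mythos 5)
Your proposal is correct and follows essentially the same route as the paper's (very terse) proof: the EO property and the absence of extra $\delta_1$ or $\delta_0$ factors come from Lemma~\ref{non-singularity}, and the pinning condition comes from the identity $(R_k)_i^0=(\widehat{R_k})_i^0$ together with the affineness of the right half $\widehat{R_k}$. You merely spell out the details (the location of the constant-$1$ column, the column non-identity/non-opposition argument, and why the all-$1$ row is killed by pinning to $0$) that the paper leaves implicit.
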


    \begin{proof}
        Note that $R_k=\mathscr H_{2^k}^{1b}$ by Lemma~\ref{Hadamard_is_butterfly}.
        So $R_k$ is an EO signature, and every two different rows and columns of $R_k$ do not take identical or opposite values.
        Note that $R_k=\delta_1\otimes r_k$, $\delta_0,\delta_1\nmid r_k$, and $({r_k})_i^0=(\widehat{r_k})_i^0\in \mathscr{A}$, since $\widehat{r_k}\in \mathscr{A}$. Finally, since $R_k$ has arity $2^k$ and support size $2^k-1$, it is a basic $\delta_1$-affine kernel of order $k$.
    \end{proof}

    \begin{corollary}\label{cor: non-singularity}
        If the basic $\delta_1$-affine kernel $f_k$ is unique (up to a permutation of variables), then no two distinct rows or columns in $\mathscr{S}(f_k)$ can have identical or opposite entries.
    \end{corollary}
    \begin{proof}
        Note that this property holds for $R_k$.
    \end{proof}
        \begin{lemma}\label{long_lemma}
        Suppose $k\in \Z_+,k\geq2$. If the $k$-th basic $\delta_1$-affine kernel $f_{k}$ is unique up to a permutation of variables, then any $(k+1)$-th basic $\delta_1$-affine kernel $f=\delta_1\otimes h$ satisfies the following properties for every $l\in I(h)$:
    \begin{enumerate}
        \item 
        $f^{x_l=1}=(f_k)_{\times 2}$ up to a permutation of variables.
        \item 
        $f^{x_l=0}=B_{k}$ up to a permutation of variables.
        \item 
        Two variables $x_i$ and $x_j$ take opposite values in $\mathscr{S}(f^{x_l=0})$ if and only if they take identical values in $\mathscr{S}(f^{x_l=1})$.
    \end{enumerate}
    \end{lemma}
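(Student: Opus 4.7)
The proof plan is to establish the three items in the order 1, 3, 2, so that item 1 supplies the structural identification $f^{x_l=1} = (f_k)_{\times 2}$ that drives both items 3 and 2. Throughout, the induction hypothesis together with Lemma~\ref{butterfly_is_kernel} and Lemma~\ref{non-singularity} identifies $f_k$ with the right wing $R_k$ of the butterfly — in particular $f_k$ has no two identical columns — while the proof of Lemma~\ref{multiple_kernel} tells us that a basic $\delta_1$-affine kernel has no two identical columns in its $h$-part.

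For item 1, Claim~\ref{support-size-4} of Lemma~\ref{size-theorem} gives $\delta_0 \nmid h_l^1$, hence $\delta_0 \nmid f^{x_l=1}$, so Lemma~\ref{extract-1-kernel} places $f^{x_l=1}$ in $K(\mathscr D_1)$. Its arity is $2^{k+1}$ and its support size is $(2^{k+1}-1) - 2^k = 2^k - 1$, so by Lemma~\ref{size-theorem} the multiplicity must be $m = 2$ and the basic order $k$. Lemma~\ref{multiple_kernel} then identifies $f^{x_l=1}$ as a $2$-multiple of a basic kernel of order $k$, which by the uniqueness hypothesis is $f_k$.

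For item 3, the $2$-multiple structure of $(f_k)_{\times 2}$ partitions the $2^{k+1}$ variables into $2^k$ identical pairs, while Lemma~\ref{pairwise-opposite} partitions them into $2^k$ opposite pairs on $f^{x_l=0}$; counting reduces item 3 to showing that each identical pair on $f^{x_l=1}$ is an opposite pair on $f^{x_l=0}$. The pair $\{x_1, x_l\}$, whose members are both constantly $1$ on $f^{x_l=1}$, is immediately opposite on $f^{x_l=0}$. For any other identical pair $\{x_i, x_j\}$ with $i, j \in A(h) \setminus \{l\}$, I pass to the affine signature $f_i^0 = \delta_1 \otimes h_i^0$: since $x_j$ is constantly $0$ on the $\{x_l = 1\}$-slice of $f_i^0$ and (using $\#_i^0(f^{x_l=0}) = \#_i^0(f^{x_l=1}) = 2^{k-1}$) $x_l$ is non-constant on $f_i^0$, the only affine possibilities are $x_j \equiv 0$ or $x_j \equiv 1 \oplus x_l$ on $f_i^0$. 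The first option forces $x_i = x_j$ on all of $\mathscr S(h)$, which is ruled out by the no-duplicate-columns property of basic kernels; hence $x_j \equiv 1 \oplus x_l$ on $f_i^0$, and a symmetric argument on $f_j^0$ yields $x_i \oplus x_j = 1$ on $f^{x_l=0}$.

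For item 2 it remains to show that $f^{x_l=0}$ — already affine, EO, of arity $2^{k+1}$ with $k$ free variables — has no two identical variables. Suppose for contradiction $x_a = x_b$ on $f^{x_l=0}$ with $a \neq b$; constancy of $x_1, x_l$ on $f^{x_l=0}$ forces $a, b \in A(h) \setminus \{l\}$. The same affine dichotomy applied to $h_a^0$ yields $x_b \equiv 0$ (again producing duplicate columns in $h$, contradiction) or $x_b \equiv x_l$ on $h_a^0$, and by symmetry $x_a \equiv x_l$ on $h_b^0$. Translating to $f^{x_l=1} = (f_k)_{\times 2}$, this asserts that the two columns $p, q$ of $f_k$ corresponding to the opposite pairs $\{x_a, x_{a'}\}$ and $\{x_b, x_{b'}\}$ never simultaneously take value $0$ on $\mathscr S(f_k)$. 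The main obstacle — and the genuinely non-trivial step — is turning this qualitative statement into a quantitative contradiction: invoking $f_k = R_k$ together with Lemma~\ref{size-theorem} applied to $f_k$ (each column of its core has exactly $2^{k-1}$ zeros among the $2^k - 1$ codewords), an inclusion–exclusion count gives $N_{pq}^{11} = (2^k - 1) - 0 - 2^{k-1} - 2^{k-1} = -1$, which is impossible. Hence $f^{x_l=0} = B_k$.
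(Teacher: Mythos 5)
Your overall architecture matches the paper's for item 1 (extract $x_l$ to $1$, check $\delta_0\nmid f^{x_l=1}$ via Lemma~\ref{size-theorem}, invoke Lemma~\ref{extract-1-kernel}, count the support, and apply Lemma~\ref{multiple_kernel} plus the uniqueness hypothesis), but your treatment of items 2 and 3 is genuinely different and cleaner. The paper proves item 2 by a case analysis: it first rules out an identical pair living inside one copy of $h_k$ by comparing columns of $h^{x_i=0}$ via Lemma~\ref{constant-weighted}, then uses Lemma~\ref{pairwise-opposite} to transfer a cross-copy identical pair back into a single copy and contradict Lemma~\ref{non-singularity}. Your replacement — the observation that an affine function vanishing on the half-slice $\{x_l=\epsilon\}$ of an affine support must be identically $0$ or equal to $x_l\oplus\epsilon\oplus 1$, combined with the inclusion--exclusion count $N^{11}=(2^k-1)-2^{k-1}-2^{k-1}+0=-1$ — is correct (the dichotomy holds because the complementary half-slice is a coset on which the affine function is forced constant) and dispenses with the same-copy/different-copy distinction entirely. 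Two small presentational debts: the reverse implication of item 3 (opposite $\Rightarrow$ identical) only closes once item 2 guarantees opposite partners are unique, so your final assembly should say so explicitly; and your reduction of "$a,b\in A(h)\setminus\{l\}$" from "constancy of $x_1,x_l$" tacitly uses that these are the \emph{only} constant columns of $f^{x_l=0}$, which the paper proves separately (via $\delta_1\nmid h_l^0$ and the fact that $(f_k)_{\times 2}$ has exactly two $\delta_1$ factors) — though your own zero-counts actually imply it.

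The one genuine gap is the base case $k=2$. When $k=2$, $f^{x_l=1}$ has support size $2^2-1=3$, so it is a \emph{trivial} kernel: Lemma~\ref{size-theorem} and Lemma~\ref{multiple_kernel} are both stated only for kernels of support size strictly greater than $3$, and neither can be applied to conclude $f^{x_l=1}=(f_2)_{\times 2}$. A support-size-$3$ EO signature of arity $8$ with two $\delta_1$ factors is not automatically a $2$-multiple of $f_2$; the paper closes this case by a separate direct argument (every column of $h_l^1$ is one of $(001)^T,(010)^T,(100)^T$, and the EO condition forces each to appear exactly twice). Relatedly, your counts $\#_i^0(f^{x_l=1})=2^{k-1}$ are obtained by applying Lemma~\ref{size-theorem} to $f_k$, which again fails for the trivial kernel $f_2$ — there the count must be verified by inspection. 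For $k\geq 3$ your proof stands; for $k=2$ it needs the paper's supplementary argument or an equivalent.
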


    \begin{proof}
        Let $f=\delta_1(x_0)\otimes h(x_1,x_2,...,x_{2^{k+1}-1})$. 
        We first prove the three claims for $k>2$.
        By Lemma~\ref{size-theorem},
        $\#_l^0(h)=2^k$ for every $l\in I(f)$. 
        Without loss of generality, we assume $l=1$ in the following argument. 

    \begin{claim}
         $f^{x_l=1}=(f_k)_{\times 2}$. 
    \end{claim}
    
    Up to a permutation of rows, we assume $x_1=0$ in the first $2^k$ rows of $\mathscr{S}(f)$, and $x_1=1$ in the last $2^k-1$ rows of $\mathscr{S}(f)$. 
    By Lemma~\ref{size-theorem}, $\delta_0\nmid h_1^1$. Then $\delta_0\nmid f^{x_1=1}$.
    By Lemma~\ref{extract-1-kernel}, $f^{x_1=1}\in K(\mathscr{D}_1)$. It has at least two $\delta_1$ factors $\delta_1(x_0)$ and $\delta_1(x_1)$. Let $f^{x_1=1}=\delta_1^{\otimes m}\otimes f'$ where $m\geq2$ and $\delta_1,\delta_0\nmid f'$.
    Since $|\mathscr{S}(f^{x_1=1})|=2^k-1>3$, we can apply Lemma~\ref{multiple_kernel} to get $f^{x_1=1}=(f_k)_{\times m}$. 
    Because $\text{\sc arity}(f_k)=2^k$ and $\text{\sc arity}(f^{x_1=1})=2^{k+1}$, we have $m=2$. Therefore, $f^{x_1=1}=(f_k)_{\times2}$, where $f_k$ is the unique basic $\delta_1$-affine kernel of order $k$.

    \begin{claim}
        $f^{x_1=0}=B_k$.
    
    \end{claim}

    Since $f^{x_1=0}$ is affine and has support size $2^k$ and arity $2^{k+1}$, we only need to show any two different variables do not take identical values in $\mathscr{S}(f^{x_1=0})$.
     First we show the $\delta_1$ and $\delta_0$ factor of $f^{x_1=0}$ is unique.
    By Lemma~\ref{size-theorem}, $\delta_1\nmid h_1^0$, so $\delta_1(x_0)$ is the unique $\delta_1$ factor of $f^{x_1=0}$. If $x_j$ takes constant $0$ in $\mathscr{S}(f^{x_1=0})$, then it takes constant $1$ in $\mathscr{S}(f^{x_1=1})$ since $\#_j^0(h)=2^k$. Because $f^{x_1=1}$ only has two $\delta_1$ factors $\delta_1(x_0)$ and $\delta_1(x_1)$, either $j=0$ or $j=1$. However, the former is impossible since $\delta_1(x_0)$ is a $\delta_1$ factor of $f$. Thus, we have $j=1$ and $\delta_0(x_1)$ is the only $\delta_0$ factor of $\mathscr{S}(f^{x_1=0})$.

    Next we show any two different variables $x_i,x_j(i<j)$ which do not take constant $0$ or $1$ in $\mathscr{S}(f^{x_1=0})$ do not take identical values in $\mathscr{S}(f^{x_1=0})$. We prove by contradiction. Let $f_k=\delta_1\otimes h_k$, and let $f_1^1(x_2,x_3,...,x_{2^k})$ and $f_1^1(x_{2^k+1},x_{2^k+2},...,x_{2^{k+1}-1})$ be two copies of $h_k$. 
    Moreover, assume $x_s$ and $x_{s+2^k-1}$ take identical values in $\mathscr{S}(f^{x_1=1})$ for every $2\leq s \leq 2^{k}$. 
    We first show $2\leq i\leq 2^k$ and $2^k+1\leq j\leq 2^{k+1}-1$ if $x_i$ and $x_j$ take identical values in $\mathscr{S}(f^{x_1=0})$. Otherwise, assume $2\leq i<j\leq 2^k$.
    Consider $h^{x_i=0}\in\mathscr{A}$. By Lemma~\ref{size-theorem}, its support size is $2^k$. 
    Compare the $i$-th and the $j$-th column of $\mathscr{S}(h^{x_i=0})$ - the first $2^{k-1}$ rows are both $0$s, the last $2^{k-1}$ rows of the $i$-th column are $0$s.
    By Lemma~\ref{constant-weighted}, the last $2^{k-1}$ rows of the $j$-th column must be all $0$s or all $1$s. In the former case, $x_i$ and $x_j$ take identical values in $\mathscr{S}(f_{k-1})$, as well as in $h_k$ because $2\leq i<j\leq 2^k$. 
    This contradicts Corollary~\ref{cor: non-singularity}.
    In the latter case, $\#_j^1(f^{x_1=1})\geq 2^{k-1}$, contradicting $\#_j^1(f^{x_1=1})=|\mathscr{S}(f^{x_1=1})|-\#_j^0(f^{x_1=1})=2^{k}-1-2^{k-1}=2^{k-1}-1$. 
    Therefore, $2\leq i\leq 2^k,2^k+1\leq j\leq2^{k+1}-1$. 
    Since $f^{x_1=0}$ is affine and EO, it is pairwise opposite by Lemma~\ref{pairwise-opposite}. Let $(x_i,x_i')$ be an opposite pair, then $(x_j,x_i')$ is also an opposite pair. 
    Since $x_i$ and $x_j$ are variables from different copies of $h_k$, there must be an opposite pair from the same copy of $h_k$. 
    Without loss of generality, assume $2\leq i < i'\leq 2^k$. 
    Compare the $i$-th and $i'$-th column of $\mathscr{S}(h^{x_i=0})$, the first $2^{k-1}$ rows are $0$s and $1$s respectively. The last $2^{k-1}$ rows of the $i$-th column are $0$s, so the $i'$-th column must be all $0$s or all $1$s. However, neither can happen by the same argument above. Therefore, we prove any two different variables do not take identical values in $\mathscr{S}(f^{x_1=0})$. It follows that $f^{x_1=0}=B_k$. 

    \begin{claim}
       $x_i$ and $x_j$ take opposite values in $\mathscr{S}(f^{x_1=0})$ iff they take identical values in $\mathscr{S}(f^{x_1=1})$. 
    \end{claim}
    
    If $x_i$ and $x_j$ both take constant $1$ in $\mathscr{S}(f^{x_1=1})$, they must take constant $0$ and $1$ respectively in $f^{x_i=0}$. If $x_i$ and $x_j$ take identical values other than constant $1$ in $\mathscr{S}(f^{x_1=1})$, consider $f^{x_i=0}\in \mathscr{A}$. 
    The last $2^{k-1}$ rows of $\mathscr{S}(f^{x_i=0})$ of the $i$-th and the $j$-th column are constant $0$s. The first $2^{k-1}$ rows of the $i$-th column is constant $0$. By Lemma~\ref{constant-weighted}, the $j$-th column must be all $1$s or all $0$s. However, the latter cannot happen, otherwise $x_j$ and $x_i$ take identical values in $\mathscr{S}(f^{x_1=0})$. Thus, the $j$-th column of the first $2^{k-1}$ rows in $\mathscr{S}(f^{x_i=0})$ must be all $1$s. Therefore, $x_i$ and $x_j$ take opposite values in $\mathscr{S}(f^{x_i=0})$. Conversely, we also have $x_i$ and $x_j$ take identical values in $\mathscr{S}(f^{x_1=1})$ if they take opposite values in $\mathscr{S}(f^{x_1=0})$.
    
    \vspace{1em}
    Finally, we discuss the case $k=2$. Let $f=\delta_1(x_0)\otimes h(x_2,...,x_7)$ be any basic $\delta_1$-affine kernel of order $3$. The above proof only fails in the first part when applying Lemma~\ref{multiple_kernel}. So we only need to show $f^{x_l=1}=(f_2)_{\times 2}$ for every $l\in I(f)$. By Lemma~\ref{size-theorem}, $\#_l^0(h)=4$ for every $l\in I(h)$. Without loss of generality, assume $l=1$, and $x_1=0$ in the first 4 rows of $\mathscr{S}(f)$, $x_1=1$ in the last 3 rows of $\mathscr{S}(f)$. We still have $f^{x_1=0}=B_2$ by the same argument above, then $\#_i^0(h_1^0)=2$ for every $i\in I(h_1^0)=\{2,3,4,5,6,7\}$. Then $\#_i^0(h_1^1)=4-2=2$. Thus, the six columns in $\mathscr{S}(h_1^1)$ must come from $\{(001)^T,(010)^T,(100)^T\}$. Note that none of them can appear more than 2 times. Otherwise, suppose there are 3 columns taking $(001)^T$, then there are five $1$s in the last row of $\mathscr{S}(f)$, contradicting that $f$ is an EO signature. However, there are six of them in total, so each must appear 2 times. Therefore, $h_1^1=\{(001),(010),(100)\}_{\times 2}$ and $f^{x_1=1}=(f_2)_{\times 2}$.
    \end{proof}

        \begin{lemma}\label{uniqueness}
        The basic $\delta_1$-affine kernel of order $k$, denoted by $f_k$, is precisely $R_k$ up to a permutation of variables.
    \end{lemma}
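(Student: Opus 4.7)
The plan is to prove the lemma by induction on $k$, invoking Lemma~\ref{long_lemma} and the uniqueness of $R_k$ from the inductive hypothesis. The base cases $k=1$ and $k=2$ follow directly from Definition~\ref{basic-delta_1-affine-kernel} and Example~\ref{butterfly_example}: $f_1 = \delta_1 \otimes \delta_0$ is the only basic kernel of order $1$ (forced by arity $2$ and support size $1$), and $f_2 = \delta_1 \otimes \{(001),(010),(100)\}$ is uniquely forced because any EO kernel of arity $4$ and support size $3$ with a $\delta_1$ factor has a $1$-weighted residue on three coordinates. Both equal $R_1$ and $R_2$ respectively by inspection.

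For the inductive step, assume $f_k = R_k$ holds up to variable permutation, and let $f = \delta_1(x_0)\otimes h$ be any basic $\delta_1$-affine kernel of order $k+1$. By Lemma~\ref{butterfly_is_kernel}, $R_{k+1}$ is also a basic kernel of order $k+1$, so Lemma~\ref{long_lemma} (which requires the uniqueness of $f_k$, i.e.\ the inductive hypothesis) applies to both. Pinning a common variable $x_l$, I obtain for each of $f$ and $R_{k+1}$ the identities $(\cdot)^{x_l=0}=B_k$ and $(\cdot)^{x_l=1}=(R_k)_{\times 2}$, together with property (3): opposite pairs in the top half correspond exactly to identical pairs in the bottom half.

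The alignment of $f$ with $R_{k+1}$ then proceeds in two stages. First, since $B_k$ is unique up to variable permutation (immediate from Definition~\ref{butterfly}, which determines it as the maximal affine signature on $k$ free variables with no two identical variables), I apply a permutation $\sigma$ of variables, fixing $x_0$ and $x_l$, so that $\sigma \cdot f^{x_l=0} = R_{k+1}^{x_l=0}$. This aligns the opposite-pair partitions in the top and, by property (3) of Lemma~\ref{long_lemma}, simultaneously aligns the identical-pair partitions in the bottom. Second, picking any transversal of these identical pairs, the bottom sub-signature on the transversal is a basic $\delta_1$-affine kernel of order $k$ in each case, and by the inductive hypothesis equals $R_k$. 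The identical-pair constraint then forces the remaining bottom columns to coincide with those of $R_{k+1}$, yielding $f^{x_l=1} = R_{k+1}^{x_l=1}$ and hence $f=R_{k+1}$.

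The main obstacle I anticipate is the second stage: the inductive hypothesis provides $R_k$ only up to a further variable permutation, while the labeling of the transversal has already been fixed by the first-stage alignment. The resolution is to exploit the automorphism group of $B_k$: any permutation that preserves the aligned top $B_k$ permutes the opposite pairs and thus induces a permutation of the transversal; I plan to show that this action is rich enough to absorb any residual discrepancy between the two copies of $R_k$ on the transversal, so that a suitable automorphism of $B_k$ composed with $\sigma$ aligns both halves simultaneously without disturbing the top. The rigidity supplied by Lemma~\ref{long_lemma}(3), combined with the structural symmetry of $B_k$, should make this extension routine.
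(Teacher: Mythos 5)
Your base cases and your use of Lemma~\ref{long_lemma} in the inductive step match the paper's setup, but the heart of your argument --- the ``second stage'' where you align the bottom halves --- contains a genuine gap that you yourself flag and then defer (``I plan to show \dots should make this extension routine''). The gap is not routine, and as stated your resolution would fail. After aligning the tops, the bottom of $f$ is determined by a bijection $\phi$ placing a copy of $R_k$ onto the transversal of the identical pairs; the inductive hypothesis only controls this placement up to an arbitrary relabeling, i.e.\ up to $\mathrm{Aut}(R_k)$. The automorphisms of $B_k$ that you propose to use act on the set of opposite pairs essentially as $GL(k,2)$ acts on the nonzero linear functionals: this group is transitive on the $2^k-1$ nontrivial pairs but is vastly smaller than the full symmetric group on them, so it cannot ``absorb any residual discrepancy'' between two placements of $R_k$. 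Using only the data supplied by Lemma~\ref{long_lemma} for a \emph{single} pinned variable $x_l$, there is nothing that confines $\phi$ to a single $GL(k,2)$-orbit, so the claimed alignment does not follow.

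What is missing is precisely the mechanism the paper uses to pin down the bottom placement: it proves that $\widehat{f_{k+1}}$ (the signature with the all-$1$ row adjoined) is \emph{affine}, by a contradiction argument on triples $\alpha',\beta',\gamma'$ of support vectors --- reducing to the affine halves $(\widehat{f_{k+1}})^{x_i=0}=B_k$ and $(\widehat{f_{k+1}})^{x_i=1}=(\widehat{f_k})_{\times 2}$ whenever the three vectors agree in some coordinate, and otherwise deriving contradictions from Lemma~\ref{non-singularity} (no two rows or columns of $R_k$ are identical or opposite). Affineness of $\widehat{f_{k+1}}$ then forces the assembled signature $B'_{k+1}$ to be the butterfly $B_{k+1}$, which is what actually rigidifies the relative labeling of top and bottom. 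To repair your proof you would need either to carry out this affineness argument or to supply an equally strong global constraint (e.g.\ exploiting the kernel condition at \emph{every} variable, not just one $x_l$); the symmetry of $B_k$ alone does not suffice.
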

    
    \begin{proof}
    For $k=1$, we have $f_1=\delta_1\otimes\delta_0$ and this lemma is clear. For $k=2$, $f_2=\delta_1\otimes \{(001),(010),(100)\}$ and this lemma is true by Example~\ref{butterfly_example}. Now assume this lemma holds for positive integers less than or equal to $k(k\geq 2)$. By Lemma~\ref{long_lemma}, we know any $(k+1)$-th $\delta_1$-affine kernel $f_{k+1}=\delta_1\otimes h_{k+1}$ satisfies the following three properties:
    
\NiceMatrixOptions{xdots={line-style = standard}}
$$f_{k+1}=\begin{NiceArray}{*{8}c}[hvlines]
\Block{1-4}{} 0& 0& \cdots& 0& \Block{1-4}{} 1& 1& \cdots& 1\\
\Block{4-4}{\overline{f_k}} & & & & \Block{4-4}{f_k} & & &\\
& & & & & & &\\
& & & & & & &\\
& & & & & & &\\
\Block{4-4}{f'_k} & & & & \Block{4-4}{f'_k} & & &\\
& & & & & & &\\
& & & & & & &\\
& & & & & & &\\
\end{NiceArray}$$

\begin{enumerate}
    \item 
    Extracting any variable of $h_{k+1}$ to $0$, up to a permutation of rows, the first $2^k$ rows of $f_{k+1}$ is butterfly $B_k$.
    \item 
    Extracting any variable of $h_{k+1}$ to $1$, up to a permutation of rows, the last $2^k-1$ rows is $(f_k)_{\times 2}$. However, when we write the upper butterfly into the specific form shown in the above picture, we fix the order of columns. Then we denote the last $2^k-1$ rows by two $f'_k$, where $f'_k=f_k$ up to a permutation of rows and columns.
    \item 
    Every two different variables take identical values in $\mathscr{S}((f_{k+1})^{x_i=1})$ if and only if they take opposite values in $\mathscr{S}((f_{k+1})^{x_i=0})$, $i\in I(h_{k+1})$.
\end{enumerate}
     
     Because every two different columns of $f_k$ (as well as $f'_k$) do not take identical or opposite values by Corollary~\ref{cor: non-singularity}, we can easily check this proposition still holds for $f_{k+1}$. Next we show the signature $\widehat{f_{k+1}}$ is affine. We first notice $\widehat{f_{k+1}}$ has the following properties, which holds for every $i\in I(\widehat{f_{k+1}})$:
\begin{enumerate}
    \item $(\widehat{f_{k+1}})^{x_i=0}=(f_{k+1})^{x_i=0}=B_k\in \mathscr{A}$.
    \item 
    $(\widehat{f_{k+1}})^{x_i=1}=(\widehat{f_k})_{\times 2}\in \mathscr{A}$, since $f_k$ is the right wing of $B_k$ by induction hypothesis, hence $\widehat{f_k}$ is the right half of $B_k$, which is affine.
\end{enumerate}

We prove $\widehat{f_{k+1}}\in \mathscr A$ by contradiction. If $\widehat{f_{k+1}}\notin \mathscr{A}$, there exists $\alpha',\beta',\gamma'\in \mathscr{S}(\widehat{f_{k+1}})$, such that $\theta'=\alpha'\oplus\beta'\oplus\gamma'\notin\mathscr{S}(\widehat{f_{k+1}})$. If there exists a position $j$ such that $\alpha'_j=\beta'_j=\gamma'_j=0$, assume $j=1$ without loss of generality. Then $\alpha'=0\alpha,\beta'=0\beta,\gamma'=0\gamma,\theta'=0\theta$, where $\alpha,\beta,\gamma\in\mathscr{S}(\widehat{(f_{k+1}})_j^0)$ and $\theta=\alpha\oplus\beta\oplus\gamma$. Because $(\widehat{f_{k+1}})_j^0\in\mathscr{A}$, we have $\theta\in\mathscr{S}((\widehat{f_{k+1}})_j^0)$. Then $\theta'=0\theta\in\mathscr{S}(\widehat{f_{k+1}})$, contradiction! Similarly, if there exists a position $j$ such that $\alpha'_j=\beta'_j=\gamma'_j=1$, we also get a contradiction.

    We enumerate all possible combinations of $\alpha',\beta',\gamma'\in\mathscr{S}({\widehat{f_{k+1}}})$, but exclude cases in which they all come from $\mathscr{S}(({\widehat{f_{k+1}}})^{x_1=0})$ or $\mathscr{S}(({\widehat{f_{k+1}}})^{x_1=1})$, since both of them are affine. In the following proof, vectors $\vec{0}$ and $\vec{1}$ are short for $\vec{0}^{2^k}$ and $\vec{1}^{2^k}$.

    \begin{enumerate}
        \item $\alpha'=\vec{1}\vec{1},\beta'=\overline{\beta}\beta,\gamma'=\overline{\gamma}\gamma$, where $\beta,\gamma\in \mathscr{S}(f_k)$. Then the $(2^k+1)$-th position of $\alpha',\beta',\gamma'$ are all $1$s, contradiction.
        \item $\alpha'=\vec{1}\vec{1},\beta'=\overline{\beta}\beta,\gamma'=\gamma\gamma$, where $\beta\in 
        \mathscr{S}(f_k),\gamma\in \mathscr{S}(f'_k)$. Because $\alpha',\beta',\gamma'$ cannot take identical values in the same position, if $\beta_j=1$, $\gamma_j$ must be $0$; if $\beta_j=0,\overline{\beta_j}=1$, $\gamma_j$ must be $0$. Thus, $\gamma=\vec{0}$. This contradicts $f_k=R_k$ which is EO.
        \item 
        $\alpha'=\overline{\alpha}\alpha,\beta'=\overline{\beta}\beta,\gamma'=\gamma\gamma$, where $\alpha,\beta\in\mathscr{S}(f_k),\gamma\in \mathscr{S}(f'_k)$. If there exists $j$ such that $\alpha_j=\beta_j=x\in\mathbb{Z}_2$, then $\gamma_j$ must be $\overline{x}$. 
        However, in this case, we have $\overline{\alpha}_j=\overline{\beta}_j=\gamma_j=\overline{x}$, then $\alpha',\beta',\gamma'$ take identical values in the same position. 
        So $\alpha=\overline{\beta}$. 
        This contradicts Corollary~\ref{cor: non-singularity}.
        \item 
        $\alpha'=\overline{\alpha}\alpha,\beta'=\beta\beta,\gamma'=\gamma\gamma$, where $\alpha\in \mathscr{S}(f_k),\beta,\gamma\in \mathscr{S}(f'_k)$. If there exists $j$ such that $\beta_j=\gamma_j$, whatever $\alpha_j$ takes, there will be a position where $\alpha',\beta',\gamma'$ take identical values. If $\beta=\overline{\gamma}$, there will be two rows taking opposite values in $f'_k$. Again this contradicts Corollary~\ref{cor: non-singularity}.
    \end{enumerate}

    By the above discussion, we show $\widehat{f_{k+1}}\in \mathscr{A}$. Thus, the signature
\NiceMatrixOptions{xdots={line-style = standard}}
$$B'_{k+1}=\begin{NiceArray}{*{8}c}[hvlines]
\Block{1-4}{} 0& 0& \cdots& 0& \Block{1-4}{} 1& 1& \cdots& 1\\
\Block{5-4}{\overline{f_{k+1}}} & & & & \Block{5-4}{f_{k+1}} & & &\\
& & & & & & &\\
& & & & & & &\\
& & & & & & &\\
& & & & & & &\\
\end{NiceArray}$$
is affine. Notice $B'_{k+1}$ is of arity $2^{k+2}$ and support size $2^{k+1}$. 
Moreover, every two different columns of $B'_{k+1}$ do not take identical values, since every two different columns of $f_{k+1}$ do not take identical or opposite values. Thus, $B'_{k+1}$ is butterfly of $k+1$ free variables. Therefore, $f_{k+1}=R_{k+1}$, which is the right wing of $B_{k+1}$.
\end{proof}

So far by Lemma~\ref{butterfly_is_kernel} and Lemma~\ref{uniqueness} we prove the equivalence between basic kernels and wings of butterfly. Combining with Lemma~\ref{Hadamard_is_butterfly} we get the proof of Theorem~\ref{main-theorem}.

 \begin{example}
We give basic $\delta_1$-affine kernels of order 2, 3, and 4 constructed using Sylvester's method, where the all-1 vector in the first row is removed from the 1-Hadamard code.

\NiceMatrixOptions{xdots={line-style = dashed}}
$$f_{2}=
\begin{bNiceArray}{*{4}c}[margin,first-row]

\textcolor{gray}{1}& \textcolor{gray}{1}& \textcolor{gray}{1}& \textcolor{gray}{1}\\
1& 0& 1& 0\\
1& 1& 0& 0\\
1& 0& 0& 1\\

\end{bNiceArray}$$

\NiceMatrixOptions{xdots={line-style = standard}}
$$f_3=
\begin{bNiceArray}{*{4}c|*{4}c}[margin,first-row]
\textcolor{gray}{1}& \textcolor{gray}{1}& \textcolor{gray}{1}& \textcolor{gray}{1}& \textcolor{gray}{1}& \textcolor{gray}{1}& \textcolor{gray}{1}& \textcolor{gray}{1}\\
1& 0& 1& 0 & 1& 0& 1& 0\\
1& 1& 0& 0 & 1& 1& 0& 0\\
1& 0& 0& 1 & 1& 0& 0& 1\\\hline
1& 1& 1& 1 & 0& 0& 0& 0\\
1& 0& 1& 0 & 0& 1& 0& 1\\
1& 1& 0& 0 & 0& 0& 1& 1\\
1& 0& 0& 1 & 0& 1& 1& 0\\

\end{bNiceArray}$$

\NiceMatrixOptions{xdots={line-style = standard}}
$$f_4=
\begin{bNiceArray}{*{8}c|*{8}c}[margin,first-row]
\textcolor{gray}{1}& \textcolor{gray}{1}& \textcolor{gray}{1}& \textcolor{gray}{1}& \textcolor{gray}{1}& \textcolor{gray}{1}& \textcolor{gray}{1}& \textcolor{gray}{1}& \textcolor{gray}{1}& \textcolor{gray}{1}& \textcolor{gray}{1}& \textcolor{gray}{1}& \textcolor{gray}{1}& \textcolor{gray}{1}& \textcolor{gray}{1}& \textcolor{gray}{1}\\
1& 0& 1& 0 & 1& 0& 1& 0&   1& 0& 1& 0 & 1& 0& 1& 0\\
1& 1& 0& 0 & 1& 1& 0& 0&   1& 1& 0& 0 & 1& 1& 0& 0\\
1& 0& 0& 1 & 1& 0& 0& 1&   1& 0& 0& 1 & 1& 0& 0& 1\\
1& 1& 1& 1 & 0& 0& 0& 0&   1& 1& 1& 1 & 0& 0& 0& 0\\
1& 0& 1& 0 & 0& 1& 0& 1&   1& 0& 1& 0 & 0& 1& 0& 1\\
1& 1& 0& 0 & 0& 0& 1& 1&   1& 1& 0& 0 & 0& 0& 1& 1\\
1& 0& 0& 1 & 0& 1& 1& 0&   1& 0& 0& 1 & 0& 1& 1& 0\\\hline

1& 1& 1& 1 & 1& 1& 1& 1&   0& 0& 0& 0 & 0& 0& 0& 0\\
1& 0& 1& 0 & 1& 0& 1& 0&   0& 1& 0& 1 & 0& 1& 0& 1\\
1& 1& 0& 0 & 1& 1& 0& 0&   0& 0& 1& 1 & 0& 0& 1& 1\\
1& 0& 0& 1 & 1& 0& 0& 1&   0& 1& 1& 0 & 0& 1& 1& 0\\
1& 1& 1& 1 & 0& 0& 0& 0&   0& 0& 0& 0 & 1& 1& 1& 1\\
1& 0& 1& 0 & 0& 1& 0& 1&   0& 1& 0& 1 & 1& 0& 1& 0\\
1& 1& 0& 0 & 0& 0& 1& 1&   0& 0& 1& 1 & 1& 1& 0& 0\\
1& 0& 0& 1 & 0& 1& 1& 0&   0& 1& 1& 0 & 1& 0& 1& 1\\

\end{bNiceArray}$$
  
\end{example}

\section{Hardness result}\label{sec:hardness}
In this section, we show that \#EO$(f,g)$ is \#P-hard if $f\in \mathscr{D}_1\setminus\mathscr{A}$ and $g\in \mathscr{D}_0\setminus\mathscr{A}$ (Theorem~\ref{thm: EO f g is hard}), which directly implies Theorem~\ref{thm:hard}.
The proof relies on the characterization of $\delta_1$-affine and $\delta_0$-affine kernels and consists of three parts. 
First, from Lemma~\ref{interpolation} to Corollary~\ref{reduction-to-EO^c}, we show how to realize the binary signature $\delta_1\otimes\delta_0$ from an EO signature that does not satisfy the arrow reversal symmetry (ARS) property. 
Second, from Lemma~\ref{hardness_1} to Lemma~\ref{hardness_3}, we prove that \#EO$(f',g')$ is \#P-hard where $f'\in \mathscr{D}_1\setminus\mathscr{A}$ and $g'\in \mathscr{D}_0\setminus\mathscr{A}$ both have support size $3$. Finally, in Lemma~\ref{realize-extract-1} and Theorem~\ref{thm: EO f g is hard}, we show that every such pair of signatures $f'$ and $g'$ can be realized from arbitrary signatures $f\in \mathscr{D}_1\setminus\mathscr{A}$ and $g\in \mathscr{D}_0\setminus\mathscr{A}$, respectively, using the binary signature $\delta_1\otimes\delta_0$.

   Only in this section we require signatures taking real values. An EO signature $f$ of 2 arity is written as
    $f=\left[\begin{smallmatrix}
    0 & a\\
    b & 0\\
    \end{smallmatrix}\right]$, where $a,b\in \mathbb{R}$ if $f(01)=a,f(10)=b$.
    {\sc Disequality} is the 2 arity signature $(\neq_2)=\left[\begin{smallmatrix}
    0 & 1\\
    1 & 0\\
    \end{smallmatrix}\right]$;
    {\sc Minus Disequality} is the 2 arity signature
    $(\neq_2^-)=\left[\begin{smallmatrix}
    0 & 1\\
    -1 & 0\\
    \end{smallmatrix}\right]$.
    The symbol "$\leq_T$" denotes Turing reduction.
    
     \begin{definition}[ARS, -ARS]
    A signature $f$ taking real values satisfies ARS (or -ARS) if $f(\alpha)=f(\overline{\alpha})$ (or $f(\alpha)=-f(\overline{\alpha})$) for every $\alpha\in \mathscr S(f)$.
    \end{definition}
    
    \begin{lemma}\label{interpolation}
        Let $f=\left[\begin{smallmatrix}
    0 & 1\\
    x & 0\\
    \end{smallmatrix}\right]$ 
       be a binary signature taking real values, where $x\neq \pm 1$. Then \#EO$(f,\delta_1\otimes\delta_0)\leq_T$ \#EO$(f)$.
    \end{lemma}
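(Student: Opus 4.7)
The plan is a standard polynomial interpolation argument. Fix an instance $\Omega$ of $\#\text{EO}(f,\delta_1\otimes\delta_0)$ and let $n$ be the number of vertices of $\Omega$ labeled by $\delta_1\otimes\delta_0$. For each integer $k\geq 1$, I would build a gadget $G_k$ by chaining $k$ copies of $f$ in series (joined by $k-1$ internal edges, leaving two dangling edge-ends), and then form a modified instance $\Omega_k$ of $\#\text{EO}(f)$ by replacing each $\delta_1\otimes\delta_0$-vertex of $\Omega$ by a copy of $G_k$. Each $\Omega_k$ has polynomial size (the number of added vertices is $n(k-1)$), so $Z(\Omega_k)$ can be obtained by a single call to the $\#\text{EO}(f)$ oracle. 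A short induction on $k$, using the EO convention that the two endpoints of each internal edge receive opposite values (so the sum over each internal edge picks either the $(0,1)$ or the $(1,0)$ orientation), shows that $G_k$ realizes the $2$-ary signature
\[
f^{(k)} \;=\; \begin{bmatrix} 0 & 1 \\ x^k & 0 \end{bmatrix}.
\]

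Next I would set up the interpolation. Classify EO configurations of $\Omega$ by the number $j$ of the $n$ gadget-slots that receive the local pattern $(1,0)$, and let $c_j$ denote the corresponding partial sum, weighted by the product of the $f$-values at every non-gadget vertex. The original constraint $\delta_1\otimes\delta_0$ forces all $n$ slots to $(1,0)$ and contributes value $1$ at each, so the target quantity $\#\text{EO}(f,\delta_1\otimes\delta_0)(\Omega)$ equals $c_n$. Since inserting $f^{(k)}$ at each slot contributes $x^k$ at pattern $(1,0)$ and $1$ at pattern $(0,1)$, one obtains
\[
Z(\Omega_k) \;=\; \sum_{j=0}^{n} x^{kj}\, c_j \qquad \text{for } k=1,2,\ldots,n+1.
\]

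Finally I would invert this $(n{+}1)\times(n{+}1)$ linear system. Its coefficient matrix $M$ with entries $M_{kj}=(x^k)^j$ is a Vandermonde matrix in the nodes $x^1,x^2,\ldots,x^{n+1}$, with determinant $\prod_{1\leq i<l\leq n+1}(x^l-x^i)$. Assuming $x\neq 0$ (the degenerate case $x=0$ gives $f=\delta_0\otimes\delta_1$, in which case $\delta_1\otimes\delta_0$ is realized trivially by swapping the order of the two incident edges at each such vertex), the hypothesis $x\neq\pm 1$ ensures that $x^1,\ldots,x^{n+1}$ are pairwise distinct, so $M$ is invertible; solving the system recovers $c_n$ in polynomial time, completing the Turing reduction. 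The main technical point is the gadget verification itself: one must handle the EO sign convention carefully when composing the $k$ copies of $f$ so that the chain gadget has entry exactly $x^k$ (rather than $x^{k-1}$ or an alternating-sign analogue) in its $(1,0)$ position; once this is settled, the Vandermonde inversion is textbook.
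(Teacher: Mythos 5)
Your proposal is correct and is exactly the standard chain-gadget/Vandermonde interpolation argument that the paper itself invokes (the paper gives no details, simply citing Lemma~2.3 of \cite{CFS21} for this "standard technique"); your composition check giving $f^{(k)}=\left[\begin{smallmatrix}0&1\\x^k&0\end{smallmatrix}\right]$ and your separate handling of $x=0$ are both right, and $x\neq 0,\pm1$ is precisely what makes the nodes $x^1,\dots,x^{n+1}$ distinct.
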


    \begin{proof}
        This follows from a standard interpolation technique. See Lemma 2.3 in \cite{CFS21}.
    \end{proof}
     
    \begin{lemma}\label{ARS-lemma}
        Let $f$ be an \rm{EO} signature of arity $2n\geq 4$ taking real values. Then $f$ satisfies ARS (or -ARS) if and only if $f^{x_i\neq x_j}$ satisfies ARS (or -ARS), for every $i\neq j, i,j\in I(f)$.        
    \end{lemma}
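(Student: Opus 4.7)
The plan is to handle the two directions separately. The forward direction is a direct computation, while the reverse direction uses a short three-index swap argument that requires $n\geq 2$.

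For the forward direction, suppose $f$ satisfies ARS (the -ARS case only flips a sign). For any $\beta$ and any pair $i\neq j$, I would expand
$$f^{x_i\neq x_j}(\beta) = f(0_i,1_j,\beta) + f(1_i,0_j,\beta),$$
where $(0_i,1_j,\beta)$ denotes the length-$2n$ vector with $0$ at position $i$, $1$ at position $j$, and $\beta$ at the remaining $2n-2$ coordinates. Applying ARS of $f$ termwise gives $f(0_i,1_j,\beta)=f(1_i,0_j,\overline{\beta})$ and $f(1_i,0_j,\beta)=f(0_i,1_j,\overline{\beta})$, and summing yields $f^{x_i\neq x_j}(\beta)=f^{x_i\neq x_j}(\overline{\beta})$.

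For the reverse direction, first observe that ARS of $f^{x_i\neq x_j}$ extends trivially to all $\beta$ (not only the support): if $f^{x_i\neq x_j}(\beta)=0$ but $f^{x_i\neq x_j}(\overline{\beta})\neq 0$, then $\overline{\beta}$ would be in the support but ARS applied at $\overline{\beta}$ forces its value to equal $0$. Now define $d(\alpha)=f(\alpha)-f(\overline{\alpha})$. Rewriting the identity $f^{x_i\neq x_j}(\beta)=f^{x_i\neq x_j}(\overline{\beta})$ in terms of $f$ gives, for every $\alpha$ of weight $n$ and every pair $i\neq j$ with $\alpha_i\neq \alpha_j$, the swap relation $d(\alpha)=-d(\alpha_{ij})$, where $\alpha_{ij}$ denotes $\alpha$ with coordinates $i$ and $j$ exchanged.

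The main step is to deduce $d(\alpha)=0$ for every $\alpha\in\mathscr{S}(f)$. Since $\mathrm{wt}(\alpha)=n\geq 2$, I can pick three distinct indices $i,j,k$ with $\alpha_i=0$, $\alpha_j=1$, $\alpha_k=0$ (the mirror case is symmetric). Applying the swap relation to $\alpha$ with pairs $(i,j)$ and $(j,k)$ yields $d(\alpha_{ij})=d(\alpha_{jk})$. Applying it instead to $\alpha_{ij}$ with the pair $(i,k)$, which is permissible because $(\alpha_{ij})_i=1\neq 0=(\alpha_{ij})_k$, gives $d(\alpha_{ij})=-d((\alpha_{ij})_{ik})$; a direct check of the three coordinates shows $(\alpha_{ij})_{ik}=\alpha_{jk}$, so $d(\alpha_{ij})=-d(\alpha_{jk})=-d(\alpha_{ij})$. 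Hence $d(\alpha_{ij})=0$, and then $d(\alpha)=-d(\alpha_{ij})=0$, proving ARS for $f$. The -ARS case is identical with $s(\alpha)=f(\alpha)+f(\overline{\alpha})$ replacing $d(\alpha)$: the swap relation becomes $s(\alpha)=-s(\alpha_{ij})$ and the same three-index cancellation forces $s(\alpha)=0$.

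The main obstacle is the reverse direction, and the trick is the three-index cancellation that realises $\alpha_{jk}$ in two different ways as a swap of $\alpha_{ij}$, pitting the sign of $d$ against itself. This fundamentally requires $\alpha$ to have at least two zeros (or two ones), which is exactly what the hypothesis $2n\geq 4$ guarantees.
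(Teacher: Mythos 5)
Your proposal is correct and follows essentially the same route as the paper: the forward direction is the same termwise application of ARS, and your three-index swap relation $d(\alpha)=-d(\alpha_{ij})$ applied to the three arrangements $\alpha,\alpha_{ij},\alpha_{jk}$ is exactly the paper's odd-cycle argument with the three vectors $001\theta,010\theta,100\theta$, just phrased via the difference function $d$ and as a direct proof rather than by contradiction. The extension of ARS from the support to all inputs, which you note explicitly, is implicit in the paper.
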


    \begin{proof}
        We prove only the ARS version, as the -ARS version follows similarly.
        
        First, assume $f$ satisfies ARS. For any $\alpha\in\mathcal{H}_{2n-2}$ and $i,j$, we have $f^{x_i\neq x_j}(\alpha)=f_{ij}^{01}(\alpha)+f_{ij}^{10}(\alpha)$ and $f^{x_i\neq x_j}(\overline{\alpha})=f_{ij}^{01}(\overline{\alpha})+f_{ij}^{10}(\overline{\alpha})$. Since $f$ satisfies ARS, we have $f_{ij}^{01}(\alpha)=f_{ij}^{10}(\overline{\alpha})$ and $f_{ij}^{10}(\alpha)=f_{ij}^{01}(\overline{\alpha})$. Therefore, $f^{x_i\neq x_j}(\alpha)=f^{x_i\neq x_j}(\overline{\alpha})$, showing $f^{x_i\neq x_j}$ satisfies ARS.

        Conversely, assume $f^{x_i\neq x_j}$ satisfies ARS. We prove by contradiction that $f$ must satisfy ARS. Suppose there exists $\alpha\in \mathcal{H}_{2n}$ such that $f(\alpha)\neq f(\overline{\alpha})$. By relabeling variables, we may write $\alpha=001\theta,\overline{\alpha}=110\overline{\theta}$. Let $\beta=010\theta,\overline{\beta}=101\overline{\theta}$ and $\gamma=100\theta,\overline{\gamma}=011\overline{\theta}$. All these six vectors are in $\mathcal{H}_{2n}$. Then $f^{x_2\neq x_3}(0\theta)=f(\alpha)+f(\beta)$ and $f^{x_2\neq x_3}(1\overline{\theta})=f(\overline{\alpha})+f(\overline{\beta})$. Since $f^{x_2\neq x_3}$ satisfies ARS, we have $f(\alpha)+f(\beta)=f(\overline{\alpha})+f(\overline{\beta})$. Similarly, $f(\beta)+f(\gamma)=f(\overline{\beta})+f(\overline{\gamma})$ and $f(\gamma)+f(\alpha)=f(\overline{\gamma})+f(\overline{\alpha})$. These three equations imply $f(\alpha)=f(\overline{\alpha})$, contradicting our assumption.
    \end{proof}

     \begin{lemma}\label{ARS-ARS-lemma}
        Let $f$ be an \rm{EO} signature of arity $2n\geq 4$ taking real values. If for any $i\neq j, i,j\in I(f)$, both $f^{x_i\neq x_j}$ and $f^{x_i\neq^- x_j}$ satisfy either ARS or -ARS, then $f$ satisfies either ARS or -ARS.
    \end{lemma}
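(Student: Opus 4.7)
The plan is to combine the information given by both hypothesized symmetries at each pair into a local linear constraint on $f_{ij}^{01}$ and $f_{ij}^{10}$, and then show that the collection of local constraints propagates to a uniform ARS or $-$ARS structure on all of $\mathcal{H}_{2n}$. For each pair $i\neq j$ in $A(f)$, write $s_{ij}, t_{ij}\in\{+1,-1\}$ for the signs such that $f^{x_i\neq x_j}$ is $s_{ij}$-ARS and $f^{x_i\neq^- x_j}$ is $t_{ij}$-ARS, where $+1$ denotes ARS, $-1$ denotes $-$ARS, and a zero looping is assigned a sign arbitrarily. Expanding the two defining equations and taking their sum and difference yields, for every $\gamma\in\mathcal{H}_{2n-2}$,
\begin{align*}
2f_{ij}^{01}(\gamma) &= (s_{ij}+t_{ij})f_{ij}^{01}(\overline{\gamma}) + (s_{ij}-t_{ij})f_{ij}^{10}(\overline{\gamma}),\\
2f_{ij}^{10}(\gamma) &= (s_{ij}-t_{ij})f_{ij}^{01}(\overline{\gamma}) + (s_{ij}+t_{ij})f_{ij}^{10}(\overline{\gamma}).
\end{align*}
The four choices of $(s_{ij},t_{ij})$ split into two regimes: the off-diagonal cases $(1,-1)$ and $(-1,1)$ reduce to $f_{ij}^{01}(\gamma)=\pm f_{ij}^{10}(\overline{\gamma})$, which say precisely that $f$ satisfies ARS (resp.\ $-$ARS) on every complementary pair $(\alpha,\overline{\alpha})$ with $\alpha_i\neq\alpha_j$; the diagonal cases $(1,1)$ and $(-1,-1)$ instead make each of $f_{ij}^{01}$ and $f_{ij}^{10}$ individually ARS or $-$ARS as an affine function of $\gamma$.

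Next I argue that two different off-diagonal signs cannot coexist across pairs. If, say, $(s_{ij},t_{ij})=(1,-1)$ and $(s_{i'j'},t_{i'j'})=(-1,1)$, then any $\alpha\in\mathcal{H}_{2n}$ with $\alpha_i\neq\alpha_j$ and $\alpha_{i'}\neq\alpha_{j'}$ simultaneously satisfies $f(\alpha)=f(\overline{\alpha})$ and $f(\alpha)=-f(\overline{\alpha})$, and hence $f(\alpha)=0$. Using a swap argument in the spirit of the proof of Lemma~\ref{ARS-lemma}, i.e.\ picking triples of half-weighted vectors that differ from $\alpha$ in a controlled way so each triple activates a new pair, the vanishing propagates to every $\alpha\in\mathcal{H}_{2n}$, forcing $f\equiv 0$ (which trivially satisfies ARS) unless all off-diagonal pairs share a single sign. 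A diagonal case at $(i,j)$ combined with an off-diagonal case at some $(i',j')$ yields additional linear identities relating $f(\alpha)$ to $f(\alpha')$ for controlled partial complements $\alpha'$ of $\alpha$, and these identities can likewise be aggregated to collapse the diagonal case to the same global sign as the off-diagonal one. Hence without loss of generality $s_{ij}=+1$ for every pair (or $s_{ij}=-1$ for every pair), and Lemma~\ref{ARS-lemma} (with its $-$ARS analog) then delivers that $f$ itself satisfies ARS (resp.\ $-$ARS).

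The main obstacle I expect is the base case $2n=4$. There the chaining swap arguments needed to propagate vanishings and to transfer information between pairs require at least three coordinates outside the fixed pair $(i,j)$, which only becomes available for $2n\geq 6$. For arity $4$, however, there are only six half-weighted inputs and six unordered pairs, and the conclusion can be verified directly by a linear-algebraic case analysis: the six unknowns $f(\alpha)$ for $\alpha\in\mathcal{H}_4$, together with the twelve scalar equations arising from the hypothesis at each of the six pairs (one for the $\neq$-looping and one for the $\neq^-$-looping), are expected to pin down either $f(\alpha)=f(\overline{\alpha})$ for every complementary pair, $f(\alpha)=-f(\overline{\alpha})$ for every complementary pair, or the trivial solution, which is the desired dichotomy.
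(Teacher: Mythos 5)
Your local linear analysis of the two looping identities is correct, and in the off-diagonal cases $(s_{ij},t_{ij})=(1,-1)$ or $(-1,1)$ you do recover the complement relation $f(\alpha)=\pm f(\overline{\alpha})$ on inputs with $\alpha_i\neq\alpha_j$. The genuine gap is the diagonal case $(s_{ij},t_{ij})=(1,1)$ or $(-1,-1)$: there the two identities decouple into $f_{ij}^{01}(\gamma)=\pm f_{ij}^{01}(\overline{\gamma})$ and $f_{ij}^{10}(\gamma)=\pm f_{ij}^{10}(\overline{\gamma})$, which relate $f(\alpha)$ to $f$ at a point agreeing with $\overline{\alpha}$ everywhere \emph{except} in positions $i,j$ --- a non-complementary point. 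Turning such relations into a relation between $f(\alpha)$ and $f(\overline{\alpha})$ requires chaining across several pairs whose signs could a priori interfere, and your one-sentence assertion that these identities ``can likewise be aggregated to collapse the diagonal case to the same global sign as the off-diagonal one'' is exactly the hard content of the lemma in this configuration; moreover, if \emph{every} pair is diagonal there is no off-diagonal sign to collapse to, a case you never address. The companion claim that conflicting off-diagonal signs force $f\equiv 0$ is also only gestured at (for $\alpha$ with $\alpha_i=\alpha_j$ the two conflicting pairs say nothing directly). Finally, you explicitly leave the base case $2n=4$ as a computation that ``is expected to'' work, so the lemma is simply unproved at arity $4$ in your write-up.

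For contrast, the paper's proof sidesteps all sign bookkeeping: it squares both hypothesis identities to obtain the sign-free relation $f_{ij}^{01}(\beta)^2+f_{ij}^{10}(\beta)^2=f_{ij}^{01}(\overline{\beta})^2+f_{ij}^{10}(\overline{\beta})^2$, pins one further variable and adds the three rotations of $(i,j,k)$ to isolate $f(\alpha)^2=f(\overline{\alpha})^2$ pointwise (this already works at arity $4$), and only then runs a minimal-Hamming-distance argument to show the pointwise sign $f(\alpha)=\pm f(\overline{\alpha})$ is uniform. If you wish to salvage your route, the diagonal case and the global uniformity of the sign are where the real work remains.
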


    \begin{proof}
        For any $i\neq j$ and $\beta\in \mathcal{H}_{2n-2}$, since $f^{x_i\neq x_j}$ satisfies ARS or -ARS, we have
        \begin{equation}
            (f_{ij}^{01}(\beta)+f_{ij}^{10}(\beta))^2=(f_{ij}^{01}(\overline{\beta})+f_{ij}^{10}(\overline{\beta}))^2
        \end{equation}
        Since $f^{x_i\neq^- x_j}$ satisfies ARS or -ARS, we have
        \begin{equation}
            (f_{ij}^{01}(\beta)-f_{ij}^{10}(\beta))^2=(f_{ij}^{01}(\overline{\beta})-f_{ij}^{10}(\overline{\beta}))^2
        \end{equation}
        Adding (1) and (2), we obtain
        \begin{equation}
            f_{ij}^{01}(\beta)^2+f_{ij}^{10}(\beta)^2=f_{ij}^{01}(\overline{\beta})^2+f_{ij}^{10}(\overline{\beta})^2
        \end{equation}
        Pinning 0 to any variable $x_k$ of $\beta=x_k\beta'$,
        \begin{equation}
            f_{ijk}^{010}(\beta')^2+f_{ijk}^{100}(\beta')^2=
            f_{ijk}^{101}(\overline{\beta'})^2+f_{ijk}^{011}(\overline{\beta'})^2
        \end{equation}
    By rotating $i,j,k$, we have
    \begin{equation}
        \begin{aligned}
            &f_{ijk}^{001}(\beta')^2+f_{ijk}^{010}(\beta')^2=
            f_{ijk}^{110}(\overline{\beta'})^2+f_{ijk}^{101}(\overline{\beta'})^2\\
            &f_{ijk}^{100}(\beta')^2+f_{ijk}^{001}(\beta')^2=
            f_{ijk}^{011}(\overline{\beta'})^2+f_{ijk}^{110}(\overline{\beta'})^2
        \end{aligned}
    \end{equation}
    From (4) and (5), we obtain
    \begin{equation}
    \begin{aligned}
        &f_{ijk}^{100}(\beta')^2=f_{ijk}^{011}(\overline{\beta'})^2\\
        &f_{ijk}^{010}(\beta')^2=f_{ijk}^{101}(\overline{\beta'})^2\\
        &f_{ijk}^{001}(\beta')^2=f_{ijk}^{110}(\overline{\beta'})^2\\
    \end{aligned}
    \end{equation}
    Since $i,j,k$ and $\beta$ are arbitrarily chosen, we have $f(\alpha)^2=f(\overline{\alpha})^2$, thus $f(\alpha)=\pm f(\overline{\alpha})$, for any $\alpha\in \mathcal{H}_{2n}$. 

    Next we show that the choice of $+$ or $-$ is uniform across all $\alpha$, i.e., either $f(\alpha)=f(\overline{\alpha})$ for every $\alpha\in \mathcal{H}_{2n}$ or $f(\alpha)=-f(\overline{\alpha})$ for every $\alpha\in \mathcal{H}_{2n}$. We prove by contradiction. If not, there exist $\alpha,\beta\in \mathcal{H}_{2n}$ such that $f(\alpha)\neq f(\overline{\alpha})$ and $f(\beta)\neq -f(\overline{\beta})$. Then $f(\alpha)=-f(\overline{\alpha})\neq 0$ and $f(\beta)=f(\overline{\beta})\neq0$. Choose such $\alpha,\beta$ with minimum Hamming distance. We claim wt$(\alpha\oplus\beta)=2$. Otherwise, suppose wt$(\alpha\oplus\beta)=2k,k>1$. We can write $\alpha=x_1x_2...x_{2k}\gamma$ and $\beta=\overline{x_1x_2...x_{2k}}\gamma$ by relabeling variables. Since $\alpha,\beta\in \mathcal{H}_{2n}$, we must have wt$(x_1x_2...x_{2k})=k$. We may assume $x_1\neq x_2$. Let $\alpha'=\overline{x_1}\overline{x_2}x_3...x_{2k}\gamma$ and $\beta'=x_1x_2\overline{x_3...x_{2k}}\gamma$, then $\alpha',\beta'\in \mathcal{H}_{2n}$. We must have $f(\alpha')=0$. Otherwise, either $f(\alpha')=f(\overline{\alpha'})\neq 0$ or $f(\alpha')=-f(\overline{\alpha'})\neq 0$. In the first case, $f(\alpha')\neq -f(\overline{\alpha'})$ and wt$(\alpha\oplus\alpha')=2<2k$, contradicting minimality. In the second case, $f(\alpha')\neq f(\overline{\alpha'})$ and wt$(\alpha'\oplus\beta)=2k-2<2k$, also contradicting minimality. Thus, $f(\alpha')=f(\overline{\alpha'})=0$ and similarly $f(\beta')=f(\overline{\beta'})=0$. Consider:
    \begin{equation}
        \begin{aligned}
            &f^{x_1\neq x_2}(x_3...x_{2k}\gamma)=f(\alpha)+f(\alpha')=f(\alpha)\\
            &f^{x_1\neq x_2}(\overline{x_3...x_{2k}\gamma})=f(\overline{\alpha})+f(\overline{\alpha'})=f(\overline{\alpha})\\
            &f^{x_1\neq x_2}(\overline{x_3...x_{2k}}\gamma)=f(\beta)+f(\beta')=f(\beta)\\
            &f^{x_1\neq x_2}(x_3...x_{2k}\overline{\gamma})=f(\overline{\beta})+f(\overline{\beta'})=f(\overline{\beta})\\
        \end{aligned}
    \end{equation}
    By assumption $f^{x_1\neq x_2}$ satisfies ARS or -ARS. If ARS, we have $f(\alpha)=f(\overline{\alpha})$ by (7), contradicting $f(\alpha)\neq f(\overline{\alpha})$.
    If -ARS, we have $f(\beta)=-f(\overline{\beta})$ by (7), contradicting $f(\beta)\neq -f(\overline{\beta})$. Thus wt$(\alpha\oplus\beta)=2$. Write $\alpha=01\theta$ and $\beta=10\theta$. Then $f^{x_1\neq x_2}(\theta)=f(\alpha)+f(\beta)$ and $f^{x_1\neq x_2}(\overline{\theta})=f(\overline{\alpha})+f(\overline{\beta})=f(\alpha)-f(\beta)$. Since $f^{x_1\neq x_2}$ satisfies ARS or -ARS, we get $(f(\alpha)+f(\beta))^2=(f(\alpha)-f(\beta))^2$, thus $f(\alpha)f(\beta)=0$, contradicting $f(\alpha)\neq0$ and $f(\beta)\neq 0$. Therefore $f(\alpha)=\pm f(\overline{\alpha})$ uniformly for every $\alpha\in \mathcal{H}_{2n}$, meaning $f$ satisfies either ARS or -ARS.
    \end{proof}
     
     \begin{lemma}
       Let $f$ be an \rm{EO} signature of arity $2n\geq2$ taking real values. If $f$ does not satisfy ARS or -ARS, then \#EO$(f,\delta_1\otimes\delta_0)\leq_T$ \#EO$(f)$.   
    \end{lemma}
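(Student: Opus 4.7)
I proceed by induction on the arity $2n$. For the base case $n=1$, write $f=\left[\begin{smallmatrix}0&a\\b&0\end{smallmatrix}\right]$; the hypothesis that $f$ is neither ARS nor $-$ARS amounts to $a\neq\pm b$. If $ab=0$ (say $b=0$), then $f$ is a scalar multiple of $\delta_0\otimes\delta_1$ and we extract $\delta_1\otimes\delta_0$ directly. Otherwise we normalize $f$ to $\left[\begin{smallmatrix}0&1\\b/a&0\end{smallmatrix}\right]$ with $b/a\neq\pm 1$ and invoke Lemma~\ref{interpolation}.

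For the inductive step $n\geq 2$, the goal is to reduce to an EO signature of arity $2n-2$ that still fails both ARS and $-$ARS, so the induction hypothesis finishes the job. By the contrapositive of Lemma~\ref{ARS-ARS-lemma}, there exist $i\neq j$ such that at least one of $f^{x_i\neq x_j}$ or $f^{x_i\neq^- x_j}$ is neither ARS nor $-$ARS.

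\textbf{Case 1:} Some $f^{x_i\neq x_j}$ is neither ARS nor $-$ARS. Self-looping $f$ by connecting $x_i$ and $x_j$ through the built-in $\neq_2$ realizes this signature inside $\#\mathrm{EO}(f)$, and induction applies.
\textbf{Case 2:} Every $f^{x_i\neq x_j}$ is $\pm$ARS, but some $f^{x_i\neq^- x_j}$ is not $\pm$ARS. Here I first need to realize the 2-arity signature $\neq^-$. Since $f$ is neither ARS nor $-$ARS, Lemma~\ref{ARS-lemma} forces a mix of ARS and $-$ARS among the $\neq$-loops, so some $f^{x_{i_0}\neq x_{j_0}}$ is a nonzero $-$ARS signature of arity $2n-2$. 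Iterating $\neq$-loops preserves $-$ARS by Lemma~\ref{ARS-lemma}, so after $n-2$ such loops we obtain a 2-arity $-$ARS signature, necessarily of the form $c\cdot(\neq^-)$. Once we know $c\neq 0$, we have $\neq^-$ available (up to a nonzero scalar), and connecting $x_i$ with $x_j$ on a fresh copy of $f$ through $\neq^-$ realizes $f^{x_i\neq^- x_j}$, which is not $\pm$ARS by assumption; induction then applies.

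\textbf{The main obstacle} is showing that in Case 2 the 2-arity $-$ARS signature extracted is nonzero, i.e.\ that some choice of $\neq$-loop sequence yields $c\neq 0$ rather than the identically zero signature. This is a combinatorial non-degeneracy problem, because summing $-$ARS values over loop-compatible assignments can conspire to cancel; explicitly, for any loop pattern leaving $x_a,x_b$ free, $c=\sum g(\beta_1\overline{\beta_1}\cdots 01)$, and $-$ARS forces $g^{\mathrm{loops}}(10)=-g^{\mathrm{loops}}(01)$ but not non-vanishing of $g^{\mathrm{loops}}(01)$. I expect one can resolve this by choosing the loop pairs greedily (at each step, pick a pair $(i',j')$ for which the reduced signature stays nonzero, which is possible because otherwise a linear-algebraic argument on the support of the $-$ARS signature would force it to be identically zero), or by a dimension/Vandermonde-type argument showing that some pairing must avoid total cancellation. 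Any such non-vanishing guarantee closes the proof, and the remainder reduces to routine applications of the induction hypothesis.
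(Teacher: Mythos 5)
Your base case and your Case 1 match the paper, and your use of the contrapositives of Lemma~\ref{ARS-lemma} and Lemma~\ref{ARS-ARS-lemma} is the right skeleton. But the obstacle you flag in Case 2 is a genuine gap, not a technicality: to realize $\neq^-$ you propagate the invariant ``nonzero and $-$ARS'' down a chain of $\neq$-loops, and Lemma~\ref{ARS-lemma} only tells you that $-$ARS is preserved --- it gives you no control over whether the loop annihilates the signature. Since a $-$ARS signature can perfectly well have all values cancel in a given loop ($g_{ij}^{01}(\beta)+g_{ij}^{10}(\beta)=0$ for all $\beta$), you would need a separate non-vanishing argument, which you sketch only as a hope (``greedy choice'' or ``Vandermonde-type argument''). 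As written, the proof does not close.

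The paper sidesteps this entirely by propagating a stronger invariant that is self-certifying: ``does not satisfy ARS.'' By the if-and-only-if in Lemma~\ref{ARS-lemma}, if the current signature is not ARS then some $\neq$-loop of it is again not ARS, and a not-ARS signature is automatically nonzero, so the chain never degenerates. At arity $2$ one gets $f_2$ with $f_2(01)\neq f_2(10)$, i.e.\ (up to a constant and a swap of variables) $\left[\begin{smallmatrix}0&1\\x&0\end{smallmatrix}\right]$ with $x\neq 1$. The key dichotomy is then: either $x\neq -1$, in which case Lemma~\ref{interpolation} already yields $\delta_1\otimes\delta_0$ and the whole lemma is proved without ever needing $\neq^-$; or $x=-1$, in which case $\neq^-$ has been realized for free. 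Only in the latter case does one invoke Lemma~\ref{ARS-ARS-lemma}, and then one chains down the invariant ``not ARS and not $-$ARS'' (again automatically nonzero) to arity $2$, where $x\neq\pm1$ and interpolation finishes. If you replace your Case 2 with this dichotomy, your argument becomes the paper's proof.
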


    \begin{proof}
        If $n=1$, then \#EO$(f,\delta_1\otimes\delta_0)\leq_T$ \#EO$(f)$ by Lemma~\ref{interpolation}. If $n>1$, there exists $i\neq j$ such that $f^{x_i\neq x_j}$ does not satisfy ARS by Lemma~\ref{ARS-lemma}. Continuing this process, we realize a binary signature $f_2$ which does not satisfy ARS. Suppose $f_2=\left[\begin{smallmatrix}
        0 & 1\\
        x & 0\\
        \end{smallmatrix}\right]$ (up to a constant), where $x\neq 1$. 
        If $x\neq -1$, then it reduces to the case when $n=1$. If $x=-1$, we realize $\neq_2^{-}$. By Lemma~\ref{ARS-ARS-lemma}, there exists $i\neq j$ such that $f^{x_i\neq x_j}$ does not satisfy ARS or -ARS, or $f^{x_i\neq^- x_j}$ does not satisfy ARS or -ARS. Continuing to loop by $\neq_2$ or $\neq_2^-$, we realize a binary signature $f_2$ which does not satisfy ARS or -ARS. Then it reduces to the case when $n=1$.
    \end{proof}

    \begin{corollary}\label{reduction-to-EO^c}
        Let $f$ be an \rm{EO} signature of arity $2n\geq2$ taking 0-1 values. If $f$ does not satisfy ARS, then \#EO$(f,\delta_1\otimes\delta_0)\leq_T$ \#EO$(f)$. 
    \end{corollary}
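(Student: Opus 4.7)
The plan is to reduce this corollary directly to the preceding lemma, which gives the same conclusion under the hypothesis that $f$ satisfies neither ARS nor $-$ARS. So I only need to argue that, for a 0-1 valued EO signature $f$, failing ARS automatically implies failing $-$ARS as well.

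First I would observe that any nonzero 0-1 valued signature $f$ cannot satisfy $-$ARS. Indeed, by definition $-$ARS requires $f(\alpha) = -f(\overline{\alpha})$ for every $\alpha \in \mathscr{S}(f)$. For such an $\alpha$ we have $f(\alpha) = 1$, forcing $f(\overline{\alpha}) = -1$; but $f$ takes values in $\{0,1\}$, a contradiction. Thus $\mathscr{S}(f) = \emptyset$, i.e., $f$ is identically zero, in which case the problem $\#\mathrm{EO}(f, \delta_1\otimes\delta_0)$ is trivial (every instance outputs $0$) and the reduction is vacuous.

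Next, assuming $f$ is nonzero and does not satisfy ARS, the previous paragraph shows $f$ also does not satisfy $-$ARS. The hypothesis of the preceding lemma is therefore met: $f$ is a real-valued EO signature of arity $2n \geq 2$ satisfying neither ARS nor $-$ARS. Applying that lemma directly yields $\#\mathrm{EO}(f, \delta_1 \otimes \delta_0) \leq_T \#\mathrm{EO}(f)$, which is precisely the corollary.

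The only mildly delicate point is the trivial-signature edge case, and that is handled by noting the reduction is vacuously true when $f \equiv 0$. Otherwise the argument is a one-line observation plus invocation of the preceding lemma, so I do not expect any real obstacle.
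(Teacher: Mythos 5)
Your proposal is correct and is exactly the intended justification: the paper states this as an immediate corollary of the preceding lemma without further proof, relying precisely on the observation that a nonzero $0$-$1$ valued signature cannot satisfy $-$ARS (and the zero signature vacuously satisfies ARS, so it is excluded by hypothesis anyway). No issues.
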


    Next, we will prove \#EO$(f,g)$ is \#P-hard if $f\in \mathscr{D}_1\backslash\mathscr{A}$ and $g\in \mathscr{D}_0\backslash\mathscr{A}$.

    \begin{lemma}\label{hardness_1}
    
    Let 
        $$
    f_2=
    \begin{bNiceArray}{cccc}[first-row,margin]
         x_1 & x_2 & x_3 & x_4\\
         1 &1 & 0& 0\\
         1 &0 & 1& 0\\
         1 &0 & 0& 1
    \end{bNiceArray}
   , \quad 
    g_2=
    \begin{bNiceArray}{cccc}[first-row,margin]
         y_1 & y_2 & y_3 & y_4\\
         0 &0 & 1& 1\\
         0 &1 & 0& 1\\
         0 &1 & 1& 0
    \end{bNiceArray}
    $$
    then \#EO($f_2,g_2$) is \#P-hard.
    \end{lemma}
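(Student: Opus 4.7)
The plan is to reduce from the problem of counting all Eulerian orientations on $4$-regular graphs, i.e., $\mathrm{\#EO}(g_4)$, where $g_4$ is the symmetric $4$-ary signature whose support is all six weight-$2$ vectors in $\{0,1\}^4$; this problem is $\mathrm{\#P}$-hard by Mihail--Winkler~\cite{MW}. The first step is to unlock pinning. Observe that $f_2$ violates ARS since $f_2(1100)=1\neq 0=f_2(0011)$, so by Corollary~\ref{reduction-to-EO^c} the signature $\delta_1\otimes\delta_0$ is realizable from $\{f_2\}$ alone, and Lemma~\ref{realize-extract-1} then grants access to every single-variable pinning $(f_2)^{x_i=a}$ and $(g_2)^{y_j=b}$.

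The central gadget uses one copy of $f_2(x_1,x_2,x_3,x_4)$ and one copy of $g_2(y_1,y_2,y_3,y_4)$, connected by two internal edges: one between $x_1$ and $y_1$, the other between $x_2$ and $y_2$; the remaining four half-edges $(x_3,x_4,y_3,y_4)$ are external. The first internal edge is automatically satisfied because $x_1\equiv 1$ on $\mathscr{S}(f_2)$ and $y_1\equiv 0$ on $\mathscr{S}(g_2)$. Enumerating over $x_2\neq y_2$, the branch $(x_2,y_2)=(1,0)$ forces $(x_3,x_4,y_3,y_4)=(0,0,1,1)$, while the branch $(x_2,y_2)=(0,1)$ yields the four combinations in $\{(1,0),(0,1)\}\times\{(0,1),(1,0)\}$. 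Hence the gadget realizes a $4$-ary EO signature $\phi$ whose support is exactly the five weight-$2$ vectors $\{0011,0101,0110,1001,1010\}$, i.e., every weight-$2$ vector except $1100$.

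To finish the reduction, upgrade $\phi$ to the symmetric $g_4$. The missing vector $1100$ on $(x_3,x_4,y_3,y_4)$ coincides with $\delta_1\otimes\delta_1\otimes\delta_0\otimes\delta_0$, which is realizable via pinning. A three-internal-edge variant of the central gadget, identifying $x_1\sim y_1$, $x_3\sim y_3$, and $x_4\sim y_4$, produces the $2$-ary weighted disequality $\left[\begin{smallmatrix}0 & 2\\ 1 & 0\end{smallmatrix}\right]$ on $(x_2,y_2)$, supplying a real-valued $2{:}1$ edge weight after factoring out the standard $\delta_1\otimes\delta_0$. Combining the $\phi$-gadget with $k$ parallel copies of the pinning-gadget for $1100$, each weighted via this disequality, yields a family of signatures of the form $\phi + \lambda_k\,\chi_{1100}$ with infinitely many distinct values of $\lambda_k$; polynomial interpolation in $\lambda_k$ then extracts $g_4$ as the evaluation at $\lambda_k=1$, completing the reduction from $\mathrm{\#EO}(g_4)$ to $\mathrm{\#EO}(f_2,g_2)$.

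The main obstacle is the interpolation step: one must verify that the parameterized family has all $\lambda_k$ pairwise distinct (a Vandermonde-type nondegeneracy) and that the proposed gadget-graph indeed realizes $\phi + \lambda_k\,\chi_{1100}$ at every occurrence of a would-be $g_4$-vertex in the target instance. An alternative route, bypassing interpolation, is to build sibling gadgets realizing $\phi$-variants that each omit a different weight-$2$ vector (via permutations of the internal identifications, or by exchanging the roles of $f_2$ and $g_2$), and then obtain $g_4$ via a small linear combination in the six-dimensional real span of $4$-ary half-weighted signatures. Either route yields the desired $\mathrm{\#P}$-hardness.
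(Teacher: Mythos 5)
Your central gadget is exactly the one the paper uses: connecting $x_1$ to $y_1$ and $x_2$ to $y_2$ with disequalities (the first connection being vacuous since $x_1\equiv 1$ and $y_1\equiv 0$) yields the $4$-ary EO signature on $(x_3,x_4,y_3,y_4)$ supported on the five weight-$2$ strings $\{0011,1001,1010,0101,0110\}$, and your case analysis of the two branches of $x_2\neq y_2$ is correct. The paper stops right there: this five-support signature is a six-vertex-model signature (values $(0,1,1,1,1,1)$ on the six half-weight strings), and its \#P-hardness follows directly from the Cai--Fu--Xia dichotomy for the six-vertex model \cite{cfx}, which the paper cites. Everything after your second paragraph is therefore unnecessary.

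It is also where the genuine gap lies. You propose to upgrade $\phi$ to the fully symmetric $g_4$ by realizing a family $\phi+\lambda_k\,\chi_{1100}$ and interpolating, but ``combining the $\phi$-gadget with $k$ parallel copies of the pinning-gadget for $1100$'' does not realize a pointwise \emph{sum} of two $4$-ary signatures on the same external wires: gadget composition produces a sum over internal edge assignments of \emph{products} of signature values, so parallel placement gives (after merging dangling edges) products, not sums. To genuinely interpolate you would need a recursive construction whose transfer matrix has eigenvalues with distinct ratios, and you exhibit no such construction nor verify the Vandermonde nondegeneracy you yourself flag as the obstacle. The fallback of taking ``a small linear combination in the six-dimensional real span'' is likewise not a realizability argument in the Holant/\#EO framework. (Your three-internal-edge computation producing the weighted disequality $\left[\begin{smallmatrix}0&2\\1&0\end{smallmatrix}\right]$ is correct and does unlock pinning via Lemma~\ref{interpolation}, but it does not rescue the sum construction.) The repair is simply to delete the second half of your argument and invoke the six-vertex model dichotomy on $\phi$ itself.
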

    
    \begin{proof}
    Notice \#EO($f_2,g_2$)$\equiv_T$Holant($\neq_2|f_2,g_2$). Loop $x_1$ and $y_1$, $x_2$ and $y_2$ using $\neq_2$, we realize
    $$
    h=
    \begin{bNiceArray}{cccc}[first-row,margin]
         x_3 & x_4 & y_3 & y_4\\
         0 &0 & 1& 1\\
         1 &0 & 0& 1\\
         1 &0 & 1& 0\\
         0 &1 & 0& 1\\
         0 &1 & 1& 0
    \end{bNiceArray}
    $$
    \#EO($h$) is \#P-hard by the six-vertex model \cite{cfx}.
     \end{proof}

     \begin{lemma}\label{hardness_2}
         Let $f=(f_2)_{\times a},g=(g_2)_{\times b}$, where $a,b\in \Z_+$. Then \#EO$(f,g)$ is \#P-hard.
     \end{lemma}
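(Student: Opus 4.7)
The plan is to reduce the $\#P$-hard problem \#EO$(f_2, g_2)$ from Lemma~\ref{hardness_1} to \#EO$(f, g)$ via a copy-and-merge construction. Set $L := \mathrm{lcm}(a, b)$. Given an instance $\Omega'$ of \#EO$(f_2, g_2)$, I will form $\Omega$ by taking $L$ disjoint copies of $\Omega'$ and, for every $f_2$-vertex $u$, partitioning its $L$ clones into $L/a$ consecutive groups of size $a$ and merging each group into a single vertex labeled $f = (f_2)_{\times a}$; the four incident edges of the $j$-th clone of $u$ inside a group become the $j$-th block of four variables of $(f_2)_{\times a}$. Symmetrically, each $g_2$-vertex is replaced by $L/b$ vertices labeled $g$, each formed by merging a consecutive group of $b$ clones.

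I will then prove \#EO$(\Omega) = $ \#EO$(\Omega')$ by exhibiting a bijection between valid orientations. Lifting any valid $\sigma$ on $\Omega'$ by duplication to all $L$ copies produces a valid $\tau$ on $\Omega$, since at each merged $f$-vertex the resulting $4a$-tuple is the $a$-fold repetition of $\sigma|_u \in \mathscr{S}(f_2)$, which lies in $\mathscr{S}((f_2)_{\times a}) = \mathscr{S}(f)$, and analogously for $g$-vertices. Conversely, given a valid $\tau$ on $\Omega$, the structure of $\mathscr{S}(f)$ forces $\tau|_u$ to be constant across the $a$ clones inside every $f$-group, and the same holds for the $g$-groups. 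For each edge $e = (u, v)$ of $\Omega'$ and each $1 \leq i < L$, clones $i$ and $i+1$ lie in the same $f$-group of $u$ whenever $a \nmid i$ and in the same $g$-group of $v$ whenever $b \nmid i$; since $L = \mathrm{lcm}(a, b)$ and $i < L$ together preclude $i$ from being a common multiple of $a$ and $b$, at least one condition forces $\tau(e_i) = \tau(e_{i+1})$. Iterating along $i = 1, \ldots, L-1$ shows that $\tau$ is constant across all $L$ clones, so it comes from a unique $\sigma$.

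The hard part will be that this bijection only applies when every edge of $\Omega'$ runs between an $f_2$-vertex and a $g_2$-vertex, so I must argue that the \#P-hardness of \#EO$(f_2, g_2)$ survives this bipartite restriction. To do so, I would revisit the proof of Lemma~\ref{hardness_1}: the gadget there realizes $h$ by pairing one $f_2$-vertex with one $g_2$-vertex via two internal $\neq_2$ edges, keeping $x_3, x_4$ from the $f_2$-side and $y_3, y_4$ from the $g_2$-side as the four external wires. Given any $4$-regular graph $G_h$ on which \#EO$(h)$ is \#P-hard, Euler's theorem yields an Eulerian orientation of $G_h$; at each vertex, attach its two outgoing edges to the two $f_2$-side external wires and its two incoming edges to the two $g_2$-side external wires. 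Then every edge of the resulting \#EO$(f_2, g_2)$-instance runs between an $f_2$-vertex and a $g_2$-vertex, exactly the bipartite form required by the copy-and-merge reduction. Combining this with the $L$-copy construction proves Lemma~\ref{hardness_2}.
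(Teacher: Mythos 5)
Your copy-and-merge bijection is a genuinely different route from the paper, which instead loops blocks of $f$ against blocks of $g$ with $\neq_2$ to run a Euclidean algorithm down to $(f_2)_{\times d}$ and $(g_2)_{\times d}$ with $d=\gcd(a,b)$, and then realizes a $4$-ary signature that is hard by the six-vertex model. However, your reduction has a genuine gap at exactly the step you flagged: the claim that \#EO$(f_2,g_2)$ remains \#P-hard on bipartite instances. Your proposed argument takes a $4$-regular graph $G_h$, fixes an Eulerian orientation of it, and attaches the two outgoing edges of each vertex to the $f_2$-side wires $x_3,x_4$ and the two incoming edges to the $g_2$-side wires $y_3,y_4$. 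This re-attachment is not value-preserving: the signature $h$ realized in Lemma~\ref{hardness_1} has support ``all weight-$2$ strings except $1100$'' and is therefore invariant only under permutations of its variables that preserve $\{x_3,x_4\}$ and $\{y_3,y_4\}$ setwise (a transposition such as $x_4\leftrightarrow y_3$ moves the excluded string from $1100$ to $1010$ and yields a different function). Re-wiring the external edges according to an Eulerian orientation thus changes which signature sits at each vertex and changes the value of the instance. What your construction actually establishes is that the bipartite instances of \#EO$(f_2,g_2)$ encode the restriction of \#EO$(h)$ to instances whose edge-to-variable assignments are induced by a reference Eulerian orientation; the \#P-hardness of the unrestricted \#EO$(h)$ does not transfer to this subfamily without further argument.

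This missing step is load-bearing, because your main bijection genuinely fails on non-bipartite instances: for an edge joining two $f$-labeled vertices, whenever $a\mid i$ the clones $i$ and $i+1$ lie in different groups at both endpoints, so nothing forces $\tau(e_i)=\tau(e_{i+1})$, and \#EO$(\Omega)$ overcounts \#EO$(\Omega')$. The rest of the proposal (the lifting map, the group-constancy forced by $\mathscr{S}((f_2)_{\times a})$, and the chain argument using that no $1\leq i<\mathrm{lcm}(a,b)$ is a common multiple of $a$ and $b$) is correct. The paper's gadget-based proof sidesteps the bipartiteness issue entirely, since a signature realized by a gadget may be placed in arbitrary instances.
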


    \begin{proof}
        We loop $f$ and $g$ using $\neq_2$ block by block. That is, loop $x_i$ and $y_i$ for $1\leq i\leq 4$, where $x_i,y_i$ are variables of one block $f_2$ and $g_2$ as denoted in Lemma~\ref{hardness_1}. 

        If $a=b=1$, it is the case in Lemma~\ref{hardness_1}. If $a=b>1$, we can realize 
    $$
    h=
    \begin{bNiceArray}{cccccccc}[first-row,margin]
         x_1 & x_2 & x_3 & x_4 &y_1 & y_2 & y_3 & y_4\\
         1 &1 & 0& 0& 0 &0 & 1& 1\\
         1 &0 & 1& 0& 0 &1 & 0& 1\\
         1 &0 & 0& 1& 0 &1 & 1& 0
    \end{bNiceArray}
    $$
    Loop $x_1$ and $y_1$, $x_2$ and $y_2$, we realize 
    $$
    h'=
    \begin{bNiceArray}{cccccccc}[first-row,margin]
         x_3 & x_4 &y_3 & y_4\\
         0& 0& 1& 1\\
         1& 0& 0& 1\\
         0& 1& 1& 0
    \end{bNiceArray}
    $$
    which is \#P-hard by the six vertex model.

    If $a\neq b$, we may assume $a>b$ without loss of generality. Then we can realize $f'=(f_2)_{\times(a-b)}$. Then loop $f'$ and $g$ block by block and continue this process. We finally realize $(f_2)_{\times d}$ and $(g_2)_{\times d}$, where $d=gcd(a,b)$ is computed by the Euclidean algorithm. Then it reduces to the case $a=b$.
    \end{proof}

    \begin{lemma}\label{hardness_3}
         Let $f\in \mathscr{D}_1\backslash\mathscr{A},g\in\mathscr{D}_0\backslash\mathscr{A}$, and $|\mathscr{S}(f)|=|\mathscr{S}(g)|=3$. Then \#EO$(f,g)$ is \#P-hard.
    \end{lemma}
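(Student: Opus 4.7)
The plan is to reduce a known hard problem from Lemma~\ref{hardness_2} to \#EO$(f,g)$ by constructing, purely through self-loops of variables using $\neq_2$, gadgets that realize $(f_2)_{\times d}$ from $f$ and $(g_2)_{\times d'}$ from $g$ for some integers $d,d'\geq 1$.

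First I would analyze the structure of $f$. Since $|\mathscr{S}(f)|=3$ is not a power of two, $f$ is not affine and therefore decomposes as $f=\delta_1^{\otimes m_1}\otimes \delta_0^{\otimes m_0}\otimes h'$, where $h'$ has three codewords and no constant columns. Each column of $h'$ realises one of six non-constant three-bit patterns on its rows; I group them into three weight-one types $T_1,T_2,T_3$ (with counts $a,b,c$) and three weight-two types $T_{23},T_{13},T_{12}$ (with counts $p,q,r$), pairing each weight-one type with its bit-wise complement among the weight-two types. The constant-weight condition on $h'$ (which follows from $f$ being EO) gives $a-p=b-q=c-r$, and the EO equation for $f$ forces this common value to equal $m_1-m_0$. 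A recursive unfolding of the $\mathscr{D}_1$ definition — pinning the $\delta_0$ variables of $f/\delta_1$ to $0$ one by one and observing that the resulting signature still has support size $3$, hence is non-affine and must therefore remain in $\mathscr{D}_1$ and in particular keep a $\delta_1$ factor — then shows $m_1>m_0$. Write $d:=m_1-m_0\geq 1$. The mirror-image analysis for $g\in\mathscr{D}_0\setminus\mathscr{A}$ with $|\mathscr{S}(g)|=3$ yields an analogous decomposition $g=\delta_0^{\otimes n_0}\otimes\delta_1^{\otimes n_1}\otimes k'$ with $d':=n_0-n_1\geq 1$, this time with the weight-two columns of $k'$ outnumbering the weight-one columns by $d'$ in each opposite pair.

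Next I would construct the gadget realising $(f_2)_{\times d}$. The key observation is that if $x_i,x_j$ are two variables of $h'$ whose column patterns are opposite (a weight-one type and its weight-two complement), then every one of the three codewords assigns $(x_i,x_j)\in\{01,10\}$, so the self-loop $f^{x_i\neq x_j}$ retains all three codewords and removes the columns $i,j$. Likewise, looping a $\delta_1$ factor of $f$ with a $\delta_0$ factor of $f$ via $\neq_2$ merely contributes the scalar $1$ and removes both factors. Performing $p$ opposite-column self-loops in the $T_1/T_{23}$ group, $q$ in $T_2/T_{13}$, $r$ in $T_3/T_{12}$, and $m_0$ loops between $\delta_1$'s and $\delta_0$'s leaves a signature of arity $4d$ whose support coincides with that of $\delta_1^{\otimes d}\otimes\{100,010,001\}_{\times d}$, i.e.\ with $(f_2)_{\times d}$ up to a permutation of variables. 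The entirely analogous construction on $g$ realises $(g_2)_{\times d'}$.

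Finally, the reduction is immediate: given any instance of \#EO$((f_2)_{\times d},(g_2)_{\times d'})$, replace each $(f_2)_{\times d}$-vertex by the above $f$-gadget and each $(g_2)_{\times d'}$-vertex by the corresponding $g$-gadget to obtain an instance of \#EO$(f,g)$ with the same value. Since Lemma~\ref{hardness_2} establishes the \#P-hardness of \#EO$((f_2)_{\times d},(g_2)_{\times d'})$ for any $d,d'\in\mathbb{Z}_+$, the \#P-hardness of \#EO$(f,g)$ follows. The main obstacle I expect is the structural step — verifying the column-count identities and the strict inequality $m_1>m_0$ (together with its $g$-counterpart) by carefully unwinding the recursive $\mathscr{D}_1$ and $\mathscr{D}_0$ definitions, especially when $f$ itself carries a $\delta_0$ factor (or $g$ a $\delta_1$ factor). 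Once these identities are in place, the gadget construction and the reduction are direct.
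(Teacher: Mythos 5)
Your proposal is correct and follows essentially the same route as the paper: classify the columns of the support matrix into the constant types and the six non-constant three-row patterns, derive the count identities from the EO/constant-weight conditions, self-loop each weight-one column against a complementary weight-two column (and $\delta_1$ against $\delta_0$) to realize a nonzero multiple of $f_2$ (resp.\ $g_2$), and invoke Lemma~\ref{hardness_2}. In fact you are slightly more careful than the paper, which tacitly assumes $\delta_0\nmid f$ (and $\delta_1\nmid g$); your recursive unfolding of the $\mathscr{D}_1$ definition to show $m_1>m_0$ correctly covers that remaining case.
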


    \begin{proof}
    We may assume $f=\delta_1^{\otimes m}\otimes h$, where $m\in\Z_+$ and $\delta_1,\delta_0\nmid h$. Write $f$ as:
    $$
    f=
    \begin{bNiceArray}{ccccccc}[first-row,last-row,margin]
         x_1 & x_2 & x_3 & x_4 &y_2 & y_3 & y_4\\
         1 &1 & 0& 0 &0 & 1& 1\\
         1 &0 & 1& 0 &1 & 0& 1\\
         1 &0 & 0& 1 &1 & 1& 0\\
         m &a & b &c &d &e & f
    \end{bNiceArray}
    $$
    where $a,b,c,d,e,f$ denote the number of their corresponding columns in $\mathscr{S}(f)$. Because $f$ is an EO signature, we have:
    
    \begin{equation}
    \begin{aligned}
        &m+c+d+e=a+b+f\\
        &m+b+d+f=a+c+e\\
        &m+a+e+f=b+c+d\\
    \end{aligned}
    \end{equation}
    This gives us
    \begin{equation}
    \begin{aligned}
        &a=m+d\\
        &b=m+e\\
        &c=m+f\\
    \end{aligned}
    \end{equation}
    Then we loop $x_i$ and $y_i$ for $2\leq i\leq 4$ and realize $(f_2)_{\times m}$. Similarly, we can realize multiple of $g_2$ from $g$. Then by Lemma~\ref{hardness_2} we prove \#EO$(f,g)$ is \#P-hard.
    \end{proof}

    \begin{lemma}\label{realize-extract-1}
        Let $f$ be an \rm{EO} signature and $f$ does not satisfy ARS. Then $$\#{\rm EO}(\{f,f^{x_i=1},f^{x_i=0}\})\leq_T\#{\rm EO}(f),$$ for every $i\in I(f)$.
    \end{lemma}
    \begin{proof}
         Because $f$ does not satisfy ARS, we first realize $\delta_0\otimes\delta_1$ by Corollary~\ref{reduction-to-EO^c}. Then loop $\delta_0$ and $x_i$ with $\neq_2$, we realize $f^{x_i=1}$.
         Similarly we can realize $f^{x_i=0}$.
    \end{proof}

   \begin{theorem}\label{thm: EO f g is hard}
        If $f\in\mathscr{D}_1\backslash \mathscr{A},g\in\mathscr{D}_0\backslash\mathscr{A}$, then \#EO($f,g$) is \#P-hard.
   \end{theorem}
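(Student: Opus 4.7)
The plan is to reduce \#EO$(f,g)$ to the support-$3$ case already settled by Lemma~\ref{hardness_3}. Specifically, I will show that using $f$ alone one can realize a signature in $\mathscr{D}_1\setminus\mathscr{A}$ of support $3$, and symmetrically using $g$ alone one can realize a signature in $\mathscr{D}_0\setminus\mathscr{A}$ of support $3$; Lemma~\ref{hardness_3} then finishes the proof.

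The first step is to enable pinning. Every $f\in\mathscr{D}_1\setminus\mathscr{A}$ fails ARS, because its $\delta_1$ factor forces $\alpha_1=1$ on $\mathscr{S}(f)$ while $\bar{\alpha}_1=0$ places $\bar{\alpha}$ outside the support, so $f(\alpha)\neq f(\bar{\alpha})$. By Corollary~\ref{reduction-to-EO^c} we can realize $\delta_1\otimes\delta_0$ from $f$ alone, and then Lemma~\ref{realize-extract-1} lets us pin any variable of $f$ to $0$ or to $1$. The analogous statements hold for $g$.

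Next, I would reduce $f$ to a $\delta_1$-affine kernel. If $f$ is not already a kernel, by the inductive definition of $\mathscr{D}_1$ either $f$ carries an extra $\delta_0$ factor, or, after writing $f=\delta_1^{\otimes m}\otimes h$ with $\delta_1\nmid h$, some index $i$ satisfies $h_i^0\in\mathscr{D}_1\setminus\mathscr{A}$. In either case, pinning an appropriate variable to $0$ strictly reduces arity while keeping us in $\mathscr{D}_1\setminus\mathscr{A}$; iterating terminates at a kernel $f'\in K(\mathscr{D}_1)$, and a symmetric construction gives $g'\in K(\mathscr{D}_0)$. Now Theorem~\ref{thm:characterization} tells me that $f'$ either already has support $3$ (and we are done with $f$), or equals $(R_k)_{\times m}$ for some $k\geq 3$ and $m\geq 1$. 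In the latter case, pinning a variable not in the $\delta_1$ part fixes its $m$ linked copies and, by Lemma~\ref{long_lemma} applied inside each copy, produces $(R_{k-1})_{\times 2m}$, which by Lemma~\ref{extract-1-kernel} remains a $\delta_1$-affine kernel (no $\delta_0$ factor emerges since $R_{k-1}$ has none). After $k-2$ such iterations we arrive at $(R_2)_{\times N}=(f_2)_{\times N}$, a signature in $\mathscr{D}_1\setminus\mathscr{A}$ of support $3$.

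The main technical obstacle will lie in this last step: verifying that a single pinning of the multiple kernel $(R_k)_{\times m}$ yields precisely the signature $(R_{k-1})_{\times 2m}$ rather than merely a support match, using the fact that the $m$ variables within a block are forced to coincide by the multiplicity structure, and confirming that the resulting signature indeed remains a $\delta_1$-affine kernel so that Lemma~\ref{long_lemma} continues to apply at the next iteration. Once this bookkeeping is in hand, everything else is a straightforward application of the tools already developed in Sections~\ref{sec: characterization} and \ref{sec:hardness}.
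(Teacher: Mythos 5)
Your proposal is correct and follows essentially the same route as the paper: pin down to a kernel using the failure of ARS (which the paper leaves implicit but you rightly justify via the $\delta_1$ factor), invoke the characterization to write the kernel as $(f_k)_{\times m}$, repeatedly extract $1$ via Lemma~\ref{realize-extract-1} and Lemma~\ref{long_lemma} to reach a support-$3$ multiple of $f_2$, and conclude with Lemma~\ref{hardness_3}. The bookkeeping you flag about pinnings acting blockwise on $m$-multiples is exactly the (lightly treated) step in the paper's own argument.
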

   
   \begin{proof}
       We first show we can realize a signature $f'\in \mathscr{D}_1\backslash\mathscr{A}$ of support size 3 if $|\mathscr{S}(f)|>3$.
       If $f\in K(\mathscr{D}_1)$, suppose $f=(f_k)_{\times m}$ by Lemma~\ref{multiple_kernel}, where $f_k$ is the $\delta_1$-affine kernel of order $k$. We can extract 1 in some variable $x_i$ of $f$ by Lemma~\ref{realize-extract-1} and realize $f^{x_i=1}$, which is a multiple of $f_{k-1}$ by Lemma~\ref{long_lemma}. Continuing this process we can finally realize multiple of $f_2$.
       If $f\notin K(\mathscr{D}_1)$, suppose $f=\delta_1\otimes h$. Then there exists $i\in I(h)$ such that $h_i^0\in\mathscr{D}_1\backslash\mathscr{A}$. We have $3<|\mathscr{S}(h_i^0)|<|\mathscr{S}(f)|$. Repeating the process, either $h_i^0\in K(\mathscr{D}_1)$ or it can realize a signature in $\mathscr{D}_1\backslash\mathscr{A}$ of smaller support size. Finally we can realize $f'\in\mathscr{D}_1\backslash\mathscr{A}$ of support size 3. Similarly, we can realize $g'\in\mathscr{D}_0\backslash\mathscr{A}$ of support size 3. Then by Lemma~\ref{hardness_3} we prove \#EO$(f,g)$ is \#P-hard.
   \end{proof}
   
    \begin{remark}
    Although $\mathscr{D}_1$ or $\mathscr{D}_0$ alone is a tractable class, this example shows their mixing may be hard. This is because when looping $\delta_1$ and $\delta_0$, no new $\delta_1$ or $\delta_0$ would occur, so the chain reaction terminates. This is analogous to electron-positron annihilation in physics.
    \end{remark}

\bibliographystyle{elsarticle-num} 
\bibliography{eo}

@InProceedings{shao_et_al_eo,
  author =	{Shao, Shuai and Tang, Zhuxiao},
  title =	{{E}ulerian Orientations and {H}adamard Codes: A Novel Connection via Counting},
  booktitle =	{Proceedings of the 16th Innovations in Theoretical Computer Science Conference (ITCS 2025)},
  pages =	{86:1--86:13},
  series =	{},
  ISBN =	{978-3-95977-361-4},
  ISSN =	{1868-8969},
  year =	{2025},
  volume =	{325},
  editor =	{},
  publisher =	{},
  address =	{},
  doi =		{10.4230/LIPIcs.ITCS.2025.86},
  annote =	{Keywords: Eulerian orientations, Hadamard codes, Counting problems, Tractable classes}
}

@article{cai2020beyond,
  title={Beyond \#{CSP}: A dichotomy for counting weighted {E}ulerian orientations with ARS},
  author={Cai, Jin-Yi and Fu, Zhiguo and Shao, Shuai},
  journal={Information and Computation},
  volume={275},
  pages={104589},
  year={2020},
  publisher={Elsevier}
}

@inproceedings{cai2020holant,
  author       = {Jin{-}Yi Cai and
                  Zhiguo Fu and
                  Shuai Shao},
  editor       = {},
  title        = {From Holant to Quantum Entanglement and Back},
  booktitle    = {Proceedings of the 47th International Colloquium on Automata, Languages, and Programming,
                  {ICALP} 2020, Saarbr{\"{u}}cken, Germany (Virtual Conference),
                  July 8-11, 2020},
  series       = {},
  volume       = {168},
  pages        = {22:1--22:16},
  publisher    = {},
  year         = {2020},
  doi          = {10.4230/LIPICS.ICALP.2020.22}
}

@article{euler1741solutio,
  title={Solutio problematis ad geometriam situs pertinentis},
  author={Euler, Leonhard},
  journal={Commentarii academiae scientiarum Petropolitanae},
  pages={128--140},
  year={1741}
}

@book{biggs1986graph,
  title={Graph Theory, 1736-1936},
  author={Biggs, Norman and Lloyd, E Keith and Wilson, Robin J},
  year={1986},
  publisher={Oxford University Press}
}

@article{creignou1996complexity,
  title={Complexity of generalized satisfiability counting problems},
  author={Creignou, Nadia and Hermann, Miki},
  journal={Information and computation},
  volume={125},
  number={1},
  pages={1--12},
  year={1996},
  publisher={Elsevier}
}

@article{dyer2009complexity,
  title={{The complexity of weighted Boolean \#CSP}},
  author={Dyer, Martin and Goldberg, Leslie Ann and Jerrum, Mark},
  journal={SIAM Journal on Computing},
  volume={38},
  number={5},
  pages={1970--1986},
  year={2009},
  publisher={SIAM}
}

@article{bulatov2009complexity,
  title={{The complexity of weighted Boolean \#CSP with mixed signs}},
  author={Bulatov, Andrei and Dyer, Martin and Goldberg, Leslie Ann and Jalsenius, Markus and Richerby, David},
  journal={Theoretical Computer Science},
  volume={410},
  number={38-40},
  pages={3949--3961},
  year={2009},
  publisher={Elsevier}
}

@inproceedings{real-holant,
  title={A dichotomy for real {B}oolean {H}olant problems},
  author={Shao, Shuai and Cai, Jin-Yi},
  booktitle={Proceedings of the   61st IEEE Annual Symposium on Foundations of Computer Science (FOCS 2020)},
  pages={1091--1102},
  year={2020},
  organization={IEEE}
}

@inproceedings{Backens-Holant-c,
  title={A Complete Dichotomy for Complex-Valued {H}olant$^{c}$},
  author={Backens, Miriam},
  booktitle={Proceedings of the 45th International Colloquium on Automata, Languages, and Programming (ICALP 2018)},
  year={2018},
  organization={}
}

@inproceedings{Backens-holant-plus,
  title={A New {H}olant Dichotomy Inspired by Quantum Computation},
  author={Backens, Miriam},
  booktitle={Proceedings of the 44th International Colloquium on Automata, Languages, and Programming (ICALP 2017)},
  year={2017},
  organization={}
}

@inproceedings{cai2011dichotomy,
  title={Dichotomy for {H}olant$^\ast$ problems of {B}oolean domain},
  author={Cai, Jin-Yi and Lu, Pinyan and Xia, Mingji},
  booktitle={Proceedings of the 22nd annual ACM-SIAM symposium on Discrete Algorithms},
  pages={1714--1728},
  year={2011},
  organization={}
}

@article{cai2016complete,
  title={A complete dichotomy rises from the capture of vanishing signatures},
  author={Cai, Jin-Yi and Guo, Heng and Williams, Tyson},
  journal={SIAM Journal on Computing},
  volume={45},
  number={5},
  pages={1671--1728},
  year={2016},
  publisher={SIAM}
}

@article{cfx,
  title={Complexity classification of the six-vertex model},
  author={Cai, Jin-Yi and Fu, Zhiguo and Xia, Mingji},
  journal={Information and Computation},
  volume={259},
  pages={130--141},
  year={2018},
  publisher={Elsevier}
}

@article{Cai-Lu-Xia-csp,
  title={The complexity of complex weighted {B}oolean \#{CSP}},
  author={Cai, Jin-Yi and Lu, Pinyan and Xia, Mingji},
  journal={Journal of Computer and System Sciences},
  volume={80},
  number={1},
  pages={217--236},
  year={2014},
  publisher={Elsevier}
}

@inproceedings{Cai-Lu-Xia-holant-c,
  title={{Dichotomy for real Holant$^c$ problems}},
  author={Cai, Jin-Yi and Lu, Pinyan and Xia, Mingji},
  booktitle={Proceedings of the 29th Annual ACM-SIAM Symposium on Discrete Algorithms},
  pages={1802--1821},
  year={2018},
  organization={}
}

@article{beida,
  title={The complexity of {B}oolean {H}olant problems with nonnegative weights},
  author={Lin, Jiabao and Wang, Hanpin},
  journal={SIAM Journal on Computing},
  volume={47},
  number={3},
  pages={798--828},
  year={2018},
  publisher={SIAM}
}

@article{MW,
  title={On the number of {E}ulerian orientations of a graph},
  author={Mihail, Milena and Winkler, Peter},
  journal={Algorithmica},
  volume={16},
  number={4-5},
  pages={402--414},
  year={1996},
  publisher={Springer}
}

@article{Pauling,
  title={The structure and entropy of ice and of other crystals with some randomness of atomic arrangement},
  author={Pauling, Linus},
  journal={Journal of the American Chemical Society},
  volume={57},
  number={12},
  pages={2680--2684},
  year={1935},
  publisher={ACS Publications}
}

@article{Slater,
  title={Theory of the transition in {$\mathrm{KH_2PO_4}$}},
  author={Slater, John C},
  journal={The Journal of Chemical Physics},
  volume={9},
  number={1},
  pages={16--33},
  year={1941},
  publisher={American Institute of Physics}
}

@article{tutte,
  title={On the evaluation at (3, 3) of the {T}utte polynomial of a graph},
  author={Las Vergnas, Michel},
  journal={Journal of Combinatorial Theory, Series B},
  volume={45},
  number={3},
  pages={367--372},
  year={1988},
  publisher={Academic Press}
}

@inproceedings{rys1963uber,
  title={{\"{U}ber ein zweidimensionales klassisches Konfigurationsmodell}},
  author={Rys, Franz},
  booktitle={Helvetica Physica Acta},
  volume={36},
  number={5},
  pages={537},
  year={1963},
  organization={}
}

@inproceedings{CFS21,
title = {New Planar {P}-time Computable Six-Vertex Models and a Complete Complexity Classification},
author = {Jin-Yi Cai and Zhiguo Fu and Shuai Shao},
booktitle = {Proceedings of the 2021 ACM-SIAM Symposium on Discrete Algorithms (SODA 2021)},
chapter = {},
pages = {1535-1547},
year={2021},
}

@inproceedings{meng2024ptimealgorithmstypicaleo,
  author       = {Boning Meng and
                  Juqiu Wang and
                  Mingji Xia},
  title        = {P-Time Algorithms for Typical {\#}{EO} Problems},
  booktitle    = {Proceedings of the 52nd International Colloquium on Automata, Languages, and Programming
                  ({ICALP} 2025)},
  series       = {},
  volume       = {334},
  pages        = {118:1--118:17},
  publisher    = {},
  year         = {2025},
  doi          = {10.4230/LIPICS.ICALP.2025.118}
}

@inproceedings{meng2025fpnp,
  title={The $\text{FP}^\text{NP}$ versus \#{P} dichotomy for \#{EO}},
  author={Meng, Boning and Wang, Juqiu and Xia, Mingji},
  booktitle    = {Proceedings of the 57th Annual {ACM} Symposium on Theory of Computing
                  ({STOC} 2025)},
  pages        = {1795--1806},
  publisher    = {},
  year         = {2025},
  doi          = {10.1145/3717823.3718135}
}


\end{document}